\definecolor{p2col4}{rgb}{0.9, 0.0, 0.0}
\definecolor{p2col3}{rgb}{1, 0.5, 0}
\definecolor{p2col2}{rgb}{0.0, 0.6, 0.1}
\definecolor{p2col1}{rgb}{0, 0, 0.7}
\newcommand{\red}{\color{p2col4}}
\newcommand{\orange}{\color{p2col3}}
\newcommand{\green}{\color{p2col2}}
\newcommand{\blue}{\color{p2col1}}
\tikzstyle{vertex}=[circle,draw=black,fill=white,minimum size=13pt,inner sep=0pt]
\tikzstyle{edge} = [draw,thick,-]
\tikzstyle{oriented edge} = [draw,line width=1pt,->]
\newtheorem{theorem}{Theorem}
\newtheorem{lemma}[theorem]{Lemma}
\newtheorem{corollary}[theorem]{Corollary}
\theoremstyle{remark}
\title{Combinatorial generation via permutation languages. \\ IV. Elimination trees}
\author{Jean Cardinal}
\address[Jean Cardinal]{Computer Science Department, Universit\'e Libre de Bruxelles (ULB), Belgium}
\email{jcardin@ulb.ac.be}
\author{Arturo Merino}
\address[Arturo Merino]{Department of Mathematics, TU Berlin, Germany}
\email{merino@math.tu-berlin.de}
\author{Torsten M\"utze}
\address[Torsten M\"utze]{Department of Computer Science, University of Warwick, United Kingdom}
\email{torsten.mutze@warwick.ac.uk}
\thanks{An extended abstract of this paper appeared in the Proceedings of the 22nd SIAM-ACM Symposium on Discrete Algorithms (SODA) 2022~\cite{DBLP:conf/soda/CardinalMM22}.}
\thanks{Arturo Merino was supported by ANID Becas Chile 2019-72200522.
Torsten M\"utze is also affiliated with the Faculty of Mathematics and Physics, Charles University Prague, Czech Republic, and he was supported by Czech Science Foundation grant GA~19-08554S. Arturo Merino and Torsten M\"utze were also supported by German Science Foundation grant~413902284.}
\begin{document}
\maketitle
\thispagestyle{empty}

\begin{abstract}
An elimination tree for a connected graph~$G$ is a rooted tree on the vertices of~$G$ obtained by choosing a root~$x$ and recursing on the connected components of~$G-x$ to produce the subtrees of~$x$.
Elimination trees appear in many guises in computer science and discrete mathematics, and they encode many interesting combinatorial objects, such as bitstrings, permutations and binary trees.
We apply the recent Hartung-Hoang-M\"utze-Williams combinatorial generation framework to elimination trees, and prove that all elimination trees for a chordal graph~$G$ can be generated by tree rotations using a simple greedy algorithm.
This yields a short proof for the existence of Hamilton paths on graph associahedra of chordal graphs.
Graph associahedra are a general class of high-dimensional polytopes introduced by Carr, Devadoss, and Postnikov, whose vertices correspond to elimination trees and whose edges correspond to tree rotations.
As special cases of our results, we recover several classical Gray codes for bitstrings, permutations and binary trees, and we obtain a new Gray code for partial permutations.
Our algorithm for generating all elimination trees for a chordal graph~$G$ can be implemented in time~$\cO(\sigma)$ on average per generated elimination tree, where $\sigma=\sigma(G)$ denotes the maximum number of edges of an induced star in~$G$.
If $G$ is a tree, we improve this to a loopless algorithm running in time~$\cO(1)$ per generated elimination tree.
We also prove that our algorithm produces a Hamilton cycle on the graph associahedron of~$G$, rather than just Hamilton path, if the graph~$G$ is chordal and 2-connected.
Moreover, our algorithm characterizes chordality, i.e., it computes a Hamilton path on the graph associahedron of~$G$ if and only if $G$ is chordal.
\end{abstract}

\section{Introduction}

Many recent developments in theoretical computer science and combinatorics are closely intertwined.
Specifically, many combinatorial questions are motivated by applications to algorithm design, data structures, or network analysis.
Conversely, most fundamental computational problems involve finite classes of combinatorial objects, such as relations, graphs, or words, and their analysis is a major drive for the development of combinatorial insights.
There are four recurring fundamental algorithmic tasks that we wish to perform with such objects, namely to \emph{count} them or to \emph{sample} one of them at random, to \emph{search} for an object that maximizes some objective function (combinatorial optimization), or to produce an exhaustive \emph{list} of all the objects.
A great deal of literature is devoted to all of these problems, and in this paper we focus on the last and most fine-grained of these tasks, namely \emph{combinatorial generation}.

\subsection{Combinatorial generation}

The complexity of a combinatorial generation algorithm is typically measured as the time it takes to produce the next object in the list from the previous one.
Clearly, the best we can hope for is that each object is produced in constant time.
For this to be possible, any two consecutive objects should not differ much, so that the algorithm can perform the required modification in constant time.
Such a listing of objects subject to some closeness condition is referred to as a \emph{Gray code}~\cite{MR1491049,MR3523863}. 
For some applications a cyclic Gray code is desirable, i.e., the last object in the list and the first one also satisfy the closeness condition.

For example, the classical \emph{binary reflected Gray code}~\cite{gray_1953} is a listing of all bitstrings of length~$n$ such that each string differs from the previous one in a single bit, and this listing is cyclic.

\begin{wraptable}{r}{0.4\textwidth}
\centering
\renewcommand{\arraystretch}{0.9}
\setlength\tabcolsep{6pt}
\small
\vspace{-1mm}
\begin{tabular}{|llll|}
\hline
$n=1$ & $n=2$ & $n=3$ & $n=4$ \\
\hline
{\blue 1} & {\blue 1}{\green 2} & {\blue 1}{\green 2}{\orange 3} & {\blue 1}{\green 2}{\orange 3}{\red 4} \\
          &                     &                                & {\blue 1}{\green 2}{\red 4}{\orange 3} \\
          &                     &                                & {\blue 1}{\red 4}{\green 2}{\orange 3} \\
          &                     &                                & {\red 4}{\blue 1}{\green 2}{\orange 3} \\
          &                     & {\blue 1}{\orange 3}{\green 2} & {\red 4}{\blue 1}{\orange 3}{\green 2} \\
          &                     &                                & {\blue 1}{\red 4}{\orange 3}{\green 2} \\
          &                     &                                & {\blue 1}{\orange 3}{\red 4}{\green 2} \\
          &                     &                                & {\blue 1}{\orange 3}{\green 2}{\red 4} \\
          &                     & {\orange 3}{\blue 1}{\green 2} & {\orange 3}{\blue 1}{\green 2}{\red 4} \\
          &                     &                                & {\orange 3}{\blue 1}{\red 4}{\green 2} \\
          &                     &                                & {\orange 3}{\red 4}{\blue 1}{\green 2} \\
          &                     &                                & {\red 4}{\orange 3}{\blue 1}{\green 2} \\
          & {\green 2}{\blue 1} & {\orange 3}{\green 2}{\blue 1} & {\red 4}{\orange 3}{\green 2}{\blue 1} \\
          &                     &                                & {\orange 3}{\red 4}{\green 2}{\blue 1} \\
          &                     &                                & {\orange 3}{\green 2}{\red 4}{\blue 1} \\
          &                     &                                & {\orange 3}{\green 2}{\blue 1}{\red 4} \\
          &                     & {\green 2}{\orange 3}{\blue 1} & {\green 2}{\orange 3}{\blue 1}{\red 4} \\
          &                     &                                & {\green 2}{\orange 3}{\red 4}{\blue 1} \\
          &                     &                                & {\green 2}{\red 4}{\orange 3}{\blue 1} \\
          &                     &                                & {\red 4}{\green 2}{\orange 3}{\blue 1} \\
          &                     & {\green 2}{\blue 1}{\orange 3} & {\red 4}{\green 2}{\blue 1}{\orange 3} \\
          &                     &                                & {\green 2}{\red 4}{\blue 1}{\orange 3} \\
          &                     &                                & {\green 2}{\blue 1}{\red 4}{\orange 3} \\
          &                     &                                & {\green 2}{\blue 1}{\orange 3}{\red 4} \\ \hline
\end{tabular}
\vspace{2mm}
\caption{The Steinhaus-Johnson-Trotter Gray code for permutations.}
\vspace{-6mm}
\label{tab:SJT}
\end{wraptable}
Another example is the problem of listing all permutations of length~$n$ such that every permutation is obtained from the previous one by an adjacent transposition, i.e., by swapping two neighboring entries of the permutation.
This is achieved by the well-known \emph{Steinhaus-Johnson-Trotter algorithm}~\cite{DBLP:journals/cacm/Trotter62,MR0159764,MR0157881}, which guarantees a cyclic listing; see Table~\ref{tab:SJT}.
A third classical example is the Gray code by Lucas, Roelants van Baronaigien, and Ruskey~\cite{MR1239499} which generates all $n$-vertex binary trees by rotations, albeit non-cyclically.
Binary trees are in bijection with many other Catalan objects such as triangulations of a convex polygon, well-formed parenthesis expressions, Dyck paths, etc.~\cite{MR3467982}.
In triangulations of a convex polygon, the rotation operation maps to another simple operation, known as a flip, which removes the diagonal of a convex quadrilateral formed by two triangles and replaces it by the other diagonal.

Combinatorial generation algorithms have been devised for many other classes of objects~\cite{MR0366085,MR0443423,MR1788836}, including objects derived from graphs and orders, such as spanning trees of a graph~\cite{smith_1997}, maximal cliques or independent sets of a graph~\cite{MR476582}, perfect matchings of bipartite graphs~\cite{MR1349886}, perfect elimination orderings of chordal graphs~\cite{MR2022580}, and linear extensions~\cite{DBLP:journals/cj/VarolR81,MR1267216} or ideals of partial orders~\cite{MR875784}.

A standard reference on combinatorial generation is Volume~4A of Knuth's series `The Art of Computer Programming'~\cite{MR3444818}.
Generic methods have been proposed, such as the reverse-search technique of Avis and Fukuda~\cite{MR1380066}, the ECO framework of Barcucci, Del Lungo, Pergola, and Pinzani~\cite{MR1717162}, the antimatroid formulation of Pruesse and Ruskey~\cite{MR1267190}, Li and Sawada's reflectable languages~\cite{MR2483809}, the bubble language framework of Ruskey, Sawada, and Williams~\cite{MR2844089}, and Williams' greedy algorithm~\cite{DBLP:conf/wads/Williams13}.

\subsection{Flip graphs and polytopes}

Given any class of combinatorial objects and a `local change' operation between them, the corresponding \emph{flip graph} has as vertices the combinatorial objects, and its edges connect pairs of objects that differ by the prescribed change operation.
Partial orders and lattices are often lurking, such as the Boolean lattice for bitstrings, the weak Bruhat order on permutations, and the Tamari lattice for Catalan families.
Moreover, flip graphs can often be realized geometrically as the 1-skeletons of polytopes, and combinatorial generation for such classes of objects therefore amounts to computing Hamilton paths or cycles on this polytope.
The polytopes associated with the three aforementioned examples of bitstrings, permutations, and binary trees are the hypercube, the permutahedron, and the associahedron, respectively; the latter two are shown in Figure~\ref{fig:asso-perm}.
The associahedron, in particular, has a rich history and literature, connecting computer science, combinatorics, algebra, and topology~\cite{MR928904,MR2108555,MR2321739,MR3197650}.

\begin{figure}
\centering
\includegraphics[page=1]{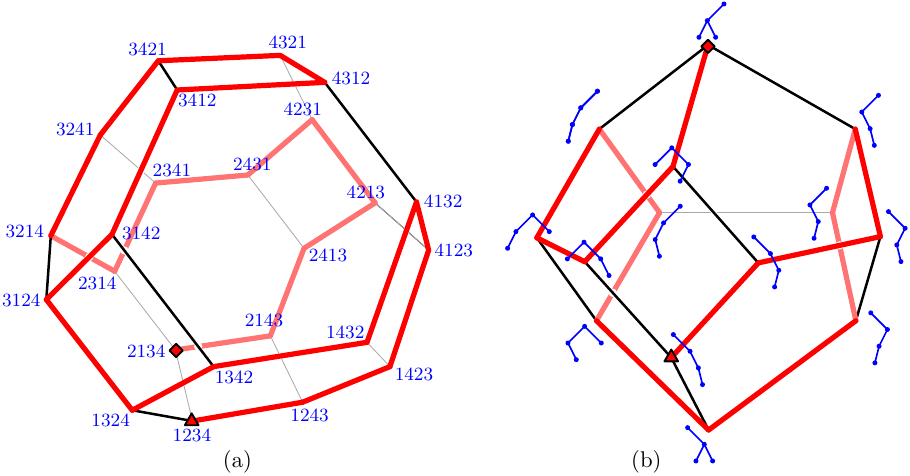}
\caption{The three-dimensional permutahedron (left) and associahedron (right), with the Steinhaus-Johnson-Trotter Hamilton path and the Lucas-Roelants van Baronaigien-Ruskey Hamilton path, respectively (bold edges).
The starting and end vertices are marked by a triangle or a diamond, respectively.
}
\label{fig:asso-perm}
\end{figure}

\subsection{Elimination trees}
\label{sec:elim}

\begin{wrapfigure}{r}{0.5\textwidth}
\centering
\includegraphics[page=1]{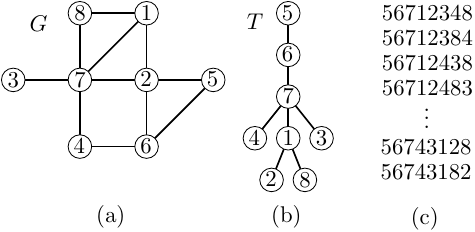}
\caption{(a) A connected graph~$G$; (b) an elimination tree~$T$ for~$G$; (c) elimination orderings yielding the same tree~$T$.}
\label{fig:elim}
\end{wrapfigure}
In this work, we focus on the generation of elimination trees, which are trees on $n$ vertices that are obtained from a fixed graph~$G$ on $n$ vertices, and which capture all ways of removing the vertices of~$G$ one after the other.
For any graph~$G$ and any set of vertices~$X$ we write $G-X$ for the graph obtained by removing every vertex of~$X$ from~$G$.
For a singleton~$X=\{x\}$ we simply write $G-\{x\}=:G-x$.

Given a connected graph~$G=(V,E)$, an \emph{elimination tree} for~$G$ is a rooted tree with vertex set~$V$, composed of a root~$x\in V$ that has as children elimination trees for each connected component of~$G-x$.
This definition is illustrated in Figure~\ref{fig:elim}.
An \emph{elimination forest} for a graph~$G$ is a set of elimination trees, one for each connected component of~$G$.
We write $\cE(G)$ for the set of all elimination forests for~$G$.

We emphasize that an elimination tree is unordered, i.e., there is no ordering associated with the children of a vertex in the tree.
Similarly, there is no ordering among the elimination trees in an elimination forest.
It is useful to think of an elimination tree for a graph~$G$ as the outcome of the process of removing vertices in some \emph{elimination ordering}, which is a permutation that specifies the order of removed vertices; see Figure~\ref{fig:elim}~(c):
We first remove the root~$x$ from~$G$, then proceed to remove the next vertex in the ordering from the connected component of $G-x$ it belongs to.
In general, one elimination tree corresponds to several distinct elimination orderings.
Specifically, these are all the linear extensions of the partial order whose cover graph is the elimination tree turned upside down.

\subsection{Applications and related notions}
\label{sec:appl}

Elimination trees are also found under the guise of vertex rankings and centered colorings, and elimination forests are also known as $G$-forests~\cite{MR4176852}, spines~\cite{MR3383157}, and when defined in the more general context of building sets, as $\cB$-forests~\cite{MR2487491}.
They have been studied extensively in various contexts, including data structures, combinatorial optimization, graph theory, and polyhedral combinatorics.

For example, Liu and coauthors~\cite{MR937485,MR1062491,MR1032223,MR2137477,MR2369300} used elimination trees in efficient parallel algorithms for matrix factorization. 
Elimination trees are also met in the context of VLSI design~\cite{DBLP:conf/focs/Leiserson80,MR1187395}, and for parallel scheduling in modular products manufacturing~\cite{iyer_ratliff_vijayan_1988,MR1090666,nevins_whitney_1989}. 
In the context of scheduling, one is typically interested in finding an elimination tree of minimum height, which determines the number of parallel steps in the schedule.
This problem, known to be NP-hard in general, has drawn a lot of attention in the last thirty years~\cite{scheffler_1993,MR1430905,MR1288578,MR1723686,MR1317666,MR1612885}. 
Computing optimal elimination trees for trees~$G$ is possible in linear time~\cite{MR958825,MR1031599}.

A central notion in graph theory is the \emph{tree-depth} of a graph, which is yet another name for the minimum height of an elimination tree~\cite{MR2920058,MR3238682,MR3361785,MR3775805}.
In particular, tree-depth and elimination trees can be defined via the following other well-known objects.
A \emph{ranking} of the vertices of a graph~$G$ is a labeling of its vertices with integers from~$\{1,2,\ldots,k\}$ such that any path between two vertices with the same label contains a vertex with a larger label.
A \emph{centered coloring} of a graph~$G$ is a vertex coloring such that for any connected subgraph~$H$, some color appears exactly once in~$H$.
It is not difficult to show that the minimum~$k$ for which there exists a vertex ranking of~$G$ is equal to the minimum number of colors in a centered coloring of~$G$, which is in turn equal to the tree-depth of~$G$.
For a connected graph~$G$, the elimination tree corresponding to a vertex ranking or a centered coloring can be constructed by iteratively picking respectively the largest label or the unique color as the root~$x$ of the tree, and recursing on the connected components of~$G-x$.

Elimination trees also occur naturally in the problem of searching in a tree or a graph~\cite{MR1699005,DBLP:conf/focs/OnakP06,MR2487681,MR3536593}, with applications to fault detection and database integrity checking.
Recently, an online search model on trees was defined based on elimination trees~\cite{MR4141297}, which generalizes~\cite{MR4415112} the classical splay tree data structure of Sleator and Tarjan.

\subsection{Encoded combinatorial objects}
\label{sec:objects}

\begin{figure}[t!]
\centering
\includegraphics[page=3]{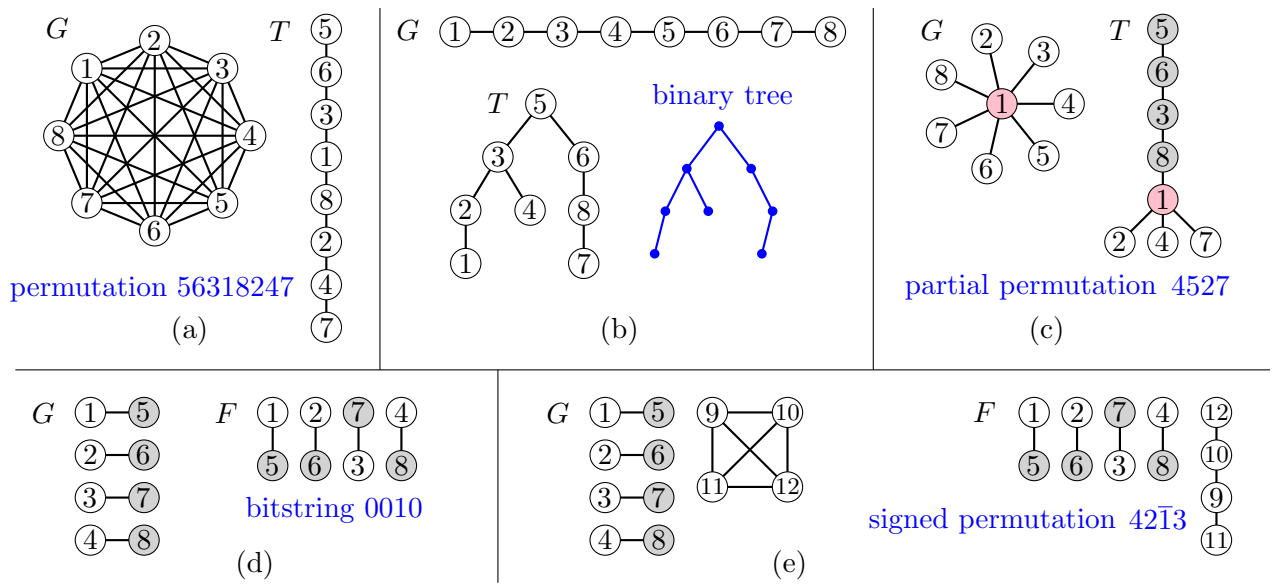}
\caption{Combinatorial objects encoded by elimination trees for suitable graphs~$G$.}
\label{fig:objects}
\end{figure}

In the context of combinatorial generation, elimination trees are interesting, as they encode several familiar combinatorial objects:
\begin{itemize}[leftmargin=5mm, noitemsep, topsep=3pt plus 3pt]
\item When $G$ is the complete graph on~$[n]$, then its elimination trees are paths, which can be interpreted as \emph{permutations} of~$[n]$: Read off the vertex labels from the root to the leaf in the elimination tree; see Figure~\ref{fig:objects}~(a).
\item When $G$ is the path with vertices labeled $1,\ldots,n$ between the end vertices, then its elimination trees are all $n$-vertex \emph{binary trees}: The distinction between left and right child in the binary tree is induced by the smaller and larger vertex labels; see Figure~\ref{fig:objects}~(b).
\item When $G$ is a star with~1 as the center and with leaves $2,\ldots,n$, then its elimination trees are brooms: a path composed of elements from a subset of~$[n]\setminus\{1\}$, followed by a subtree of height one rooted in~1.
By reading off the labels from the handle of the broom starting at the root and ending at the parent of~1, and subtracting 1 from those labels, we obtain a linearly ordered subset of~$[n-1]$, which is known as a \emph{partial permutation}; see Figure~\ref{fig:objects}~(c).
We see that elimination trees for stars are in one-to-one correspondence with partial permutations.
\item The graph~$G$ may also be disconnected.
In particular, if $G$ is a disjoint union of $n$ edges $\{i,n+i\}$ for $i=1,\ldots,n$, then its elimination forests consist of $n$ disjoint one-edge trees, which are either rooted in~$i$ or~$n+i$ for all $i=1,\ldots,n$.
We can thus interpret the elimination forest as a \emph{bitstring} of length~$n$, where the $i$th bit is 0 if $i$ is root, and the $i$th bit is 1 if $n+i$ is root; see Figure~\ref{fig:objects}~(d).
\item Combining the aforementioned encodings for permutations and bitstrings, we can take $G$ as a disjoint union of $n$ edges $\{i,n+i\}$ for $i=1,\ldots,n$ and a complete graph on the $n$ vertices~$\{2n+1,\ldots,3n\}$.
The elimination forests for~$G$ can be interpreted as \emph{signed permutations} of~$[n]$: Read off the vertex labels of the path on~$\{2n+1,\ldots,3n\}$ from the root to the leaf in the corresponding elimination tree, subtracting $2n$ from those labels, and take the resulting entry~$i$ of the permutation with positive sign if $i$ is root and with negative sign if $n+i$ is root; see Figure~\ref{fig:objects}~(e).
\end{itemize}
The task of generating all elimination trees for a graph considered in this paper is thus a generalization of generating each of the aforementioned concrete classes of combinatorial objects.

\subsection{Rotations and graph associahedra}
\label{sec:rot}

Elimination trees can be locally modified by rotation operations, which generalize the binary tree rotations used in standard online binary search tree algorithms~\cite{MR0156719,MR539826,MR796206}.
In fact, rotations are one of the elementary, unit-cost operations in the online search model studied in~\cite{MR4141297,MR4415112}.

Formally, rotations in elimination trees are defined as follows; see Figure~\ref{fig:rotation}.
Let~$T$ be an elimination tree for a connected graph~$G$ and let~$j$ be a vertex from~$G$, distinct from the root of~$T$.
Let~$i$ be the parent of~$j$ in~$T$, and let $H$ be the subgraph of~$G$ induced by the vertices in the subtree rooted at~$i$.
Then the \emph{rotation of the edge~$\{i,j\}$} transforms~$T$ into another elimination tree~$T'$ for~$G$ in which:
\begin{itemize}[leftmargin=5mm, noitemsep, topsep=5pt]
\item $j$ becomes the parent of~$i$, and the child of the parent of~$i$ in~$T$ (or the root if~$i$ is the root of~$T$),
\item the subtrees of~$i$ in~$T$ remain subtrees of~$i$,
\item a subtree~$S$ of~$j$ in~$T$ remains a subtree of~$j$, unless the vertices of~$S$ belong to the same connected component of~$H-j$ as~$i$, in which case~$S$ becomes a subtree of~$i$.
\end{itemize}

\begin{wrapfigure}{r}{0.5\textwidth}
\centering
\begin{tabular}{cc}
\begin{tikzpicture}[scale=.8,auto,swap]
\foreach \pos/\name in {{(1,2)/a}, {(2,1)/b}, {(2,3.5)/c}, {(3.5,2)/d}, {(3.5,4.5)/e}, {(4.5,3.5)/f}} \node[vertex] (\name) at \pos {$\name$};
\node[vertex, fill=red!20] (p) at (2,2) {$j$};
\node[vertex, fill=red!20] (v) at (3.5,3.5) {$i$};
\foreach \source/\dest in {a/p,b/p,p/c,p/d,c/v,v/d,v/e,v/f} \path[edge] (\source) -- (\dest);
\node at (1,4) {$G$};
\end{tikzpicture}
&
\begin{tikzpicture}[scale=.8,auto,swap]
\begin{scope}[yshift=-2cm]
\foreach \pos/\name in {{(0.5,1)/a}, {(1.5,1)/b}, {(2.5,1)/c}, {(3.5,1)/d}, {(3.5,2)/e}, {(4.5,2)/f}} \node[vertex] (\name) at \pos {$\name$};
\node[vertex, fill=red!20] (p) at (2,2) {$j$};
\node[vertex, fill=red!20] (v) at (3.5,3) {$i$};
\foreach \source/\dest in {p/a,p/b,p/v,p/c,p/d,v/e,v/f} \path[edge] (\source) -- (\dest);
\node at (5,2.9) {$T$};
\end{scope}
\begin{scope}[xshift=-0.5cm,yshift=-6cm]
\draw[oriented edge] (3.5,4.3) -- (3.5,3.2);
\foreach \pos/\name in {{(1.5,2)/a}, {(2.5,2)/b}, {(2.5,1)/c}, {(3.5,1)/d}, {(4.5,1)/e}, {(5.5,1)/f}} \node[vertex] (\name) at \pos {$\name$};
\node[vertex, fill=red!20] (p) at (2.5,3) {$j$};
\node[vertex, fill=red!20] (v) at (4,2) {$i$};
\foreach \source/\dest in {p/a,p/b,p/v,v/c,v/d,v/e,v/f} \path[edge] (\source) -- (\dest);
\node at (5.5,2.9) {$T'$};
\end{scope}
\end{tikzpicture} \\
(a) & (b)
\end{tabular}
\caption{Elimination tree rotation. Part~(a) shows two vertices~$i$ and~$j$ in a graph~$G$; (b)~shows the corresponding tree rotation.}
\vspace{-3mm}
\label{fig:rotation}
\end{wrapfigure}
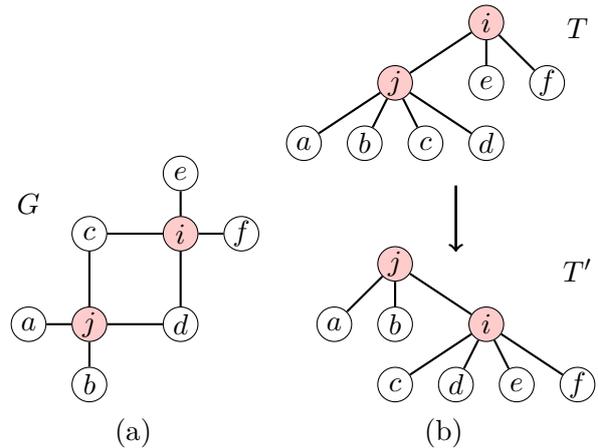
A rotation in an elimination forest for a disconnected graph is a rotation in one of its elimination trees.
A rotation can be interpreted as the change in an elimination tree for~$G$ resulting from swapping~$i$ and~$j$ in the elimination ordering of the vertices.

Under the encodings discussed in Section~\ref{sec:objects}, elimination tree rotations correspond to natural `local change' operations on the corresponding combinatorial objects.
Specifically, one can check that they translate to adjacent transpositions in permutations, classical rotations in binary trees, adjacent transpositions or deletions or insertions of a trailing element in partial permutations, flipping a single bit in bitstrings, or adjacent transpositions or sign changes in signed permutations, respectively.

It is well known that for any graph~$G$, the flip graph of elimination forests for~$G$ under tree rotations is the skeleton of a polytope, referred to as the \emph{graph associahedron}~$\cA(G)$~\cite{MR2239078,MR2479448,MR2487491}.
Graph associahedra are special cases of \emph{generalized permutahedra} that have applications in algebra and physics~\cite{MR2520477,aguiar_ardila_2023}.
For $G$ being a complete graph, a cycle, a path, a star, or a disjoint union of edges, $\cA(G)$ is the permutahedron, the cyclohedron, the standard associahedron, the stellohedron, or the hypercube, respectively.
Figure~\ref{fig:asso4} shows the graph associahedra of all 4-vertex graphs.

\begin{figure}
\centering
\includegraphics[width=13cm]{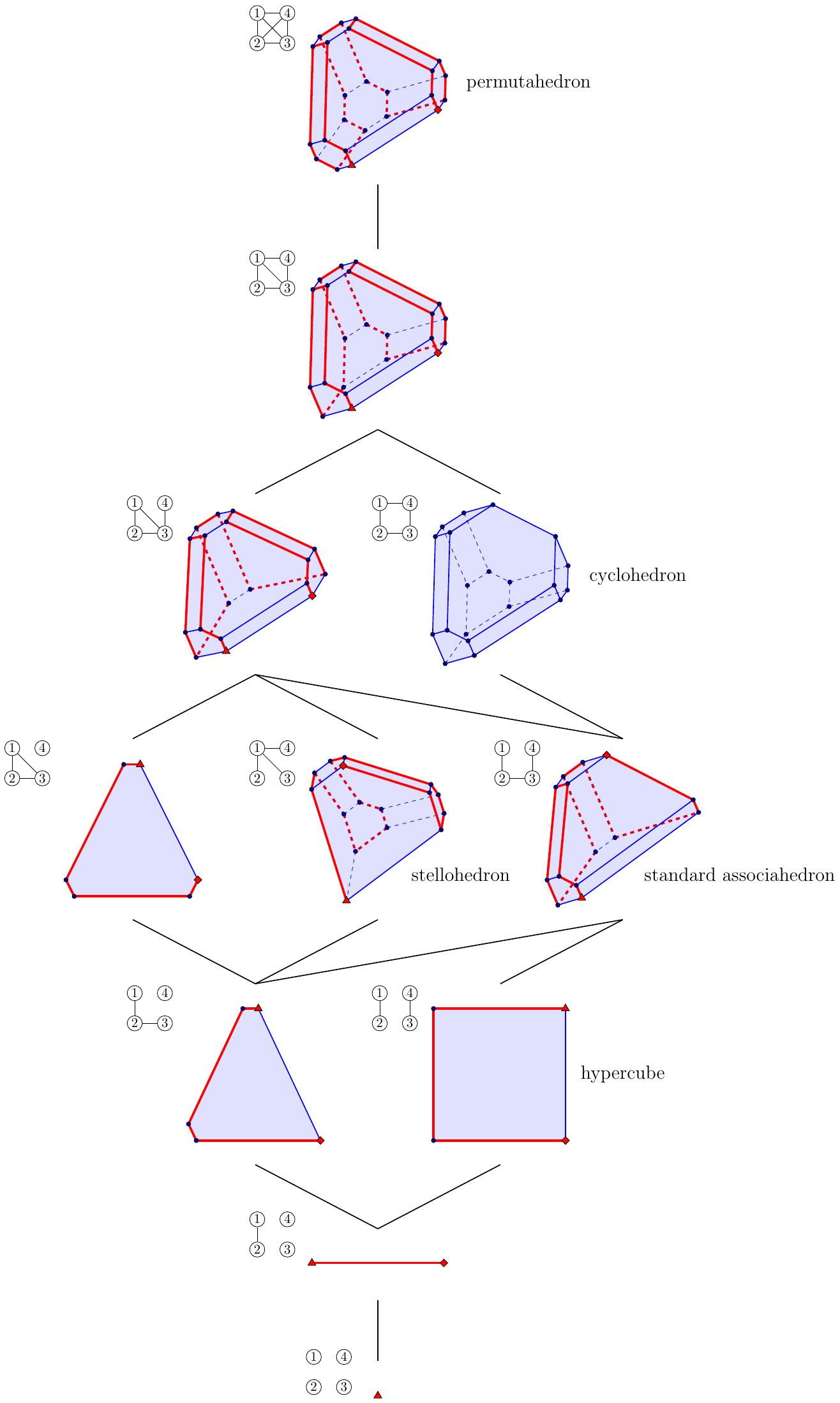}
\caption{Graph associahedra~$\cA(G)$ for all graphs~$G$ on $n=4$ vertices, ordered by subgraph inclusion.
The Hamilton paths computed by our algorithm for all chordal graphs~$G$ are highlighted, with the starting and end vertex marked by a triangle and diamond, respectively.
The only non-chordal graph is the 4-cycle.
}
\label{fig:asso4}
\end{figure}

We consider the problem of generating all elimination forests for a graph~$G$ by rotations, or equivalently, of computing Hamilton paths and cycles on the graph associahedron~$\cA(G)$.
In previous work, Manneville and Pilaud~\cite{MR3383157} showed that for any graph~$G$ with at least two edges, $\cA(G)$ has a Hamilton cycle.
Their construction is an inductive gluing argument on~$\cA(G)$, which does not translate into an efficient algorithm for computing such a cycle.
Note that the number of vertices of~$\cA(G)$ is in general exponential in the number~$n$ of vertices of the underlying graph~$G$ (for example, the permutahedron has $n!$ vertices), which makes global manipulations on~$\cA(G)$ prohibitive for combinatorial generation, where we aim for an algorithm that visits each vertex of~$\cA(G)$ in time polynomial in~$n$, ideally even constant.

To obtain such an efficient algorithm, we apply the combinatorial generation framework recently proposed by Hartung, Hoang, M\"utze, and Williams~\cite{MR4391718}.
In this framework, the objects to be generated are encoded by permutations, and those permutations are generated by a simple greedy algorithm.
Our encoding considers for each elimination tree of an $n$-vertex graph the set of all elimination orderings (=permutations of~$[n]$) corresponding to this tree (recall Figure~\ref{fig:elim}~(b)+(c)), and fixes precisely one representative permutation from this set.
These representatives are chosen so that their union, which is a subset of all permutations of~$[n]$, forms a so-called \emph{zigzag language}, a term defined in~\cite{MR4391718} via a closure property.
The algorithm proposed in that paper to generate zigzag languages and the combinatorial objects they encode can be implemented efficiently for many classes of objects, and it subsumes several previously studied Gray codes.
In a series of recent papers, this framework was applied to a plethora of combinatorial objects such as pattern-avoiding permutations~\cite{MR4391718}, lattice quotients of the weak order on permutations~\cite{MR4344032}, and rectangulations~\cite{MR4598046}.
In this work, we extend the reach of this framework and make it applicable to the efficient generation of structures on graphs, specifically of elimination forests, which is a step forward in exploring the generality of this approach.
This is achieved by combining algorithmic, combinatorial, and polytopal insights and methods.

\subsection{Our results}

In the following we summarize the main results of this work and sketch the main ideas for proving them.

\subsubsection{A simple algorithm for generating elimination forests for chordal graphs}
\label{sec:results-algo}

For our algorithm it is convenient to encode the rotation of edges~$\{i,j\}$, $i<j$, by the larger end vertex~$j$ of the edge, and by the direction in which~$i$ is reached from~$j$, namely upwards if $i$ is the parent of~$j$ and downwards if $i$ is a child of~$j$.
This is the direction in which the vertex~$j$ moves as a result of the rotation.
We refer to these operations as \emph{up- and down-rotations of~$j$}, respectively, and we use the shorthand notations~$j{\diru}$ and~$j{\dird}$.
Observe that a down-rotation $j{\dird}$ is only well-defined if $j$ has a unique child that is smaller than~$j$, otherwise there are several choices for children~$i<j$ of~$j$ and consequently several possible edges to rotate.

We propose to generate the set~$\cE(G)$ of all elimination forests for a graph~$G=([n],E)$, $[n]:=\{1,2,\ldots,n\}$, using the following simple greedy algorithm.

\begin{algo}{Algorithm~R}{Greedy rotations}
\label{algo:JE}
This algorithm attempts to greedily generate the set~$\cE(G)$ of elimination forests for a graph $G=([n],E)$ using rotations starting from an initial elimination forest $F_0\in\cE(G)$.
\begin{enumerate}[label={\bfseries R\arabic*.}, leftmargin=8mm, noitemsep, topsep=3pt plus 3pt]
\item{} [Initialize] Visit the initial elimination forest~$F_0$.
\item{} [Rotate] Generate an unvisited elimination forest from~$\cE(G)$ by performing an up- or down-rotation of the largest possible vertex in the most recently visited elimination forest.
If no such rotation exists, or the rotation edge is ambiguous, then terminate.
Otherwise, visit this elimination forest and repeat~R2.
\end{enumerate}
\end{algo}

In other words, we consider the vertices $n,n-1,\ldots,2$ of the current elimination forest in decreasing order, and for each of them we check whether it allows an up- or down-rotation that creates a previously unvisited elimination forest, and we perform the first such rotation we find, unless the same vertex allows several possible rotations, in which case we terminate.
We also terminate if no rotation creates an unvisited elimination forest.

\begin{figure}
\centering
\includegraphics[page=6]{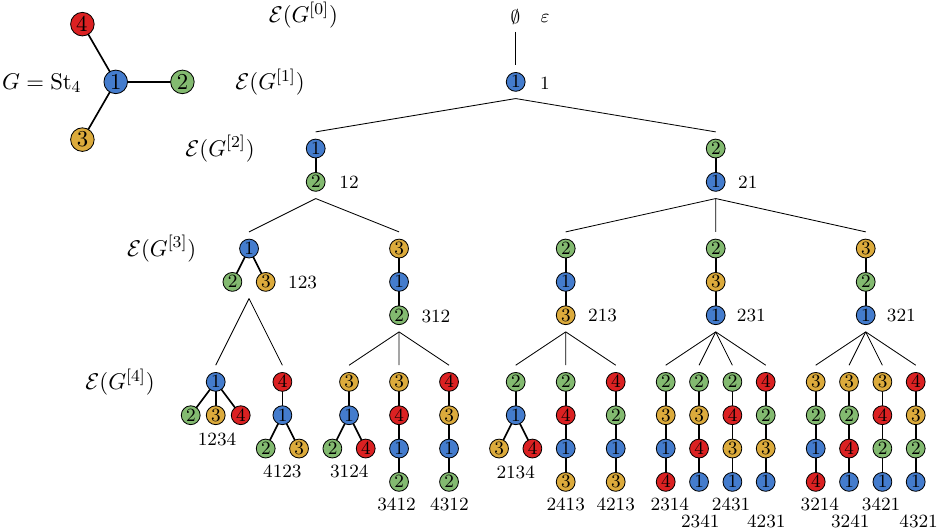}
\caption{The output $T_0,\ldots,T_{16}$ of Algorithm~R for the 4-cycle.}
\label{fig:C4}
\end{figure}

For example, consider all elimination trees for the 4-cycle with vertices labeled $1,2,3,4$ cyclically; see Figure~\ref{fig:C4}.
When initialized with the elimination tree~$F_0=T_0$ shown in the figure, the algorithm visits the 17 elimination trees~$T_0,\ldots,T_{16}$.
The tree~$T_0$ admits an up-rotation of~4, yielding~$T_1$.
The tree~$T_1$ admits an up- and down-rotation of~4, but the latter would yield~$T_0$, which was already visited, so we perform $4{\diru}$, yielding~$T_2$.
One more up-rotation of~4 gives~$T_3$, which does not admit any rotations of~4 to unvisited elimination trees.
Consequently, we consider the vertex~3, which does admit an up-rotation, yielding~$T_4$.
The next interesting step is~$T_6$, which does not admit rotations of~4 to unvisited elimination trees.
However, $T_6$ admits an up- and down-rotation of~3, but the latter would lead to~$T_0$ again, so we perform $3{\diru}$ to reach $T_7$.
From $T_9$ to $T_{10}$ we up-rotate 2, as neither 4 nor 3 admit rotations to unvisited elimination trees.
The algorithm eventually terminates with $T_{16}$, which admits both an up- and down-rotation of~4 to two previously unvisited elimination trees~$T'$ and~$T''$.
Because of this ambiguity, the algorithm terminates without exhaustively generating all elimination trees for~$G$.

Figure~\ref{fig:algo} shows the output of Algorithm~R for four other graphs~$G$, and in all those cases the algorithm terminates because from the last elimination forest in those lists, no rotation leads to a previously unvisited elimination forests.
Moreover, those four lists are all exhaustive, i.e., the algorithm succeeds in generating \emph{all} elimination forests for those graphs.

Our main result is that Algorithm~R succeeds to generate~$\cE(G)$ exhaustively for \emph{chordal} graphs~$G$, i.e., graphs in which every induced cycle has length three.
Chordal graphs include many interesting subclasses, such as paths, stars, trees, $k$-trees, complete graphs, interval graphs, and split graphs (in particular, all the graph classes mentioned in Section~\ref{sec:objects}).
A classical characterization of chordal graphs is that they have \emph{perfect elimination ordering}, i.e., a linear ordering of their vertices such that every vertex~$x$ induces a clique together with its neighbors in the graph that come before~$x$ in the ordering.
In what follows, we consider a chordal graph~$G=([n],E)$, where the ordering $1,2,\ldots,n$ is a perfect elimination ordering of~$G$.

\begin{theorem}
\label{thm:jump-elim}
Given any chordal graph~$G=([n],E)$ in perfect elimination order, Algorithm~R visits every elimination forest from~$\cE(G)$ exactly once, when initialized with the elimination forest~$F_0$ that is obtained by removing vertices in increasing order.
\end{theorem}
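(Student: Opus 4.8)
The plan is to reduce Theorem~\ref{thm:jump-elim} to the master theorem of the Hartung--Hoang--M\"utze--Williams framework~\cite{MR4391718}, which asserts that the greedy minimal-jump procedure (Algorithm~J), started at the identity permutation $1\,2\cdots n$, visits every permutation of a \emph{zigzag language} exactly once. To this end I first fix a canonical encoding of forests by permutations: for $F\in\cE(G)$ let $\rho(F)$ be the lexicographically smallest elimination ordering producing $F$, i.e.\ the linear extension of the tree poset of $F$ (root removed first) that always deletes the smallest currently removable vertex. Because the elimination orderings of distinct forests are disjoint, $\rho$ is a bijection from $\cE(G)$ onto $L_n:=\{\rho(F):F\in\cE(G)\}$, and removing the vertices of $G$ in increasing order shows $\rho(F_0)=1\,2\cdots n$. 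It then suffices to prove (i) that $L_n$ is a zigzag language and (ii) that Algorithm~R, transported through $\rho$, performs exactly the jumps selected by Algorithm~J; the theorem follows at once.

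For (i) I would argue by induction on $n$, the base case being trivial. The crucial structural input is that in the perfect elimination order $1,\dots,n$ the last vertex $n$ is \emph{simplicial}: all its neighbours precede it, hence $N(n)$ is a clique and $n$ is not a cut vertex of its component. Using this I would prove two facts. First, deleting the value $n$ from $\rho(F)$ yields the canonical ordering $\rho'(F')$ of a well-defined forest $F'\in\cE(G-n)$, and every forest of $G-n$ arises in this way; since $G-n$ is again chordal with perfect elimination order $1,\dots,n-1$, the induction hypothesis applies to $\{\,\rho'(F'):F'\in\cE(G-n)\,\}$. Second, for every $F'\in\cE(G-n)$ both extreme insertions of the value $n$ lie in $L_n$: placing $n$ first corresponds to taking $n$ as a new root above $F'$ (as the poset minimum, $n$ is forced to the front of \emph{every} ordering of that forest, so its canonical ordering does begin with $n$), and placing $n$ last corresponds to attaching $n$ as a deepest leaf (being maximal in the poset and the largest label, $n$ is deferred to the very end by the lexicographic rule). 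Simpliciality is exactly what guarantees that inserting or deleting $n$ does not disturb the component structure recorded by $F'$, so both operations produce genuine elimination forests and their canonical orderings are as claimed. This establishes the closure property defining a zigzag language.

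For (ii) I would establish a dictionary between rotations and jumps. An up-rotation $j{\diru}$ moves $j$ earlier in the elimination ordering and a down-rotation $j{\dird}$ moves it later, and I would show that, read on the canonical representative, such a rotation is precisely a \emph{minimal jump} of the value $j$: it slides $j$ to the left or right across the contiguous block of smaller entries that form the vertex set of the single subtree exchanged by the rotation, all of which are indeed smaller than $j$. Conversely, every minimal jump of a value $j$ in a permutation of $L_n$ is realized by exactly one such rotation, and the resulting permutation is again the canonical representative of the new forest. Here chordality is essential: it forces the block of entries jumped over to coincide with one subtree and, in particular, guarantees that the down-rotation edge is never ambiguous while the algorithm runs (ambiguity is precisely what makes Algorithm~R fail on non-chordal graphs such as the $4$-cycle). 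Since at each step both algorithms act on the largest admissible vertex/value, perform the unique minimal change reaching a previously unvisited object, and start from $F_0\leftrightarrow 1\,2\cdots n$, they generate identical sequences.

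The main obstacle is the rotation--jump dictionary of step (ii) together with the simpliciality lemma of step (i). A minimal jump may move $j$ over several smaller entries at once, so the delicate point is to prove---using chordality and the precise form of the canonical representative---that these entries form exactly the vertex set of the one subtree relocated by a single rotation, and that canonicity is preserved under the move. Establishing that deleting the value $n$ commutes with the lexicographic choice and yields a bona fide $(G-n)$-forest is the other technical heart. Once these are in place, the zigzag closure and the equivalence of the two greedy algorithms follow, and the framework's master theorem completes the proof.
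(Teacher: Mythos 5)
Your overall strategy---encode each forest by a distinguished linear extension, show the resulting permutation set is a zigzag language, and match minimal jumps with rotations---is exactly the paper's strategy. But your specific choice of representative, the \emph{lexicographically smallest} elimination ordering $\rho(F)$, does not work, and the two steps you yourself flag as the technical heart (commutation of deletion with the canonical choice, and the rotation--jump dictionary) genuinely fail for it. The root cause: in the lex-min extension the value $n$, being the largest label, is emitted only once it is the \emph{unique} available root, i.e.\ $\rho(F)$ always consists of all non-descendants of $n$ first, then $n$, then all descendants of $n$. Consequently the relative order of the \emph{other} values in $\rho(F)$ depends on where $n$ sits in $F$, so the representatives of $c_1(F'),\dots,c_{\lambda+1}(F')$ are not obtained from one common permutation by inserting $n$ at varying positions. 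Concretely, take the tree $G$ on $\{1,\dots,5\}$ with edges $\{2,4\},\{1,2\},\{1,3\},\{3,5\}$ (a path $4\text{--}2\text{--}1\text{--}3\text{--}5$; the identity is a PEO), and let $F'\in\cE(G-5)$ be the tree with root $1$, children $2,3$, and $4$ a child of $2$. The insertion path for $5$ is $(1,3)$, and the three forests $c_1(F'),c_2(F'),c_3(F')$ have lex-min extensions $51234$, $12453$, $12345$. The rotation from $c_1(F')$ to $c_2(F')$ changes the relative order of $3$ and $4$ in these representatives, so it is \emph{not} a jump of the value $5$ (a jump preserves the relative order of all other values); your dictionary in step (ii) already fails at the top level. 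Likewise $p(12453)=1243$ while $\rho(p(c_2(F')))=1234$, so deleting the value $n$ does not commute with the lex-min choice, and $\{p(\pi)\mid\pi\in L_5\}$ contains two distinct linear extensions $1234$ and $1243$ of the same forest of $G-5$ --- it is not a system of distinct representatives, which collapses your induction in step (i).

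The paper repairs exactly this by \emph{defining} the representative $\sigma(F)$ recursively from $\sigma(p(F))$: insert $n$ at the front if $n$ is a non-trivial root, immediately before its unique smaller child if $n$ is internal (this is where chordality, via the at-most-one-smaller-child property of Lemma~\ref{lem:smaller-child}, is used), and at the end if $n$ is a leaf or isolated. This $\sigma$ is generally \emph{not} the lex-min extension (in Figure~\ref{fig:delete}, $\sigma(T)=5372146$ whereas the lex-min extension is $5347216$), but by construction it commutes with deletion and sends the insertion sequence $c_1(F'),\dots,c_{\lambda+1}(F')$ to insertions of $n$ into the single permutation $\sigma(F')$, which is precisely what the zigzag closure and the jump--rotation correspondence (Lemma~\ref{lem:jump-rot}) require. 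Your remaining ingredients (simpliciality of $n$, $\rho(F_0)=\ide_n$, reduction to the master theorem) are sound, but without replacing the lex-min representative by such a recursively defined one, the proof does not go through. One further point the paper must and does address, which your sketch omits: jumps of values $j<n$ correspond to rotations only for \emph{clean} jumps (where $n$ sits at an extreme relative to all smaller values); this is handled via Lemmas~\ref{lem:clean-jumps} and~\ref{lem:clean-rots}, and Figure~\ref{fig:not-jump} shows a rotation that is not a jump when this fails.
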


Theorem~\ref{thm:jump-elim} thus provides a short proof that the graph associahedron~$\cA(G)$ has a Hamilton path for chordal graphs~$G$.
Figure~\ref{fig:asso4} shows the Hamilton paths on the graph associahedra for all chordal 4-vertex graphs computed by our algorithm.

\begin{figure}
\centering
\makebox[0cm]{ 
\includegraphics[page=5]{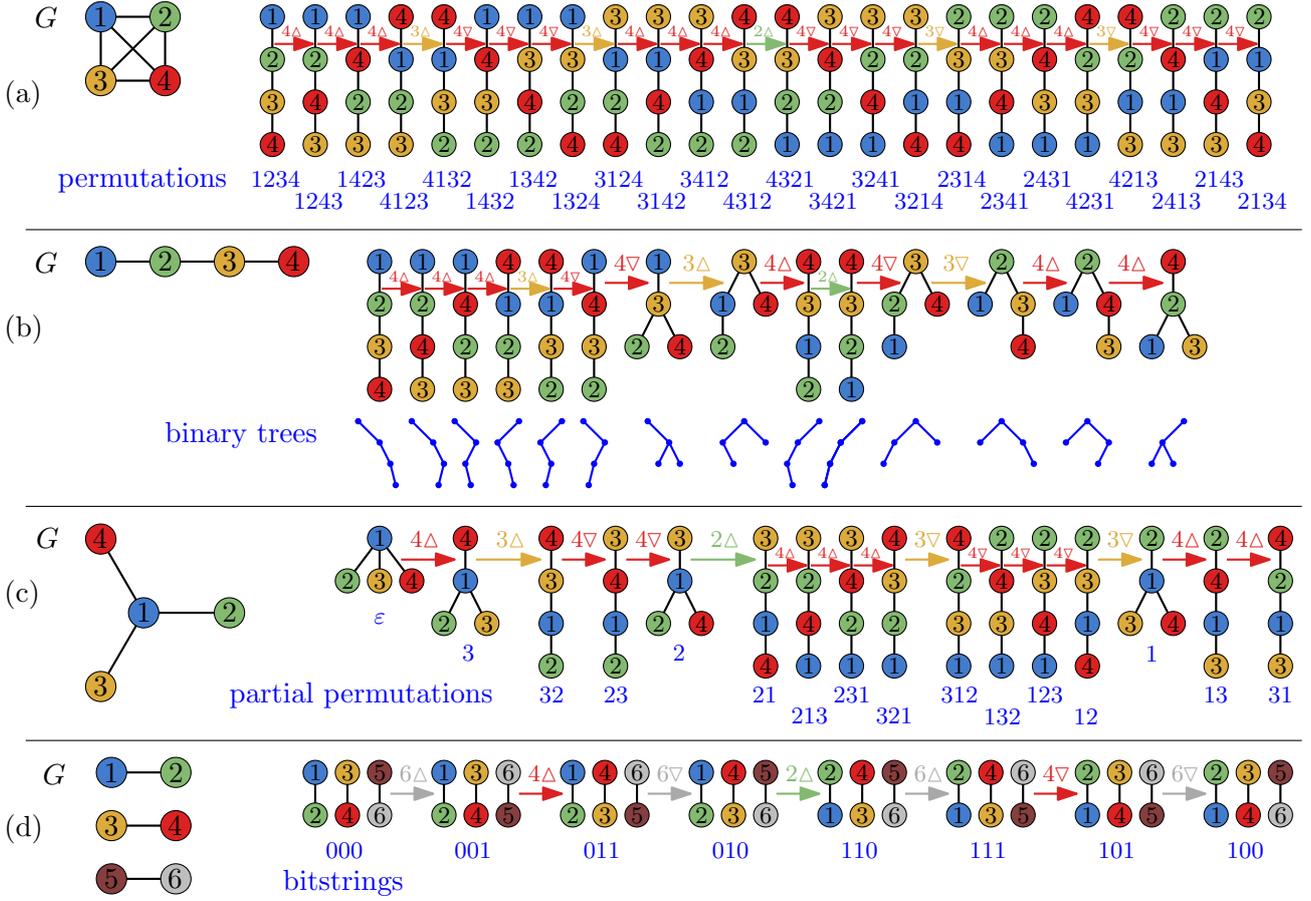}
}
\caption{The output of Algorithm~R for four different chordal graphs~$G$, and the corresponding Gray codes of combinatorial objects.}
\label{fig:algo}
\end{figure}

Algorithm~R generalizes several known Gray codes, including the Steinhaus-Johnson-Trotter algorithm for permutations (when $G$ is a complete graph; see Figure~\ref{fig:algo}~(a) and Table~\ref{tab:SJT}), the binary tree Gray code due to Lucas, Roelants van Baronaigien, and Ruskey (when $G$ is a path; see Figure~\ref{fig:algo}~(b)), and the binary reflected Gray code for bitstrings (when $G$ is a disjoint union of edges; see Figure~\ref{fig:algo}~(d)).
The Gray code for partial permutations (when $G$ is a star; see Figure~\ref{fig:algo}~(c)) via adjacent transpositions or deletions or insertions of a trailing element is new, and it can be implemented in constant time per generated object (see the next section).

Intuitively, the reason why Algorithm~R succeeds for chordal graphs is that in every elimination forest for a chordal graph, every vertex~$j$ has at most one child that is smaller than~$j$.
We show that this property characterizes chordality, i.e., a graph is chordal if and only if all of its elimination forests have this property.
It ensures that to every vertex, at most one up-rotation and at most one down-rotation is applicable, and if both are applicable, then one of the two resulting elimination forests has been visited before by the algorithm, and hence the other one is visited next.
In other words, there will never be ambiguity about two possible down-rotations of a vertex that lead to unvisited elimination forests, or one possible up-rotation and one down-rotation (as in the last step in Figure~\ref{fig:C4}), so the algorithm does not terminate prematurely.
By definition, the algorithm generates a previously unvisited elimination forest in every step, so avoiding premature termination guarantees that~$\cE(G)$ is generated exhaustively.

In fact, we show that Algorithm~R generates a Hamilton path on the graph associahedron~$\cA(G)$ if and only if $G$ is chordal.
As Algorithm~R is oblivious of the notion of a chordal graph, this is an interesting new characterization of graph chordality.

\subsubsection{Efficient implementation of the algorithm}
\label{sec:results-impl}

When implemented naively, Algorithm~R requires storing all previously visited elimination forests, in order to decide upon the next rotation.
We can get rid of this defect and make the algorithm history-free and efficient.
For any graph~$G$, we let $\sigma=\sigma(G)$ denote the number of edges of the largest induced star in~$G$.
It is easy to see that the number of children in an elimination tree for~$G$, maximized over all elimination trees, is precisely~$\sigma(G)$.

\begin{theorem}
\label{thm:algo}
Algorithm~R can be implemented such that for any chordal graph~$G=([n],E)$ in perfect elimination order, the algorithm visits each elimination forest for~$G$ in time~$\cO(\sigma)$ on average, where $\sigma=\sigma(G)$.
For trees~$G$, this can be improved to~$\cO(1)$ worst-case time for visiting each elimination tree for~$G$.
\end{theorem}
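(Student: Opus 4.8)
The plan is to convert Algorithm~R from a history-dependent procedure into a history-free one, and then to analyze its running time. The key conceptual step is to show that the decision made in step~R2---namely, which vertex to rotate and in which direction---can be made by inspecting only the \emph{current} elimination forest, without consulting the list of previously visited forests. To this end I would introduce, for each vertex~$j$, an auxiliary \emph{direction} (or orientation) $\vec{o}(j)\in\{\diru,\dird\}$, mimicking the standard treatment of the Steinhaus-Johnson-Trotter algorithm, where each value carries a travel direction that reverses whenever a larger value is moved. Concretely, I would maintain for each~$j$ a bit recording whether~$j$ should next attempt an up- or a down-rotation; when the largest vertex that admits a valid rotation in the prescribed direction is rotated, the directions of all larger vertices are reset (flipped). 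The main work is to prove, by induction on the steps of the algorithm, that this local rule reproduces exactly the global greedy choice of Theorem~\ref{thm:jump-elim}; here I would lean on the structural fact already established there, that in a chordal graph every vertex has at most one smaller child, so that at most one up- and one down-rotation of~$j$ is available and the ``already visited'' alternative is precisely the one undone by the reversed direction.

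Once the rule is history-free, I would turn to the data structures. I would represent the elimination forest so that, for each vertex~$j$, one can in time~$\cO(1)$ test whether the up- or down-rotation in direction~$\vec{o}(j)$ is applicable, and perform it. Checking applicability of $j{\diru}$ amounts to examining the parent of~$j$ and verifying the rotation yields a distinct forest; checking $j{\dird}$ requires locating the unique smaller child of~$j$ (unique by chordality) and testing the induced-subgraph condition that governs which subtrees of~$j$ migrate to~$i$ in Figure~\ref{fig:rotation}. I would store, alongside the tree, enough adjacency information about~$G$ (for instance, for each tree edge whether the corresponding graph edge exists, and the partition of a vertex's children according to the connected components of the relevant induced subgraph) so that a single rotation can be executed and these auxiliary fields updated in time proportional to the number of children involved, which is $\cO(\sigma)$.

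For the amortized bound I would argue that the total cost is dominated by (i) the rotations actually performed, each costing $\cO(\sigma)$, and (ii) the bookkeeping of flipping directions of larger vertices after each rotation. A naive count of (ii) could be as large as~$\cO(n)$ per step, so the crux is an amortization: I would show that the number of direction-resets charged over the whole run telescopes, because a vertex~$j$ changes direction only when a strictly larger vertex is rotated, and the pattern of such events is governed by a mixed-radix-like odometer whose total number of ``carries'' is linear in the number of objects generated. Summing a geometric-type series over the vertex levels yields an average of $\cO(\sigma)$ per generated forest. The main obstacle I anticipate is precisely this amortized analysis of the direction updates together with the subtree-migration bookkeeping: one must verify that maintaining the component structure needed to detect and execute down-rotations does not, in aggregate, exceed the $\cO(\sigma)$ budget, which requires a careful potential argument showing that work done when subtrees move is paid for by the objects subsequently generated.

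For the sharper claim about trees~$G$, I would specialize the construction to obtain a \emph{loopless} algorithm with $\cO(1)$ worst-case time per object. When~$G$ is a tree, the induced subgraphs~$H$ in the rotation are themselves trees, which simplifies the subtree-migration rule to a constant-size local surgery, and~$\sigma$ no longer governs the cost of a single rotation. I would maintain a doubly linked representation of the elimination tree together with the standard array of vertices that are ``ready'' to be rotated next, updating this focus pointer in $\cO(1)$ time after each step in the manner of classical loopless generation. The verification that every update---the rotation itself, the direction flip, and the advancement of the ready-pointer---touches only a bounded number of memory cells would complete the $\cO(1)$ worst-case bound.
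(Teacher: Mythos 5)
Your overall architecture matches the paper's: a history\-/free reformulation with a per-vertex direction pointer $o_j\in\{\diru,\dird\}$ that flips when $j$ becomes blocked (root/large parent, or leaf/large children), an $n\times n$ adjacency matrix so that subtree migration during a rotation costs time proportional to the number of children (hence $\cO(\sigma)$), and, for trees, the observation that at most one subtree moves per rotation, combined with the classical focus-pointer technique to get a loopless $\cO(1)$ bound. Two remarks on the parts where you diverge. First, the paper does not amortize the direction resets at all: it maintains a simulated stack $s=(s_1,\ldots,s_n)$ (the same focus-pointer array you reserve for the tree case) so that selecting the next vertex to rotate and updating directions is $\cO(1)$ worst case even for general chordal graphs; your lazy ``flip all larger directions'' scheme with a geometric-series amortization does work for the \emph{on average} claim (since every rotatable vertex $\nu$ satisfies $|\cE(G^{[\nu]})|\ge 2|\cE(G^{[\nu-1]})|$), but it is an unnecessary complication. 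Second, the correctness of the history-free rule is not re-proved by induction in the paper; it is inherited from the history-free implementation of Algorithm~J for zigzag languages in the earlier papers of the series, via the jump/rotation correspondence of Lemma~\ref{lem:jump-rot}.

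The one genuine gap is the piece you yourself flag as ``the main obstacle'': how to decide, during a \emph{down}-rotation of $j$ onto its unique smaller child $i$, which subtree of $i$ rooted at a vertex $k<j$ migrates back to $j$. Adjacency to $j$ does not settle this ($k$ may lie in the same component of $H-i$ as $j$ without being a neighbor of $j$), so ``storing enough adjacency information'' is not sufficient, and maintaining a full partition of children into components of induced subgraphs under rotations has no obvious $\cO(\sigma)$ bound. The paper's resolution is the \emph{insertion path} $Q_j$ of Section~\ref{sec:del-ins}: when $j$ begins a maximal run of down-rotations, $Q_j$ is computed once in time $\cO(\lambda)$ by walking upward from each neighbor of $j$ in $G^{[j]}$, and an index $q_j$ into $Q_j$ is incremented per down-rotation; the migrating small subtree is exactly the one rooted at position $q_j+1$ on $Q_j$. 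The $\cO(\lambda)$ cost is amortized against the $\lambda$ subsequent down-rotations of $j$, which is precisely why the chordal bound is stated ``on average.'' Your proposal correctly anticipates that a potential/amortization argument is needed here, but without the insertion-path data structure the argument cannot be completed, so this step needs to be supplied rather than assumed.
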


Note that for the complete graph~$G$ on~$n$ vertices we have $\sigma(G)=1$, i.e., our algorithm optimally computes the Steinhaus-Johnson-Trotter Gray code of permutations.
Furthermore, for trees~$G$, the time bound~$\cO(1)$ holds in the worst-case in every iteration, i.e., the obtained algorithm is loopless.
Recall that trees~$G$ are of particular interest in view of the special cases mentioned in Section~\ref{sec:objects} and the data structure applications discussed at the end of Section~\ref{sec:appl}.
The memory and initialization time required for these algorithms is $\cO(n^2)$.
The initialization time includes the time for testing chordality and computing a perfect elimination ordering.

We implemented both of these algorithms in C++, and made the code available for download, experimentation and visualization on the Combinatorial Object Server~\cite{cos_elim}.

To achieve the runtime bounds stated in Theorem~\ref{thm:algo}, we maintain an array of direction pointers~$o=(o_1,\ldots,o_n)$, where an entry~$o_j=\diru$ indicates that the vertex~$j$ is rotating up in its elimination tree when it is rotated next by the algorithm, and $o_j=\dird$ indicates that $j$ is rotating down upon the next rotation.
The direction is reversed if after an up-rotation~$j{\diru}$ the vertex~$j$ has become the root of its elimination tree or its parent is larger than~$j$, or if after a down-rotation~$j{\dird}$ the vertex~$j$ has become a leaf or its children are all larger than~$j$.
In addition, we introduce an array that allows us to determine in constant time which vertex~$j$ is rotating in the next step, i.e., which is the `largest possible vertex' in step~R2 of Algorithm~R.

The $\cO(\sigma)$ time bound for chordal graphs~$G$ comes from the maximum number of subtrees that may change their parent as a result of a tree rotation.
For trees~$G$, every pair of vertices is connected by a unique path, and at most one subtree changes parent upon a tree rotation.
This allows us to obtain an improved loopless~$\cO(1)$ time algorithm.

\subsubsection{Hamilton cycles for 2-connected chordal graphs}
\label{sec:results-cycle}

Lastly, we investigate when Algorithm~R produces a cyclic Gray code, i.e., a Hamilton cycle on the graph associahedron~$\cA(G)$, rather than just a Hamilton path.
We aim to understand under which conditions on~$G$ the first and last elimination forest generated by our algorithm differ in a single tree rotation.
In the examples from Figure~\ref{fig:algo}, this is the case for~(a) and~(d), but not for~(b) and~(c).
We derive a number of such conditions, two of which are summarized in the following theorem.
A graph is \emph{2-connected}, if it has at least three vertices and removing any vertex leaves a connected graph.

\begin{theorem}
\label{thm:cycle-summary}
Let $G=([n],E)$ be a chordal graph in perfect elimination order.
If $G$ is 2-connected, then the rotation Gray code for~$\cE(G)$ generated by Algorithm~R is cyclic.
On the other hand, if $G$ is a tree with at least four vertices, then this Gray code is not cyclic.
\end{theorem}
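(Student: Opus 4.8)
The plan is to locate the first and last elimination forests in the list produced by Algorithm~R and to decide when they are joined by a single rotation, since the Gray code is cyclic precisely in that case. The first forest is $F_0$, obtained by removing $1,2,\dots,n$. To identify the last forest I would use the recursive structure of the generation framework: the vertex $n$ is simplicial (being last in the perfect elimination order), so deleting it from any elimination forest of $G$ yields one of $G-n$, and the list for $G$ refines the list for $G-n$ by sweeping $n$ through all its admissible positions. Consequently the last forest satisfies $F_{\mathrm{last}}(G)=F_{\mathrm{last}}(G-n)$ with $n$ attached at the terminal end of its sweep, and that end is the root side or the leaf side according to the parity of $|\cE(G-n)|$. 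Unwinding this recursion, $F_{\mathrm{last}}$ is governed by the parities of $|\cE(G_j)|$ for $G_j:=G[\{1,\dots,j\}]$; in particular the root of $F_{\mathrm{last}}$ is the largest $m$ for which $|\cE(G_{m-1})|$ is odd.

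For the 2-connected case I would first show that these counts are all even for $2\le j\le n-1$, so that the staircase collapses and $F_{\mathrm{last}}$ is exactly the forest obtained by removing $2,1,3,4,\dots,n$ (the elimination-tree analogue of the Steinhaus--Johnson--Trotter endpoint). Granting this, the closing move is the rotation of the tree-edge $\{1,2\}$ in $F_0$, which I would verify by direct inspection. Since $G$ is 2-connected, $G-1$ is connected, so in $F_0$ the vertex $1$ is the root with the single child $2$; performing the up-rotation $2{\diru}$ makes $2$ the root and $1$ its child. As $G-2$ is also connected, every subtree of $2$ lies in the same component of the induced graph as $1$ and therefore migrates to $1$, and the resulting forest is precisely $F_{\mathrm{last}}$. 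Hence $F_0$ and $F_{\mathrm{last}}$ are adjacent on $\cA(G)$ and the Gray code is cyclic.

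For a tree $G$ with $n\ge 4$ I would argue non-cyclicity through an invariant: the depth of the vertex $1$. In $F_0$ the vertex $1$ is the root, so its depth is $0$; a single rotation changes the depth of any fixed vertex by at most one, since a rotation moves each vertex up or down at most one level. It thus suffices to show that $1$ has depth at least $2$ in $F_{\mathrm{last}}$. The graph $G_3$ is connected and triangle-free on three vertices, hence the path $P_3$, and $|\cE(P_3)|=5$ is odd; by the description of the root above this forces the root of $F_{\mathrm{last}}$ to be some vertex $m\ge 4$. On the other hand $1$ and $2$ are adjacent, so one is an ancestor of the other in every forest, and the base case of the recursion puts $2$ above $1$; this relation is preserved under the later insertions, so $2$ is a strict ancestor of $1$ in $F_{\mathrm{last}}$. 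Since the root $m\ge 4$ is in turn a strict ancestor of $2$, the vertex $1$ has depth at least $2$, and therefore $F_0$ and $F_{\mathrm{last}}$ are at distance at least two, so the Gray code is not cyclic.

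The main obstacle is the parity statement underpinning Part~1: one must prove that 2-connectivity forces $|\cE(G_j)|$ to be even for every $2\le j\le n-1$. The delicate point is that the graphs $G_j$ need not themselves be 2-connected, since deleting a simplicial vertex can create a cut vertex, so the evenness cannot simply be inherited and has to be extracted from the structure of $\cE(G_j)$; already the seemingly modest special case that the three smallest vertices induce a triangle (rather than a $P_3$) requires a careful chordality argument. By contrast, once $F_{\mathrm{last}}$ has been located, the rotation computation in Part~1 and the depth argument in Part~2 are routine.
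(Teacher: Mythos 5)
Your overall strategy is sound, and your treatment of the tree case is correct and complete: using the depth of the vertex~$1$ as an invariant that changes by at most one per rotation, together with the observations that the root of the last forest must be some $m\ge 4$ (because $e(G^{[3]})=5$ is odd) and that $2$ lies strictly between that root and~$1$, is a clean alternative to the paper's argument, which instead inspects the two possible orderings of $\cE(G^{[3]})$ and compares the first and last trees of $R(\cE(G^{[4]}))$ directly. Likewise, your explicit identification of the last forest in the 2-connected case as the one with elimination order $2,1,3,\ldots,n$, and the hands-on verification that it is a single rotation of the edge $\{1,2\}$ away from $F_0$, is a valid substitute for the paper's appeal to the permutation-level criterion that the first and last words of a zigzag language are joined by a minimal jump iff all intermediate levels have even cardinality.

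However, Part~1 has a genuine gap that you yourself flag but do not close: the claim that $e(G^{[\nu]})=|\cE(G^{[\nu]})|$ is even for all $2\le\nu\le n-1$ is the entire content of the 2-connected case, and it is asserted rather than proved. Worse, your diagnosis of why it is delicate is wrong in a way that matters: you write that the prefixes $G^{[\nu]}$ ``need not themselves be 2-connected,'' but for a perfect elimination ordering of a 2-connected chordal graph they are (for $\nu\ge 3$; $G^{[2]}$ is a single edge), and this is precisely the lemma that makes the parity claim go through. The paper proves that if $G$ is $k$-connected and in perfect elimination order then $G^{[\nu]}$ is $\min\{\nu-1,k\}$-connected, using the fact that a simplicial vertex cannot occur as an interior ``peak'' on a shortest path. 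Granted this, evenness follows easily: the graph associahedron of a connected graph on an even number $\nu$ of vertices is $(\nu-1)$-regular, so it has an even number of vertices by the handshake lemma; and for odd $\nu$ one expands $e(G^{[\nu]})=\sum_{x} e(G^{[\nu]}-x)$, a sum of even terms since each $G^{[\nu]}-x$ is connected on an even number of vertices. Without establishing prefix 2-connectivity (or some replacement), your Part~1 does not go through, and trying to ``extract evenness from the structure of $\cE(G_j)$'' for possibly non-2-connected prefixes is aimed in the wrong direction.
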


To appreciate the number of cases covered by our results in Theorems~\ref{thm:jump-elim}--\ref{thm:cycle-summary}, we remark that the number of non-isomorphic $n$-vertex graphs is~$2^{n^2/2(1+o(1))}$~\cite{MR0357214}, and the number of chordal graphs and of 2-connected chordal graphs is~$2^{n^2/4(1+o(1))}$~\cite{MR951781} (with different $o(1)$ terms).

\subsection{Outline of this paper}

In Section~\ref{sec:gasso} we summarize the main definitions and results on graph associahedra relevant for our work.
In Section~\ref{sec:zigzag} we describe the permutation language framework from~\cite{MR4391718}.
Section~\ref{sec:unit-stars} applies the framework to elimination trees of two special classes of chordal graphs, namely unit interval graphs and stars, the latter of which yields our new Gray code for partial permutations.
These two special cases deserve discussion in their own right, and they allow us to get familiar with our tools.
In Section~\ref{sec:chordal} we generalize the algorithm to elimination forests for arbitrary chordal graphs, and we provide the proof of Theorem~\ref{thm:jump-elim}.
In Section~\ref{sec:algo} we describe how to implement this algorithm efficiently, thus proving Theorem~\ref{thm:algo}.
In Section~\ref{sec:cycle} we describe conditions for when our algorithm produces a Hamilton cycle on the graph associahedron, rather than just a Hamilton path, thus proving Theorem~\ref{thm:cycle-summary}.
Lastly, in Section~\ref{sec:necessity} we show that our algorithm characterizes chordality.
We conclude in Section~\ref{sec:open} with some interesting open problems.

\section{Graph associahedra}
\label{sec:gasso}

The notion of graph associahedra generalizes that of standard associahedra, and stems from seminal works of Carr and Devadoss~\cite{MR2239078,MR2479448} and Postnikov~\cite{MR2487491}.
These polytopes are notable special cases of generalized permutahedra and hypergraphic polytopes, which play an important role in combinatorial Hopf algebras~\cite{aguiar_ardila_2023}.

\begin{wrapfigure}{r}{0.35\textwidth}
\centering
\includegraphics[page=2]{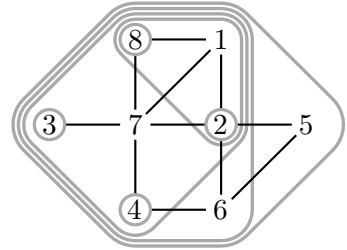}
\caption{The tubing corresponding to the elimination tree shown in Figure~\ref{fig:elim}.}
\vspace{-4mm}
\label{fig:tubing}
\end{wrapfigure}
A \emph{tube} in a graph~$G$ is a nonempty subset of vertices inducing a connected subgraph.
The set of all tubes of~$G$ is the \emph{building set}~$\cB(G)$ of~$G$.
A \emph{tubing} of~$G=([n],E)$ is a set of tubes that includes the vertex sets of each connected component of~$G$,
and such that every pair~$A,B$ of tubes is either
\begin{itemize}[leftmargin=5mm, noitemsep, topsep=3pt plus 3pt]
\item nested, i.e., $A\seq B$ or $A\supseteq B$, or
\item disjoint and non-adjacent, i.e., $A\cup B$ is not a tube of~$G$.
\end{itemize}
See Figures~\ref{fig:tubing} and~\ref{fig:gasso} for illustration.

\begin{figure}
\centering
\includegraphics[page=2,scale=0.9]{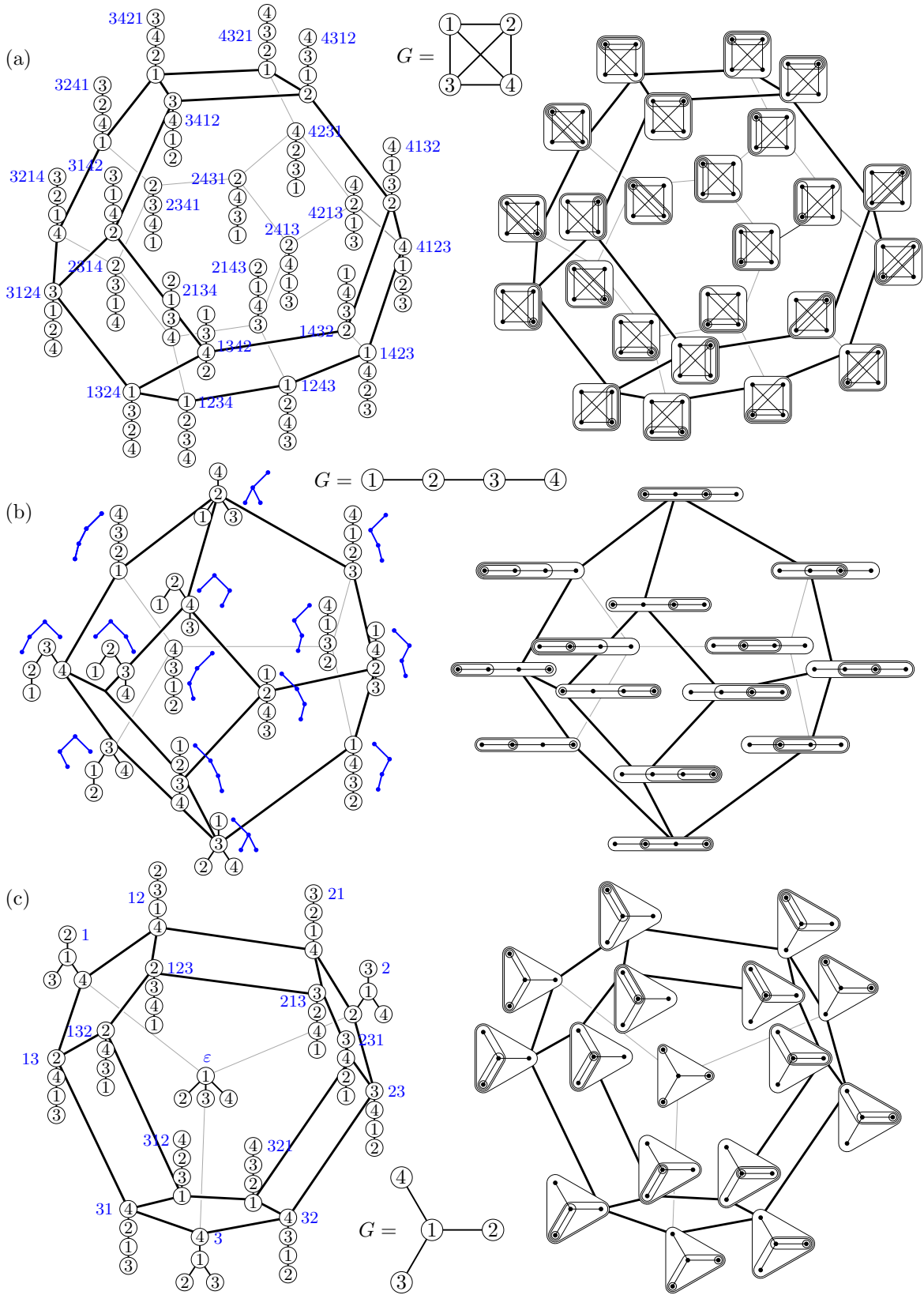}
\caption{Three-dimensional graph associahedra for three graphs on four vertices: (a)~permutahedron; (b)~associahedron; (c)~stellohedron.
Vertices are labeled by elimination trees (left) and the corresponding combinatorial objects (binary trees, permutations and partial permutations, respectively) and by tubings (right).
}
\label{fig:gasso}
\end{figure}

The \emph{graph associahedron of~$G$}, denoted $\cA(G)$, is the polytope whose face lattice is isomorphic to the reverse containment order of the tubings of~$G$.
In particular, vertices of~$\cA(G)$ are in one-to-one correspondence with the inclusionwise maximal tubings of~$G$.
Moreover, note that inclusionwise maximal tubings of~$G$ are in one-to-one correspondence with elimination forests for~$G$.
Indeed, the tubes are sets of vertices contained in the subtrees of the elimination trees for the components of~$G$; see Figure~\ref{fig:elim}~(b) and Figure~\ref{fig:tubing}.

As mentioned before, graph associahedra of paths are standard associahedra; see Figure~\ref{fig:gasso}~(b).
When $G$ is a complete graph, its graph associahedron is the permutahedron; see Figure~\ref{fig:gasso}~(a).
Moreover, graph associahedra of stars are known as \emph{stellohedra}~\cite{MR2520477}; see Figure~\ref{fig:gasso}~(c).
Furthermore, graph associahedra of a disjoint union of edges are \emph{hypercubes}.
Lastly, graph associahedra of cycles are known as \emph{cyclohedra}~\cite{MR1665469,MR1979780,MR2321739}.
The graph associahedra of all 4-vertex graphs are shown in Figure~\ref{fig:asso4}.

We mention an elegant geometric realization of graph associahedra due to Postnikov.

\begin{theorem}[\cite{MR2487491}]
\label{thm:Msum}
The associahedron of a graph~$G=([n],E)$ is realized by the Minkowski sum $\cA(G)=\sum_{S\in\cB(G)} \Delta_S$, where $\Delta_S:=\conv \{e_i \mid i\in S\}$ is the standard simplex associated with a subset $S\seq [n]$, and $e_i$ is the $i$th canonical basis vector of~$\mathbb{R}^n$.
\end{theorem}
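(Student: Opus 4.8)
The plan is to establish the Minkowski sum realization by showing that the normal fan of $\sum_{S\in\cB(G)} \Delta_S$ coincides with the fan whose faces are indexed by the tubings of~$G$, so that the resulting polytope has exactly the claimed combinatorial structure. Recall that the normal fan of a Minkowski sum is the common refinement of the normal fans of the summands, so the key is to understand the normal fan of a single simplex $\Delta_S=\conv\{e_i\mid i\in S\}$ and then read off what its common refinement looks like. For a direction vector $c\in\mathbb{R}^n$, the face of $\Delta_S$ maximizing $\langle c,\cdot\rangle$ depends only on which index in $S$ achieves $\max_{i\in S} c_i$; hence the normal fan of $\Delta_S$ partitions $\mathbb{R}^n$ according to the argmax of the coordinates restricted to~$S$. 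I would first make this explicit and record that a generic direction $c$ singles out, for each tube~$S\in\cB(G)$, a unique maximizing vertex $i_S\in S$.

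First I would set up the correspondence between generic cost vectors and elimination forests. Given a generic $c$ (all coordinates distinct), I would define a rooted forest by declaring, within each connected component, the vertex of largest $c$-value to be a root and recursing on the components obtained after its removal; equivalently, the root of any tube~$S$ that is a subtree of the forest is exactly the argmax $i_S$. The main step is then a two-way translation: I would show that two generic vectors $c,c'$ lie in the same maximal cone of the common refinement if and only if they select the same maximizing index $i_S$ in every tube $S\in\cB(G)$, and that this data is precisely equivalent to specifying an elimination forest for~$G$. The forward direction is the definition of the common refinement; the content is that the combined argmax-data over all tubes is consistent, i.e.\ genuinely comes from a single rooted forest, which follows because the tubes of a graph are closed under the recursion used to build elimination forests.

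Next I would match faces of all dimensions with tubings, not just vertices with elimination forests. A lower-dimensional cone of the common refinement corresponds to a direction $c$ for which the argmax in some tubes is a tie or is only partially determined; I would argue that the collection of tubes whose maximizer remains \emph{uniquely} pinned down, as $c$ ranges over the relative interior of such a cone, forms exactly a tubing of~$G$, with the nesting-or-disjoint-non-adjacent condition encoding the compatibility of the partial orderings induced on the coordinates. Carrying this out makes the face lattice of the Minkowski sum isomorphic to the reverse-containment order of tubings, which by definition is $\cA(G)$. The main obstacle I anticipate is precisely this compatibility bookkeeping for faces of intermediate dimension: verifying that the two cases in the definition of a tubing (nested versus disjoint-and-non-adjacent) correspond exactly to the ways in which argmax-constraints from different tubes can or cannot be jointly relaxed. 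I would handle the disjoint-non-adjacent case by noting that if $A\cup B$ is not a tube then no summand simplex couples the coordinates of $A$ with those of~$B$, so their maximizers may be chosen independently, whereas nesting forces the constraints to be comparable. Once this dictionary is verified on faces, the isomorphism of face lattices is immediate, and the theorem follows.
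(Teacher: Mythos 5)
First, a remark on scope: the paper does not prove Theorem~\ref{thm:Msum} at all; it is quoted from Postnikov with a citation, so there is no internal proof to compare your attempt against. Your strategy --- compute the normal fan of the Minkowski sum as the common refinement of the normal fans of the simplices $\Delta_S$, and identify its cones with tubings --- is exactly the standard route in the literature, and your vertex-level dictionary is sound: for generic $c$ the argmax of $c$ over a tube $S$ is the unique vertex of $S$ that is an ancestor of all the others in the elimination forest induced by $c$ (the highest-value vertex of $S$ is eliminated first, at which point $S$ still lies inside a single component), so maximal cones correspond bijectively to elimination forests.

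However, your dictionary for lower-dimensional faces has a genuine flaw. You propose that the tubing attached to a cone consists of those tubes whose maximizer is uniquely pinned down on the relative interior of the cone. This set is in general not a tubing. Take $G=K_3$ and the cone $\{c : c_1>c_2=c_3\}$: the tubes $\{1,2\}$ and $\{1,3\}$ both have the unique maximizer $1$, yet they are neither nested nor disjoint, so they cannot both belong to a tubing. Conversely, for the cone $\{c : c_1=c_2>c_3\}$ the tube $\{1,2,3\}$ has a tied maximizer but does belong to the correct tubing $\{\{3\},\{1,2,3\}\}$, which labels an edge of the hexagon; so your condition is neither necessary nor sufficient. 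The correct assignment is: a tube $S$ belongs to the tubing of the cone containing $c$ if and only if $S$ is a connected component of the induced subgraph $G[\{i : c_i \le t\}]$ for some threshold $t$, equivalently, every vertex outside $S$ adjacent to $S$ has strictly larger $c$-value than every vertex of $S$. With this definition the nested-or-disjoint-and-non-adjacent property is automatic (components of sublevel graphs at comparable thresholds are either nested or lie in distinct components, hence are non-adjacent), it specializes for generic $c$ to the subtree sets of the elimination forest, and reverse containment of tubings then matches the face order of the common refinement. Your argument goes through once this step is repaired; as written, the verification you defer would fail.
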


For the special case of paths, this realization was given by Loday~\cite{MR2108555}.
In this work, we are mostly interested in the skeleton of the graph associahedron, and not in geometric realizations.
Edges of the (path) associahedron correspond to simple operations on the binary trees that correspond to the vertices, namely rotations in binary trees.
The latter gracefully generalize to elimination forests.

\begin{lemma}[\cite{MR3383157}]
\label{lem:Gasso-edges}
Two elimination forests for a graph~$G$ differ by a single rotation in one of their elimination trees if and only if they correspond to endpoints of an edge in~$\cA(G)$.
\end{lemma}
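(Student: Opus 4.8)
The plan is to translate both the rotation operation and the edges of the polytope into the common language of tubings, using the description of $\cA(G)$ just given (its face lattice is the reverse-containment order of tubings, its vertices are the maximal tubings, and these are in bijection with elimination forests). I first record two structural facts about this face lattice. \emph{(i)} Since every tubing must contain the vertex set of each connected component of $G$, the removable (``proper'') tubes of a maximal tubing $\tau$ are exactly the sets $V(T_v)$, where $T$ is the corresponding elimination forest and $v$ ranges over the non-root vertices; here $V(T_v)$ denotes the vertex set of the subtree of $T$ rooted at $v$. \emph{(ii)} Because $\cA(G)$ is a polytope, its edges are precisely the faces covering a vertex; in the reverse-containment order this means the edges are exactly the tubings $\rho=\tau\setminus\{B\}$ obtained from a maximal tubing $\tau$ by deleting a single proper tube $B$, and each such edge-tubing $\rho$ is contained in exactly two maximal tubings, namely its two endpoints. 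Fact (ii) holds because any subset of a tubing that still contains the component tubes is again a tubing, so deleting two or more proper tubes would leave a tubing strictly between $\rho$ and $\tau$, contradicting that $\rho$ covers $\tau$.

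With this dictionary in place, the combinatorial core is to compute how a single rotation changes the set of tubes. Let $T$ be an elimination tree, let $i$ be the parent of $j$, and let $T'$ be the result of rotating the edge $\{i,j\}$. I would verify, directly from the three defining properties of a rotation, that the vertex set $V(T_i)$ of the subtree rooted at $i$ is unchanged and equals $V(T'_j)$ (the whole block is preserved, merely re-rooted at $j$); that every subtree whose parent changes keeps the \emph{same} vertex set, so its tube is unaffected; and that all tubes lying outside $V(T_i)$ are untouched. Consequently, passing from $T$ to $T'$ deletes exactly the proper tube $V(T_j)$ and inserts exactly the new proper tube $V(T'_i)$, with $V(T_j)\neq V(T'_i)$ (the former contains $j$, the latter does not). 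Hence the two maximal tubings $\tau,\tau'$ corresponding to $T,T'$ differ in exactly one proper tube, and $\rho:=\tau\cap\tau'=\tau\setminus\{V(T_j)\}$. I expect this tube-bookkeeping against the three bullet points of the rotation definition to be the main (and only delicate) step.

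The forward direction is then immediate: by the previous paragraph $\rho=\tau\setminus\{V(T_j)\}$ is a maximal tubing with a single proper tube removed, so by Fact (ii) it is an edge whose two endpoints are the two maximal tubings containing $\rho$. Since $\tau$ and $\tau'$ are two distinct such tubings, they are precisely these endpoints, i.e., $T$ and $T'$ correspond to the two endpoints of an edge of $\cA(G)$.

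For the backward direction, let an edge be given; by Fact (ii) it equals $\rho=\tau\setminus\{B\}$ for a maximal tubing $\tau$ (with elimination forest $T$) and a proper tube $B$, and by Fact (i) $B=V(T_v)$ for a unique non-root vertex $v$ with parent $p$. Rotating the edge $\{p,v\}$ yields, by the tube computation above, a maximal tubing $\tau_v=\rho\cup\{V(T'_p)\}$ that differs from $\tau$ (since $V(T'_p)\neq B$) and still contains $\rho$. As the edge $\rho$ has exactly two endpoints and $\tau$ is one of them, $\tau_v$ must be the other. Thus the two endpoints of every edge differ by a single rotation, completing the argument. Note that this treatment applies to arbitrary $G$, connected or not, since both the rotation definition and the tubing description are valid in full generality.
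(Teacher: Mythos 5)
This lemma is imported by the paper from Manneville--Pilaud \cite{MR3383157} and is not proved in the text, so there is no in-paper argument to compare against; your proposal supplies a correct, self-contained proof from the only facts the paper does state, namely that the face lattice of $\cA(G)$ is the reverse containment order of tubings and that maximal tubings are exactly the collections of subtree vertex sets of elimination forests. The two structural facts are sound (coatoms of a maximal tubing in reverse containment are obtained by deleting one non-component tube, since any sub-collection containing the component tubes is again a tubing; and an edge of a polytope has exactly two vertices), and the tube bookkeeping for a rotation is the genuine content: $V(T_i)=V(T'_j)$, all relocated subtrees keep their vertex sets, and the symmetric difference of the two maximal tubings is exactly $\{V(T_j),V(T'_i)\}$ --- note $V(T'_i)$ cannot already lie in $\tau$ because every tube of $\tau$ containing $i$ also contains its descendant $j$, while $V(T'_i)$ does not, which also confirms the two tubings have equal cardinality as they must. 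Both directions then follow as you say, and the argument is valid for disconnected $G$ as well. This is almost certainly the same route as in the cited reference; no gaps.
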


Our goal is to give a simple and efficient algorithm for generating all elimination forests for~$G$ by rotations.
From the previous lemma, this amounts to computing a Hamilton path on the skeleton of~$\cA(G)$.
In fact, Manneville and Pilaud showed that the skeletons of all graph associahedra have a Hamilton cycle.

\begin{theorem}[\cite{MR3383157}]
\label{thm:cycle}
For any graph~$G$ with at least two edges, the graph associahedron~$\cA(G)$ has a Hamilton cycle.
\end{theorem}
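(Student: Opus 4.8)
The plan is to prove the statement by induction on the number of vertices of~$G$, exploiting the recursive facet structure of graph associahedra due to Carr and Devadoss~\cite{MR2239078,MR2479448}. Recall that $\cA(G)$ is a \emph{simple} polytope whose facets are indexed by the proper tubes of~$G$, and that the facet~$F_t$ associated with a tube~$t$ decomposes as a product $\cA(G[t])\times\cA(G\ast t)$, where $G[t]$ is the subgraph induced by~$t$ and $G\ast t$ is the graph obtained from~$G$ by contracting~$t$ and reconnecting. Since the skeleton of a product polytope is the Cartesian product $\mathrm{skel}(P)\mathbin{\square}\mathrm{skel}(Q)$ of the skeletons, and since $G_1\mathbin{\square}G_2$ is Hamiltonian whenever one factor has a Hamilton cycle and the other a Hamilton path, the inductive hypothesis already controls the structure on each facet.

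The base cases are the two-dimensional graph associahedra, which are polygons (the square for two disjoint edges, the pentagon for~$P_3$, and the hexagon for~$K_3$) and are therefore cycles, hence trivially Hamiltonian. The hypothesis that $G$ has at least two edges is exactly what guarantees $\dim\cA(G)\geq 2$, ruling out the degenerate cases in which $\cA(G)$ is a point or a segment; conveniently, both factors $G[t]$ and $G\ast t$ have strictly fewer vertices than~$G$, so the induction on~$|V(G)|$ is well-founded.

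For the inductive step I would \emph{not} try to prove bare Hamiltonicity directly, since that statement composes poorly across facets. Instead I would strengthen the hypothesis to something flexible enough to survive both products and gluing --- the natural candidate being Hamilton-connectedness (a Hamilton path between every pair of vertices), or, in a weaker form that may already suffice, the existence of a Hamilton path with endpoints prescribed on a chosen facet and traversing a prescribed connecting edge. With such a hypothesis, the step proceeds by choosing a suitable vertex~$v$ of~$G$, arranging the facets of~$\cA(G)$ into a cyclic sequence $F_1,\dots,F_m$ in which consecutive facets share a ridge, applying the strengthened hypothesis to produce on each~$F_k$ a Hamilton path whose endpoints sit on the two shared ridges, and then concatenating these paths --- closing $F_m$ back to~$F_1$ --- into a single Hamilton cycle of~$\cA(G)$.

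The main obstacle is precisely the design of this gluing. Every vertex of~$\cA(G)$ lies on several facets, so to cover each vertex exactly once one must partition the vertex set among the facets actually used, which means the paths on the individual facets cannot be chosen independently: the handoff edges between consecutive facets must be genuine edges of~$\cA(G)$ and must be vertex-disjoint from the path interiors. Calibrating the strengthened induction hypothesis so that it is (i)~strong enough to force Hamilton paths with endpoints on the relevant ridges, (ii)~weak enough to remain true in all small and degenerate cases, and (iii)~compatible with the product lemma --- which requires combining a cycle in one factor with a path in the other, and which fails verbatim when a factor $\cA(G[t])$ or $\cA(G\ast t)$ collapses to a point or a segment --- is the delicate heart of the argument, and it is where essentially all of the bookkeeping will reside.
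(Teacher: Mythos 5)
First, a point of context: the paper does not prove this statement at all --- it is imported verbatim from Manneville and Pilaud~\cite{MR3383157}, and the paper's own contribution is a \emph{different} route (greedy generation of a zigzag language of representative permutations) that yields Hamilton paths, and under extra hypotheses cycles, only for \emph{chordal}~$G$. So the right benchmark for your proposal is the cited proof, which is indeed an inductive gluing argument over the facets of~$\cA(G)$ using the Carr--Devadoss product decomposition $\cA(G[t])\times\cA(G\ast t)$ of the facet of a tube~$t$. Your outline correctly identifies that framework, the product lemma for Cartesian products of skeletons, and the two-dimensional base cases.

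However, as written this is a strategy, not a proof, and the gap is exactly where you say it is: you never specify the strengthened induction hypothesis, never verify that it holds in the degenerate cases (factors whose associahedron is a point or a segment, which \emph{do} arise --- e.g.\ singleton tubes and tubes inducing a single edge --- and for which Hamilton-connectedness or prescribed-endpoint statements are vacuous or false), and never exhibit the cyclic sequence of facets together with compatible ridge-crossing edges. These are not routine bookkeeping items: every vertex of the simple polytope~$\cA(G)$ lies on $\dim\cA(G)$ facets, so one cannot simply concatenate Hamilton paths of all facets; one must instead select a subfamily of faces that partitions the vertex set, and proving that such a partition with matching entry/exit vertices exists is the entire content of the Manneville--Pilaud argument. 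Until the strengthened hypothesis is stated precisely and shown to (i) imply the needed paths with prescribed ends on ridges, (ii) survive the product decomposition including its degenerate factors, and (iii) close up into a cycle, the inductive step is unproven. I would also note that this whole approach is inherently non-constructive in the algorithmic sense the paper cares about --- which is precisely why the paper develops the zigzag-language machinery instead of reusing the gluing argument.
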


However, as discussed in Section~\ref{sec:rot}, their proof does not yield an efficient method for computing a Hamilton cycle or path on~$\cA(G)$.

Apart from Hamilton cycles, another property of interest is the \emph{diameter} of graph associahedra.
This is the minimum number of rotation operations that are sufficient and sometimes necessary to transform any two elimination forests for~$G$ into each other.
This quantity is known precisely when $G$ is a path~\cite{MR928904,MR3197650}, and bounds are known for various other cases~\cite{MR3649601,MR3874284,MR4505048,MR4473453}.

\section{Zigzag languages of permutations}
\label{sec:zigzag}

In order to apply the combinatorial generation framework proposed by Hartung et al.~\cite{MR4391718} to elimination forests, we need to encode elimination forests by permutations, and ensure that the resulting set of permutations satisfies a certain closure property.
We can then use a simple greedy algorithm that is guaranteed to generate all permutations from the set.
In this section, we first describe the encoding of elimination forests by permutations, and we then state the main result of~\cite{MR4391718}.

\subsection{Encoding elimination forests by permutations}
\label{sec:encoding}

In this section we make the correspondence between elimination trees and permutations mentioned in Section~\ref{sec:elim} precise.

We write $S_n$ for the set of permutations of~$[n]$.
When viewed as a partial order on~$[n]$, an elimination forest~$F$ for a graph~$G=([n],E)$ has a set of linear extensions, defined as follows.
Given a forest~$F$ of rooted trees on~$[n]$, a \emph{linear extension} of~$F$ is a permutation~$\pi\in S_n$ such that for any two elements~$i,j\in [n]$, if $i$ is an ancestor of~$j$ in some tree of~$F$, then the value~$i$ is left of the value~$j$ in~$\pi$.
The linear extensions of an elimination forest~$F$ are therefore the elimination orderings that yield the forest~$F$; see Figure~\ref{fig:elim}.
The linear extensions of elimination forests for a graph~$G=([n],E)$ form classes of an \emph{equivalence relation on~$S_n$} that we denote by~$\equiv_G$.
In order to generate the elimination forests for~$G$, we will later choose a set of representatives for the equivalence relation~$\equiv_G$, i.e., we will pick exactly one permutation from each equivalence class, and then generate this set of permutations.

\subsection{Zigzag languages and Algorithm~J}
\label{sec:algoJ}

We recap the most important definitions and results from~\cite{MR4391718}.
We write permutations~$\pi\in S_n$ in one-line notation as $\pi=\pi(1)\pi(2)\cdots \pi(n)=a_1a_2\cdots a_n$.
Moreover, we use $\ide_n=12\cdots n$ to denote the identity permutation, and $\varepsilon\in S_0$ to denote the empty permutation.
Algorithm~J, presented in~\cite{MR4391718} and shown below, is a simple greedy algorithm for generating a set of permutations~$L_n\seq S_n$.
The operation it uses to go from one permutation to the next is called a jump.
Given a permutation $\pi=a_1\cdots a_n$ with a substring $a_i\cdots a_j$ such that $a_i>a_{i+1},\ldots,a_j$, a \emph{right jump of the value~$a_i$ by $j-i$ steps} is a cyclic left rotation of this substring by one position to $a_{i+1}\cdots a_ja_i$.
Similarly, if $a_j>a_i,\ldots,a_{j-1}$, a \emph{left jump of the value~$a_j$ by $j-i$ steps} is a cyclic right rotation of this substring to $a_ja_i\cdots a_{j-1}$.
Given a set of permutations~$L_n\seq S_n$, a jump is \emph{minimal} (w.r.t.~$L_n$), if a jump of the same value in the same direction by fewer steps creates a permutation that is not in~$L_n$.

\begin{algo}{Algorithm~J}{Greedy minimal jumps}
This algorithm attempts to greedily generate a set of permutations $L_n\seq S_n$ using minimal jumps starting from an initial permutation $\pi_0 \in L_n$.
\begin{enumerate}[label={\bfseries J\arabic*.}, leftmargin=8mm, noitemsep, topsep=3pt plus 3pt]
\item{} [Initialize] Visit the initial permutation~$\pi_0$.
\item{} [Jump] Generate an unvisited permutation from~$L_n$ by performing a minimal jump of the largest possible value in the most recently visited permutation.
If no such jump exists, or the jump direction is ambiguous, then terminate.
Otherwise visit this permutation and repeat~J2.
\end{enumerate}
\end{algo}

For any permutation $\pi\in S_n$, we write $p(\pi)\in S_{n-1}$ for the permutation obtained by removing the largest entry~$n$.
Moreover, for any $\pi\in S_{n-1}$ and any $1\leq i\leq n$, we write $c_i(\pi)\in S_n$ for the permutation obtained by inserting the new largest element~$n$ at position~$i$ of~$\pi$.
A set of permutations~$L_n\seq S_n$ is called a \emph{zigzag language}, if either $n=0$ and $L_0=\{\varepsilon\}$, or if $n\geq 1$ and $L_{n-1}:=\{p(\pi) \mid \pi\in L_n\}$ is a zigzag language satisfying either one of the following conditions:
\begin{enumerate}[label={(z\arabic*)}, leftmargin=8mm, noitemsep, topsep=3pt plus 3pt]
\item For every $\pi\in L_{n-1}$ we have~$c_1(\pi)\in L_n$ and~$c_n(\pi)\in L_n$.
\item We have $L_n=\{c_n(\pi)\mid \pi\in L_{n-1}\}$.
\end{enumerate}
The next theorem asserts that Algorithm~J succeeds to generate zigzag languages.

\begin{theorem}[\cite{MR4344032}]
\label{thm:jump}
Given any zigzag language of permutations~$L_n$ and initial permutation $\pi_0=\ide_n$, Algorithm~J visits every permutation from~$L_n$ exactly once.
\end{theorem}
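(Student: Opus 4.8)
Theorem~\ref{thm:jump} asserts that for any zigzag language $L_n$ and initial permutation $\pi_0 = \ide_n$, Algorithm~J visits every permutation from $L_n$ exactly once. The plan is to prove this by induction on $n$, following the recursive structure encoded in the definition of a zigzag language.

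**Setup and base case.** The base case $n=0$ (or $n=1$) is immediate, since $L_0 = \{\varepsilon\}$ has a single element which is visited by step~J1. For the inductive step, I would assume the statement holds for the zigzag language $L_{n-1} = \{p(\pi) \mid \pi \in L_n\}$, so Algorithm~J run on $L_{n-1}$ starting from $\ide_{n-1}$ produces a list $\pi_0' = \ide_{n-1}, \pi_1', \ldots, \pi_{N-1}'$ visiting every element of $L_{n-1}$ exactly once. The key structural idea is that the list for $L_n$ is obtained from the list for $L_{n-1}$ by a \emph{zigzag expansion}: for each $\pi_k'$ in the $L_{n-1}$-list, the value $n$ sweeps back and forth (hence ``zigzag'') across all the admissible insertion positions, visiting every permutation in $L_n$ that projects to $\pi_k'$ under $p$.

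**The heart of the argument.** The main work is to show that when Algorithm~J operates on $L_n$, the largest value $n$ jumps as far as possible in one direction until it reaches an extreme admissible position, at which point no further jump of $n$ produces an unvisited permutation, and the algorithm is forced to jump a smaller value — which, by the projection, corresponds precisely to a single step in the $L_{n-1}$-list. I would make this precise by analyzing the two cases in the zigzag definition. Under condition~(z2), $L_n = \{c_n(\pi) \mid \pi \in L_{n-1}\}$, so $n$ sits rigidly in the last position and the dynamics on $L_n$ mirror those on $L_{n-1}$ exactly. Under condition~(z1), both $c_1(\pi)$ and $c_n(\pi)$ lie in $L_n$, which guarantees that $n$ can always travel all the way to either end; the minimality of the jumps ensures $n$ moves one position at a time, sweeping through every admissible insertion slot for the current $\pi_k'$ before the projected permutation changes. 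The critical verification is that when $n$ reaches an end and a smaller value must move, the resulting jump of that smaller value is well-defined and unambiguous, and it realizes exactly the jump that Algorithm~J on $L_{n-1}$ would perform to pass from $\pi_k'$ to $\pi_{k+1}'$. This is where the closure property of zigzag languages is indispensable: it certifies that the ``landing'' permutations after each such smaller-value jump indeed belong to $L_n$, so no unvisited permutation is skipped and the greedy choice never stalls prematurely.

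**Main obstacle.** I expect the principal difficulty to lie in the bookkeeping at the \emph{turning points} — precisely the moments when the value $n$ has reached an extreme position and control passes to a smaller value. One must argue three things simultaneously: that no jump of $n$ to an unvisited permutation remains available (so the greedy rule correctly defers to a smaller value), that the jump of the smaller value is direction-unambiguous (so Algorithm~J does not terminate), and that this jump corresponds bijectively to the inductive step in $L_{n-1}$ while carrying $n$ to the correct starting end for its next sweep. Establishing the exact insertion position of $n$ after a turning-point jump, and confirming it coincides with the position from which the next sweep should begin, is the delicate combinatorial core; everything else reduces to carefully tracking the one-line notation through minimal jumps and invoking the induction hypothesis. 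I would therefore structure the proof around a clearly stated invariant describing the configuration of $n$ relative to the current projected permutation $\pi_k'$, and verify that this invariant is preserved across both ordinary sweep steps and turning-point steps.
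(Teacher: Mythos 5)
The paper does not actually prove this statement: Theorem~\ref{thm:jump} is imported verbatim from~\cite{MR4344032}, and the only material in the present paper that touches its content is the recursive description of~$J(L_n)$ in~\eqref{eq:JLn12}, which is exactly the ``zigzag expansion'' you describe. So your overall strategy --- induction on~$n$, with the $L_n$-listing obtained from the $L_{n-1}$-listing by sweeping the value~$n$ through the admissible insertion positions, alternating direction, and with case~(z2) reducing trivially to the $L_{n-1}$-dynamics --- is precisely the approach of the cited source, and it is the right one.

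That said, as written your text is a plan rather than a proof. You correctly isolate the three things that must be verified at a turning point (no unvisited jump of~$n$ remains; the jump of the smaller value is unambiguous and is the one Algorithm~J on~$L_{n-1}$ would make; $n$ ends up at the correct extreme for the next sweep), but you do not carry out any of them, and these verifications \emph{are} the proof --- everything else is routine. You would need to state the invariant explicitly (after visiting the permutation corresponding to~$\pi_k'$ with $n$ at an extreme position, the set of visited permutations is exactly $\bigcup_{m\le k}\{c_i(\pi_m')\in L_n\}$, with the parity of~$k$ determining which extreme), check that a minimal jump of a value~$j<n$ in~$c_1(\pi_k')$ or~$c_n(\pi_k')$ projects under~$p$ to a minimal jump in~$\pi_k'$ and conversely lifts (this is where condition~(z1), guaranteeing $c_1(\pi),c_n(\pi)\in L_n$, is used so that $n$ sitting at the boundary never blocks or alters the jump of~$j$), and confirm that the greedy preference for the largest value forces the full sweep of~$n$ before any smaller value moves. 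One small inaccuracy to fix: minimality of jumps does \emph{not} mean $n$ moves one position at a time; a minimal jump carries~$n$ to the \emph{nearest position whose resulting permutation lies in~$L_n$}, skipping inadmissible slots, which is exactly how~$\rvec{c}(\pi)$ and~$\lvec{c}(\pi)$ are defined in Section~\ref{sec:algoJ}. The argument survives with this correction, but the invariant must be phrased in terms of admissible positions rather than all positions.
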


The reason we refer to~\cite{MR4344032} here instead of~\cite{MR4391718} is that the notion of zigzag language given in~\cite{MR4391718} omits condition~(z2) above.
However, for the present paper we need the more general definition with condition~(z2) introduced in~\cite{MR4344032} to be able to handle elimination forests for disconnected graphs~$G$.
Based on Theorem~\ref{thm:jump}, for any zigzag language~$L_n\seq S_n$, we write~$J(L_n)$ for the sequence of permutations from~$L_n$ generated by Algorithm~J with initial permutation~$\pi_0=\ide_n$.

It was shown in~\cite{MR4344032} that the sequence~$J(L_n)$ can be described recursively as follows.
For any $\pi\in L_{n-1}$ we let $\rvec{c}(\pi)$ be the sequence of all $c_i(\pi)\in L_n$ for $i=1,2,\ldots,n$, starting with~$c_1(\pi)$ and ending with~$c_n(\pi)$, and we let $\lvec{c}(\pi)$ denote the reverse sequence, i.e., it starts with~$c_n(\pi)$ and ends with~$c_1(\pi)$.
In words, those sequences are obtained by inserting into~$\pi$ the new largest value~$n$ from left to right, or from right to left, respectively, in all possible positions that yield a permutation from~$L_n$, skipping the positions that yield a permutation that is not in~$L_n$.
If $n=0$ then we have $J(L_0)=\varepsilon$, and if~$n\geq 1$ then we consider the sequence $J(L_{n-1})=:\pi_1,\pi_2,\ldots$ and we have
\begin{subequations}
\label{eq:JLn12}
\begin{equation}
\label{eq:JLn1}
  J(L_n)=\lvec{c}(\pi_1),\rvec{c}(\pi_2),\lvec{c}(\pi_3),\rvec{c}(\pi_4),\ldots
\end{equation}
if condition~(z1) holds, and we have
\begin{equation}
\label{eq:JLn2}
  J(L_n)=c_n(\pi_1),c_n(\pi_2),c_n(\pi_3),c_n(\pi_4),\ldots
\end{equation}
\end{subequations}
if condition~(z2) holds.

The alternating directions of jumps of the value~$n$ in~\eqref{eq:JLn1} motivate the name `zigzag' language.
It is easy to see that when run on the set $L_n=S_n$ of all permutations of~$[n]$, then the ordering~\eqref{eq:JLn1} uses only adjacent transpositions, and it is precisely the Steinhaus-Johnson-Trotter ordering; recall Table~\ref{tab:SJT} and Figure~\ref{fig:asso-perm}~(a) (condition~(z2) never holds in this case, so~\eqref{eq:JLn2} is irrelevant).

While Algorithm~J as stated requires bookkeeping of all previously visited permutations, and is as such not efficient, it can be made history-free and efficient; see Section~\ref{sec:algo}.

\section{Elimination forests for unit interval graphs and stars}
\label{sec:unit-stars}

Before proving our main result, Theorem~\ref{thm:jump-elim}, about chordal graphs in Section~\ref{sec:chordal} below, we first consider two special cases.
The first is that of filled graphs, as defined by Barnard and McConville~\cite{MR4176852}.
We show that they correspond to unit interval graphs, and apply a previous result from Hoang and M\"utze~\cite{MR4344032} on quotients of the weak order.
The second special case is that of stars.
We obtain new Gray codes for elimination trees for stars, which we show are in one-to-one correspondence with partial permutations.
This Gray code on partial permutations is the first of its kind.

\subsection{Unit interval graphs and quotients of the weak order}

Before describing our first result, we recall some standard order-theoretic definitions.

An \emph{inversion} in a permutation $\pi=a_1\cdots a_n\in S_n$ is a pair~$(a_i,a_j)$ with~$i<j$ and~$a_i>a_j$.
The \emph{weak order} of permutations in~$S_n$ is the containment order of their sets of inversions.
It is well-known that $S_n$, equipped with the weak order, is a lattice, i.e., joins $\pi\vee \rho$ and meets $\pi\wedge\rho$ are well-defined.

A \emph{lattice congruence} on a lattice~$L$ is an equivalence relation $\equiv$ that is compatible with taking joins and meets, i.e., if $x\equiv x'$ and $y \equiv y'$, then $x\vee y\equiv x'\vee y'$ and $x\wedge y\equiv x'\wedge y'$.
The lattice thus obtained on the equivalence classes of $\equiv$ is called the \emph{lattice quotient}~$L/\equiv$.

Recall from Section~\ref{sec:encoding} that $\equiv_G$ is the equivalence relation on~$S_n$ whose classes are the linear extensions of the elimination forests for~$G$.
Barnard and McConville considered the following problem: When is~$\equiv_G$ a lattice congruence of the weak order?
To this end, they call a graph~$G=([n],E)$ \emph{filled} if for each edge $\{i,k\}\in E$ with $i<k$ we have $\{i,j\}\in E$ and $\{j,k\}\in E$ for all $i<j<k$, and they show that this property is necessary and sufficient.

\begin{theorem}[\cite{MR4176852}]
\label{thm:filled}
The equivalence relation~$\equiv_G$ on~$S_n$ is a lattice congruence of the weak order if and only if $G$ is filled.
\end{theorem}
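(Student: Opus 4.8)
The statement to prove is Theorem~\ref{thm:filled}: the equivalence relation $\equiv_G$ on $S_n$ is a lattice congruence of the weak order if and only if $G$ is filled. Since this is attributed to Barnard--McConville, I would reconstruct a self-contained argument rather than merely cite it. The natural strategy is to characterize lattice congruences of the weak order combinatorially and then match that characterization against the filled condition. Recall that Reading's theory of lattice congruences of the weak order says that a congruence is determined by which \emph{covers} it contracts, and the contracted covers must satisfy a local consistency (forcing) relation on adjacent transpositions. So the plan is to first translate $\equiv_G$ into the language of contracted covers, and then verify the forcing conditions hold exactly when $G$ is filled.

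\textbf{Step 1: Describe $\equiv_G$ via contracted covers.} Two permutations are $\equiv_G$-equivalent iff they are linear extensions of the same elimination forest, equivalently they induce the same elimination forest via the elimination process. I would first observe that $\equiv_G$ is generated by a set of elementary swaps: $\pi \equiv_G \pi'$ for permutations differing in an adjacent transposition of values $a_i, a_{i+1}$ precisely when swapping them does not change the resulting elimination forest. The key local fact is that swapping two consecutive entries $j, k$ in an elimination ordering leaves the elimination tree unchanged exactly when $j$ and $k$ lie in \emph{different} connected components at the moment they are removed, i.e.\ when $\{j,k\}\notin E$ or when the earlier-removed vertex is not responsible for separating them. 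I would make precise which adjacent transpositions are ``free'' (do not change the tree) and show that $\equiv_G$ is the transitive closure of these free swaps.

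\textbf{Step 2: Match against the congruence forcing relation.} Using the description of weak-order congruences, I would invoke the criterion that an order-congruence on $S_n$ generated by contracting a set of cover relations is a lattice congruence iff the contracted set is closed under the forcing relation among covers (the standard ``arc'' or ``shard'' forcing of Reading). The heart of the proof is to show: the set of free adjacent transpositions of $\equiv_G$ is forcing-closed if and only if $G$ satisfies the filled condition. Concretely, the filled condition, $\{i,k\}\in E \Rightarrow \{i,j\},\{j,k\}\in E$ for all $i<j<k$, is exactly what guarantees that whenever a swap is forced by another through an intermediate value, the forced swap is again free in $\equiv_G$. For the converse, if $G$ is not filled there exist $i<j<k$ with $\{i,k\}\in E$ but (say) $\{i,j\}\notin E$, and I would exhibit two permutations witnessing a failure of the join/meet compatibility, producing a concrete non-congruence.

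\textbf{Main obstacle.} The hardest part will be Step~2, specifically the bookkeeping of the forcing relation on covers and translating the graph-theoretic filled condition into the combinatorial closure condition on swaps. The delicate point is handling triples $i<j<k$ where the adjacency pattern is partial: one must carefully track, for each relevant adjacent transposition, whether it changes the elimination tree, and verify that the forcing chains among these transpositions respect the edge set of $G$ exactly under fillededness. Getting the direction of the biconditional right---i.e.\ constructing an explicit certificate of non-congruence when $G$ fails to be filled---requires choosing the witnessing permutations so that their join or meet falls into inconsistent $\equiv_G$-classes, which is where the combinatorial care is concentrated. The ``free swap'' characterization in Step~1, while intuitively clear from the elimination process, also needs a clean proof that these swaps generate all of $\equiv_G$ and nothing more.
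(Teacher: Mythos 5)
First, a point of reference: the paper does not prove Theorem~\ref{thm:filled} at all. It is stated with a citation to Barnard and McConville~\cite{MR4176852} and used as a black box (together with Theorem~\ref{thm:quot}) to handle the special case of filled graphs. So there is no in-paper proof to compare your attempt against; what can be assessed is whether your outline is a viable reconstruction of the cited result. Your overall strategy --- describe $\equiv_G$ by the set of weak-order covers it contracts, then test that set against the known criteria for being a lattice congruence, with an explicit counterexample in the non-filled case --- is indeed the standard route and is essentially the one taken in the source.

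That said, the proposal as written has a concrete gap in Step~1 that would propagate into Step~2. Your characterization of the ``free'' adjacent transpositions is not correct as stated: swapping consecutive entries $j,k$ of an elimination ordering preserves the elimination forest if and only if $j$ and $k$ lie in \emph{different connected components} of $G$ minus the prefix of already-removed vertices. The condition $\{j,k\}\notin E$ is necessary but far from sufficient (two non-adjacent vertices can still be in the same component, in which case whichever is removed first becomes an ancestor of the other and the trees differ). This means the set of contracted covers is not determined edge-locally; it depends on connectivity after deletions, and that dependence is precisely where the filled hypothesis does its work. Two further items are asserted but not established: (i) that each $\equiv_G$-class is an interval of the weak order (a necessary condition for any lattice congruence, and not automatic for the set of linear extensions of a poset); and (ii) the actual verification that forcing-closure of the contracted covers is equivalent to filledness, including an explicit pair of permutations whose joins or meets land in different classes when $G$ is not filled. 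As it stands the proposal is a correct plan with the hard verifications deferred, plus one mis-stated local criterion that would need to be fixed before Step~2 could be carried out.
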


Pilaud and Santos~\cite{MR3964495} recently showed that the cover graph of any lattice quotient of the weak order on~$S_n$ can be realized as the skeleton of a polytope, called a \emph{quotientope}.
This family of polytopes simultaneously generalizes associahedra, permutahedra, hypercubes and many other known polytopes.
Another interpretation of Theorem~\ref{thm:filled} is therefore that if $G$ is filled, then the skeleton of its associahedron~$\cA(G)$ is that of a quotientope.
This allows us, in the special case of filled graphs, to directly reuse a recent result from Hoang and M\"utze~\cite{MR4344032}, who showed that Algorithm~J can be used to compute a Hamilton path on the skeleton of each quotientope.

\begin{theorem}[\cite{MR4344032}]
\label{thm:quot}
For any lattice congruence~$\equiv$ of the weak order on~$S_n$, there is a zigzag language~$R_n\seq S_n$ such that each equivalence class of~$\equiv$ contains exactly one permutation from~$R_n$.
Moreover, any two permutations from~$R_n$ that are visited consecutively by Algorithm~J are from equivalence classes that form a cover relation in the lattice quotient~$S_n/\equiv$.
\end{theorem}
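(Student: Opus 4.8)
The plan is to establish both assertions by exhibiting an explicit transversal and then analyzing it inductively on~$n$. I would take $R_n$ to consist of the \emph{bottom element} of each congruence class of~$\equiv$. This is well defined: every class~$C$ of a lattice congruence is an interval of the underlying lattice, since $x\equiv y$ implies $x\wedge y\equiv x\equiv x\vee y$, so $C$ is closed under meets and joins and equals $[\bigwedge C,\bigvee C]$; in particular $C$ has a unique minimum. Hence $R_n$ meets every class exactly once, which already gives the transversal property claimed in the first sentence. The real content is to show that $R_n$ is a \emph{zigzag language} and that consecutive permutations of $J(R_n)$ realize cover relations of $S_n/\equiv$.

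For the zigzag property I would induct on~$n$, the case $n=0$ being trivial. The inductive step rests on a \emph{projection lemma}: the deletion map $p\colon S_n\to S_{n-1}$ transports $\equiv$ to a lattice congruence $\equiv'$ on~$S_{n-1}$, and $R_{n-1}=\{p(\pi)\mid \pi\in R_n\}$ is exactly the set of $\equiv'$-minima, so that the inductive hypothesis applies to it. The geometric input is that each fiber $p^{-1}(\sigma)=\{c_1(\sigma),\dots,c_n(\sigma)\}$ is a chain $c_n(\sigma)<c_{n-1}(\sigma)<\cdots<c_1(\sigma)$ in the weak order (inserting~$n$ further to the left only adds inversions of~$n$ with smaller values), and that the insertion maps $c_1$ and $c_n$ are embeddings of the weak order onto the sublattices of permutations starting, respectively ending, with~$n$, each isomorphic to the weak order on~$S_{n-1}$. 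Because congruence classes are intervals and fibers are chains, each class cuts out a contiguous subchain of every fiber it meets, which is precisely what makes the projection well behaved.

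It then remains to decide the dichotomy between (z1) and (z2), and I expect \textbf{this to be the main obstacle}. The point is to show that $\equiv$ treats the top value~$n$ \emph{uniformly} across all fibers: either no cover adjacent to an extremal insertion is contracted, in which case both $c_1(\sigma)$ and $c_n(\sigma)$ are class minima for every $\sigma\in R_{n-1}$ and condition~(z1) holds; or the value~$n$ is forced to the last position in every representative, so that $c_n(\sigma)$ is the only class minimum over each~$\sigma$ and condition~(z2) holds. Ruling out mixed behavior---where some fibers would admit the leftmost insertion as a representative while others would not---is exactly where the compatibility of~$\equiv$ with joins and meets is indispensable: a single contracted cover involving~$n$ propagates across fibers by taking joins and meets with the elements of a fixed fiber, forcing the uniform pattern. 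An arbitrary equivalence relation with interval classes need not satisfy this, so the lattice hypothesis is used in an essential way here.

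Finally, for the statement on consecutive permutations I would use the recursive description \eqref{eq:JLn1}--\eqref{eq:JLn2} of $J(R_n)$. Two consecutive permutations $\pi,\rho\in R_n$ differ by a minimal jump; since a right jump only removes inversions and a left jump only adds them, $\pi$ and~$\rho$ are comparable in the weak order, hence their classes are comparable in~$S_n/\equiv$. To see that this comparability is a \emph{cover}, I would split into the two types of consecutive steps. For a jump of the top value~$n$, the two endpoints $c_k(\sigma)$ and $c_{k'}(\sigma)$ lie in one fiber, the skipped permutations $c_\ell(\sigma)$ with $k<\ell<k'$ are not in~$R_n$, and minimality together with the contiguity of classes within the fiber forces $[c_k(\sigma)]$ and $[c_{k'}(\sigma)]$ to be adjacent. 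For a jump among values smaller than~$n$ the two endpoints share the position of~$n$ at a fixed extreme, so via the congruence-preserving embedding $c_1$ or~$c_n$ the step is the image of a consecutive step of $J(R_{n-1})$, which is a cover of $S_{n-1}/\equiv'$ by the inductive hypothesis and therefore lifts to a cover of~$S_n/\equiv$. Thus every step of $J(R_n)$ realizes a cover relation, completing the proof.
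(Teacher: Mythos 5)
The paper offers no proof of Theorem~\ref{thm:quot}: it is imported verbatim from~\cite{MR4344032}, so there is no in-paper argument to compare against. Your outline does follow the strategy of that reference---taking $R_n$ to be the minima of the congruence classes, using that classes of a finite lattice congruence are intervals, and inducting on~$n$ via the deletion map~$p$, the chain structure of the fibers $p^{-1}(\sigma)$, and the extremal insertions $c_1,c_n$---so the architecture is the right one, and your identification of the (z1)/(z2) dichotomy as the crux is accurate (the all-identified congruence on $S_2$ already shows (z2) is genuinely needed).

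As a proof, however, the hard steps remain asserted rather than established. First, the ``projection lemma'' needs real content: you must pin down the induced congruence $\equiv'$ on $S_{n-1}$ (cleanest is $\sigma\equiv'\tau:\Leftrightarrow c_n(\sigma)\equiv c_n(\tau)$ via the lower interval of permutations ending in~$n$, after which one checks that the congruence induced via $c_1$ agrees with it using $c_1(\sigma)=c_n(\sigma)\vee c_1(\ide_{n-1})$ and its dual), and the inclusion $p(R_n)\seq R_{n-1}$ is not automatic: if $c_i(\sigma)$ is a class minimum and $\tau\lessdot\sigma$ satisfies $c_n(\tau)\equiv c_n(\sigma)$, you must exhibit a contracted cover below $c_i(\sigma)$, which works via $c_i(\sigma)=c_i(\tau)\vee c_n(\sigma)$ except in the delicate case where $n$ is inserted exactly between the transposed pair. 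Second, the uniformity claim for the dichotomy is only gestured at (``propagates across fibers by taking joins and meets''); making it precise is exactly the forcing analysis among the covers $c_{i+1}(\tau)\lessdot c_i(\tau)$ and is the main technical content of the cited proof. Third, for the cover-relation claim, a minimal jump may span several weak-order covers, so comparability of the endpoints is not enough: you must show every skipped intermediate $c_\ell(\sigma)$ lies in the class of one of the two endpoints. Nothing here is wrong in spirit, but as written the proposal is an outline of the argument in~\cite{MR4344032} rather than a self-contained proof.
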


Combining Theorem~\ref{thm:filled} and Theorem~\ref{thm:quot}, we obtain that the framework described in Section~\ref{sec:zigzag} applies directly in the special case of filled graphs.

\begin{corollary}
If a connected graph~$G=([n],E)$ is filled, then there is zigzag language~$R_n\seq S_n$ such that each equivalence class of~$\equiv_G$ contains exactly one permutation from~$R_n$.
Moreover, any two permutations from~$R_n$ that are visited consecutively by Algorithm~J are linear extensions of elimination forests that differ in a tree rotation.
\end{corollary}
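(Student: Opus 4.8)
The plan is to derive this corollary by directly chaining Theorem~\ref{thm:filled} and Theorem~\ref{thm:quot}, using two identifications: that the classes of $\equiv_G$ are exactly the elimination forests for~$G$, and that for filled~$G$ the skeleton of $\cA(G)$ is that of a quotientope. Since the statement is explicitly advertised as a combination of the two preceding theorems, I expect the argument to be short, with essentially all the content living in the translation of the ``moreover'' clause.

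First I would invoke Theorem~\ref{thm:filled}: as $G$ is filled, the relation $\equiv_G$ on $S_n$ is a lattice congruence of the weak order. This makes Theorem~\ref{thm:quot} applicable with $\equiv\,:=\,\equiv_G$, which immediately produces a zigzag language $R_n\seq S_n$ containing exactly one permutation from each equivalence class of $\equiv_G$. Recalling from Section~\ref{sec:encoding} that the classes of $\equiv_G$ are precisely the sets of linear extensions of the elimination forests for~$G$, this $R_n$ selects exactly one representative linear extension per elimination forest, which is the first assertion.

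For the ``moreover'' part, Theorem~\ref{thm:quot} further guarantees that two permutations of $R_n$ visited consecutively by Algorithm~J lie in equivalence classes forming a cover relation in the lattice quotient $S_n/\equiv_G$. What remains is to turn such a cover relation into a single tree rotation between the associated elimination forests. Here I would use the quotientope realization of Pilaud and Santos~\cite{MR3964495}: the cover graph of $S_n/\equiv_G$ is the skeleton of a quotientope, and for filled~$G$ this quotientope has the same skeleton as the graph associahedron $\cA(G)$ (the interpretation of Theorem~\ref{thm:filled} recorded in the surrounding text). Consequently a cover relation in $S_n/\equiv_G$ corresponds to an edge of $\cA(G)$, and by Lemma~\ref{lem:Gasso-edges} the endpoints of such an edge are elimination forests differing by a single rotation. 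Composing these identifications yields the second assertion.

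The main obstacle I anticipate lies not in the logical chaining but in making the last identification airtight, namely that for filled~$G$ the edges of the quotientope of $\equiv_G$ coincide with the edges of $\cA(G)$, equivalently that cover relations of $S_n/\equiv_G$ match the rotation-edges of Lemma~\ref{lem:Gasso-edges}. Since the text already records that the skeleton of $\cA(G)$ is that of a quotientope for filled~$G$, I would treat this equality as established and cite it; the only care required is to check that the bijection between classes of $\equiv_G$ and elimination forests is compatible simultaneously with the quotient-lattice structure and with the face structure of $\cA(G)$, so that a single cover step on one side is exactly one edge on the other.
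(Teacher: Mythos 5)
Your proposal matches the paper's own (implicit) argument: the corollary is stated there as an immediate combination of Theorem~\ref{thm:filled} and Theorem~\ref{thm:quot}, with the translation of cover relations into tree rotations going through exactly the quotientope/graph-associahedron skeleton identification and Lemma~\ref{lem:Gasso-edges} that you describe. No gaps; this is the intended proof.
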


The term `filled' is not standard in graph theory.
It is not difficult to check that filled graphs are chordal and claw-free.
We proceed to show that a graph~$G$ is filled if and only if it is a unit interval graph.
An \emph{intersection model} for a graph~$G=(V,E)$ is a bijection~$f$ between~$V$ and a collection~$S$ of sets such that $\{i,j\}\in E\Longleftrightarrow f(i)\cap f(j)\neq\emptyset$.
A graph is a \emph{(unit) interval graph} if it has an intersection model consisting of (unit) intervals of the real line.

\begin{lemma}
\label{lem:filled}
A graph~$G=([n],E)$ is filled if and only if it has an intersection model of unit intervals $\{[\ell_i,\ell_i+1]\mid i\in [n]\}$ such that $\ell_1<\ell_2<\cdots<\ell_n$ and no two interval endpoints coincide.
\end{lemma}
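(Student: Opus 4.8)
The plan is to prove both directions of the equivalence by relating the combinatorial ``filled'' condition on $G$ directly to the order structure of unit interval endpoints. Throughout I would identify a unit interval with its left endpoint, so the model is determined by the tuple $(\ell_1,\ldots,\ell_n)$ with $\ell_1<\cdots<\ell_n$, and the adjacency condition $\{i,k\}\in E$ becomes $\ell_k-\ell_i\leq 1$ (for $i<k$), i.e.\ the two intervals overlap precisely when their left endpoints are within distance one. The key structural observation, which I would isolate first, is that under the constraint $\ell_1<\cdots<\ell_n$, adjacency is a \emph{contiguity} condition: if $\{i,k\}\in E$ with $i<k$, then every $j$ with $i<j<k$ satisfies $\ell_i<\ell_j<\ell_k$, hence $\ell_k-\ell_j\leq \ell_k-\ell_i\leq 1$ and $\ell_j-\ell_i\leq \ell_k-\ell_i\leq 1$, so $\{i,j\}\in E$ and $\{j,k\}\in E$. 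This is exactly the filled property, giving the easy ($\Leftarrow$) direction essentially for free.

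For the ($\Rightarrow$) direction I would construct the endpoints explicitly from the filled graph. The natural device is to let $r(i)$ denote the largest vertex adjacent to $i$ with $r(i)>i$ (and $r(i)=i$ if no such neighbor exists); the filled property guarantees that the neighborhood of $i$ among larger vertices is an interval $\{i+1,\ldots,r(i)\}$ of consecutive labels, so $r$ records a genuine ``reach'' to the right. I would then seek left endpoints $\ell_1<\cdots<\ell_n$ realizing $\{i,k\}\in E \iff \ell_k-\ell_i\leq 1$. The cleanest route is to place the $\ell_i$ greedily by increasing index while maintaining two families of strict inequalities: separation $\ell_k-\ell_i>1$ whenever $\{i,k\}\notin E$ (with $i<k$), and proximity $\ell_k-\ell_i<1$ whenever $\{i,k\}\in E$ (pushing the cutoff to a strict inequality so that no two distinct pair-distances equal exactly $1$, which also secures the ``no two endpoints coincide'' clause). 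Because of the contiguity observation, the only binding constraints are of the form ``$\ell_{r(i)+1}-\ell_i>1$'' and ``$\ell_{r(i)}-\ell_i<1$,'' so there are finitely many strict linear constraints; a standard argument shows a feasible point with all coordinates distinct exists, for instance by choosing the $\ell_i$ as generic small perturbations of integer/fractional values consistent with the reach function $r$.

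Concretely, I expect the cleanest induction to set $\ell_i$ one at a time: having fixed $\ell_1<\cdots<\ell_{i-1}$, the lower bounds forcing $\ell_i$ up come from vertices $h<i$ with $\{h,i\}\notin E$ (requiring $\ell_i>\ell_h+1$), and the upper bounds holding $\ell_i$ down come from vertices $h<i$ with $\{h,i\}\in E$ (requiring $\ell_i<\ell_h+1$). The filled/contiguity property is what makes these compatible: if $h<h'<i$ with $\{h,i\}\in E$ then $\{h',i\}\in E$, so the set of larger-constrained predecessors and the set of smaller-constrained predecessors are themselves contiguous blocks, and the largest lower bound comes from the closest non-neighbor while the smallest upper bound comes from the same closest neighbor, leaving a nonempty open interval in which to place $\ell_i$ (and generically avoid coincident endpoints and exact unit distances). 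The main obstacle, and the step I would be most careful about, is verifying that this open interval is always nonempty, i.e.\ that the greatest lower bound is strictly below the least upper bound; this is where the filled hypothesis must be used in full, ruling out a configuration where a non-neighbor $h$ with $h<h'$ and a neighbor $h'$ with $h'<i$ force $\ell_h+1<\ell_i<\ell_{h'}+1$ together with $\ell_h<\ell_{h'}$ in a contradictory way. I would resolve this by showing, via the contiguity of neighborhoods, that any non-neighbor predecessor $h$ of $i$ necessarily precedes every neighbor predecessor $h'$ of $i$ (since neighbors of $i$ among smaller vertices form a suffix $\{g,g+1,\ldots,i-1\}$ of the label range), so $\ell_h<\ell_{h'}$ is automatic and the bounds are consistent.
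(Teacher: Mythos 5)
Your proposal is correct and follows essentially the same route as the paper: both argue by induction on the vertices in increasing label order, use filledness to show that the smaller neighbors of the newly added vertex~$i$ form a contiguous suffix~$\{g,g+1,\ldots,i-1\}$ of the label range, and then place~$\ell_i$ in the nonempty open interval $(\max\{\ell_{i-1},\ell_{g-1}+1\},\,\ell_g+1)$. The only detail worth making explicit is the check that $\ell_{i-1}<\ell_g+1$ (so that the ordering constraint $\ell_i>\ell_{i-1}$ is compatible with the upper bound coming from the smallest neighbor~$g$); this holds because $i-1$ is adjacent to~$g$ in the already-realized model (or equals~$g$) and no two endpoints coincide, exactly as the paper verifies.
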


\begin{proof}
If there exists such a unit interval model for~$G$, then clearly $G$ is filled.
Now suppose that $G$ is filled.
We argue by induction on $n$ that the desired unit interval model for~$G$ exists.
The claim is trivial for $n=1$.
Suppose it holds for the subgraph of~$G$ induced by the first $n-1$ vertices, which is also filled.
If the vertex~$n$ is isolated in~$G$, we set $\ell_n:=\ell_{n-1}+2$ to obtain the desired unit interval model.
Otherwise we consider a neighbor~$i$ of the vertex~$n$.
As $G$ is filled, all vertices~$j$ such that $i<j<n$ must also be adjacent to both~$n$ and~$i$.
Hence the neighborhood of~$n$ is a clique induced by the vertices~$i, i+1,\ldots ,n-1$, where $i$ is the smallest neighbor of~$n$.
By the induction hypothesis, $G-n$ has a unit interval model, and it must be the case for this model that $\ell_{n-1}<\ell_i+1$ and that $\ell_{i-1}+1<\ell_i+1$, where we use the assumption that no two interval endpoints coincide.
We can therefore choose $\ell_n$ to lie in the open interval $(\max\{\ell_{n-1},\ell_{i-1}+1\},\ell_i+1)$ to obtain a unit interval model for~$G$ with the desired properties.
\end{proof}

\subsection{Stars and partial permutations}
\label{sec:stars}

Stars with at least four vertices are not unit interval graphs, and are therefore the next natural graph class to investigate.
As mentioned before, graph associahedra of stars are known as stellohedra.

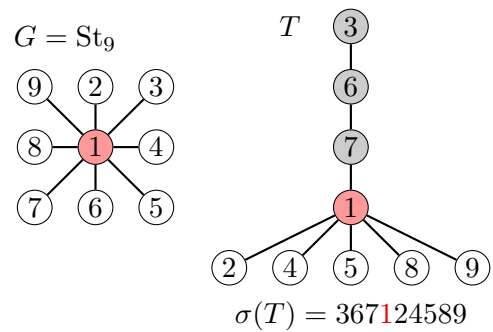
\begin{wrapfigure}{r}{0.54\textwidth}
\centering
\begin{tikzpicture}[scale=0.8,auto,swap]
\begin{scope}
  \foreach \pos/\name in {{(0,1)/2},{(1,1)/3},{(1,0)/4},{(1,-1)/5},{(0,-1)/6},{(-1,-1)/7},{(-1,0)/8},{(-1,1)/9}} \node[vertex] (\name) at \pos {$\name$};
  \node[vertex,fill=red!40] (1) at (0,0) {$1$};
  \foreach \dest in {2,3,4,5,6,7,8,9} \path[edge] (1) -- (\dest);
  \node at (-0.5,1.8) {$G=\Star_9$};
\end{scope}
\begin{scope}[xshift=4.2cm,yshift=-2cm]
  \foreach \pos/\name in {{(-2,0)/2},{(-1,0)/4},{(0,0)/5},{(1,0)/8},{(2,0)/9}} \node[vertex] (\name) at \pos {$\name$};
  \foreach \pos/\name in {{(0,2)/7},{(0,4)/3},{(0,3)/6}} \node[vertex,fill=black!20] (\name) at \pos {$\name$};
  \node[vertex,fill=red!40] (1) at (0,1) {$1$};
  \foreach \source/\dest in {3/6,6/7,7/1,1/2,1/4,1/5,1/8,1/9} \path[edge] (\source) -- (\dest);
  \node at (0,-0.8) {$\sigma(T)=367{\red 1}24589$};
  \node at (-1,4) {$T$};
\end{scope}
\end{tikzpicture}
\caption{An elimination tree~$T$ for~$\Star_9$.
The corresponding permutation is $\sigma(T)=367124589$, and the associated partial permutation is~$367-1=256$.}
\vspace{-1mm}
\label{fig:star}
\end{wrapfigure}
We take the vertex~1 as the center of the star, hence we consider the graph $\Star_n:=([n],E)$ with $E:=\{\{1,i\}\mid 2\leq i\leq n\}$ for~$n\geq 0$.
Elimination trees for $\Star_n$ are brooms; see Figure~\ref{fig:star}: a path composed of elements from a subset of~$[n]\setminus\{1\}$, followed by a subtree of height one rooted in~1.

A \emph{partial permutation} of~$[n]$ is a linearly ordered subset of~$[n]$.
The number of partial permutations of $[n]$ is $\sum_{k=0}^n n!/k!$ (OEIS~A000522).

Note that each elimination tree for~$\Star_n$, $n\geq 1$, corresponds to a partial permutation on~$[n-1]$, given by reading off the labels from the handle of the broom starting at the root and ending at the parent of~1, and subtracting 1 from those labels.
We see that elimination trees for $\Star_n$ are in one-to-one correspondence with partial permutations of~$[n-1]$.

In order to apply the framework from Section~\ref{sec:zigzag}, we define a natural mapping~$\sigma$ from the set~$\cE(\Star_n)$ of all elimination trees for the star to~$S_n$ as follows:
For an elimination tree~$T\in\cE(\Star_n)$, the permutation $\sigma(T)$ is obtained by reading off the labels from the handle of the broom starting at the root and ending at~1, followed by the remaining elements sorted increasingly.
For the special case of the empty elimination tree~$\emptyset$ obtained for the empty star~$\Star_0=\emptyset$, we define $\sigma(\emptyset):=\varepsilon$ to be the empty permutation.
Clearly, $\sigma(T)$ is a linear extension of~$T$.
We define $\Pi(\Star_n):=\{\sigma(T)\mid T\in\cE(\Star_n)\}$.

\begin{lemma}
The set of permutations~$\Pi(\Star_n)$ is a zigzag language.
\end{lemma}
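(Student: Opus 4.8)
The plan is to prove the statement by induction on~$n$, verifying the recursive definition of a zigzag language directly. The crucial preliminary step is to make the structure of $\Pi(\Star_n)$ completely explicit: I claim that a permutation lies in $\Pi(\Star_n)$ if and only if it has the form $a_1\cdots a_k\,1\,b_1\cdots b_m$, where $\{a_1,\ldots,a_k\}$ and $\{b_1,\ldots,b_m\}$ partition $\{2,\ldots,n\}$, the handle entries $a_1,\ldots,a_k$ appear in an \emph{arbitrary} order, and the tail entries satisfy $b_1<\cdots<b_m$. This follows immediately from the broom structure of elimination trees for~$\Star_n$ together with the definition of~$\sigma$: the $a_i$ are the vertices peeled off as roots before~$1$ (read from the root toward~$1$), and the $b_j$ are the leaves remaining attached to~$1$, which $\sigma$ lists in increasing order.

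The base case is immediate, since $\Pi(\Star_0)=\{\sigma(\emptyset)\}=\{\varepsilon\}$. For the inductive step I would first identify the projected language $L_{n-1}:=\{p(\pi)\mid \pi\in\Pi(\Star_n)\}$ and show that it equals $\Pi(\Star_{n-1})$, so that the induction hypothesis applies. Removing the largest entry~$n$ from a broom permutation $a_1\cdots a_k\,1\,b_1\cdots b_m$ splits into two cases: either $n$ is one of the handle entries~$a_i$, or $n=b_m$ (as the tail is increasing, $n$ can only be its last entry). In both cases the result is again a broom permutation on $[n-1]$, since deleting a handle entry leaves a shorter handle and deleting~$b_m$ leaves a still-increasing tail; hence $p(\pi)\in\Pi(\Star_{n-1})$. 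Conversely, every $\rho\in\Pi(\Star_{n-1})$ satisfies $\rho=p(c_n(\rho))$ with $c_n(\rho)\in\Pi(\Star_n)$, because appending~$n$ to the tail keeps it increasing. Thus $L_{n-1}=\Pi(\Star_{n-1})$, which is a zigzag language by the induction hypothesis.

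It then remains to verify that condition~(z1) holds (condition~(z2) fails, since $n$ need not be the last entry). For every $\rho=a_1\cdots a_k\,1\,b_1\cdots b_m\in\Pi(\Star_{n-1})$, inserting~$n$ at the front yields $c_1(\rho)=n\,a_1\cdots a_k\,1\,b_1\cdots b_m$, a valid broom permutation whose handle simply starts with~$n$, while inserting~$n$ at the last position yields $c_n(\rho)=a_1\cdots a_k\,1\,b_1\cdots b_m\,n$, whose tail $b_1\cdots b_m\,n$ is still increasing. Both permutations lie in $\Pi(\Star_n)$, so~(z1) is satisfied and the induction goes through.

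The argument is essentially bookkeeping, and the one point requiring genuine care is the explicit description of $\Pi(\Star_n)$ and, in particular, the asymmetry between the handle (arbitrary order, so~$n$ may be inserted anywhere, and deleted from anywhere) and the tail (forced increasing, so among tail positions only the final one is available for inserting or deleting~$n$). I expect this asymmetry to be the main thing to get right: it is exactly what makes condition~(z1) hold rather than the more restrictive~(z2), and it is where one must check that inserting and deleting~$n$ respect the broom structure encoded by~$\sigma$.
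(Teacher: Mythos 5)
Your proof is correct and takes essentially the same route as the paper: induction on~$n$, showing that the projection $\{p(\pi)\mid\pi\in\Pi(\Star_n)\}$ equals $\Pi(\Star_{n-1})$ and that condition~(z1) holds because $c_1(\pi)$ and $c_n(\pi)$ always remain in the language. The only cosmetic difference is that you verify these facts via an explicit normal form for $\Pi(\Star_n)$ (handle in arbitrary order, then~$1$, then an increasing tail), whereas the paper verifies them by exhibiting the corresponding elimination trees and elimination orderings; the bookkeeping is equivalent.
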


\begin{table}[t!]
\centering
\renewcommand{\arraystretch}{0.9}
\setlength\tabcolsep{6pt}
\small
\begin{tabular}{|lllll|}
\hline
$n=1$ & $n=2$ & $n=3$ & $n=4$ & $n=5$ \\ \hline
$\varepsilon$ & $\varepsilon$ & $\varepsilon$ & $\varepsilon$ & $\varepsilon$ \\
&           &                     &                                & {\red 4} \\
&           &                     & {\orange 3}                    & {\red 4}{\orange 3} \\
&           &                     &                                & {\orange 3}{\red 4} \\
&           &                     &                                & {\orange 3} \\
&           & {\green 2}          & {\orange 3}{\green 2}          & {\orange 3}{\green 2} \\
&           &                     &                                & {\orange 3}{\green 2}{\red 4} \\
&           &                     &                                & {\orange 3}{\red 4}{\green 2} \\
&           &                     &                                & {\red 4}{\orange 3}{\green 2} \\
&           &                     & {\green 2}{\orange 3}          & {\red 4}{\green 2}{\orange 3} \\
&           &                     &                                & {\green 2}{\red 4}{\orange 3} \\
&           &                     &                                & {\green 2}{\orange 3}{\red 4} \\
&           &                     &                                & {\green 2}{\orange 3} \\
&           &                     & {\green 2}                     & {\green 2} \\
&           &                     &                                & {\green 2}{\red 4} \\
&           &                     &                                & {\red 4}{\green 2} \\
& {\blue 1} & {\green 2}{\blue 1} & {\green 2}{\blue 1}            & {\red 4}{\green 2}{\blue 1} \\
&           &                     &                                & {\green 2}{\red 4}{\blue 1} \\
&           &                     &                                & {\green 2}{\blue 1}{\red 4} \\
&           &                     &                                & {\green 2}{\blue 1} \\
&           &                     & {\green 2}{\blue 1}{\orange 3} & {\green 2}{\blue 1}{\orange 3} \\
&           &                     &                                & {\green 2}{\blue 1}{\orange 3}{\red 4} \\
&           &                     &                                & {\green 2}{\blue 1}{\red 4}{\orange 3} \\
&           &                     &                                & {\green 2}{\red 4}{\blue 1}{\orange 3} \\
&           &                     &                                & {\red 4}{\green 2}{\blue 1}{\orange 3} \\
&           &                     & {\green 2}{\orange 3}{\blue 1} & {\red 4}{\green 2}{\orange 3}{\blue 1} \\
&           &                     &                                & {\green 2}{\red 4}{\orange 3}{\blue 1} \\
&           &                     &                                & {\green 2}{\orange 3}{\red 4}{\blue 1} \\
&           &                     &                                & {\green 2}{\orange 3}{\blue 1}{\red 4} \\
&           &                     &                                & {\green 2}{\orange 3}{\blue 1} \\
&           &                     & {\orange 3}{\green 2}{\blue 1} & {\orange 3}{\green 2}{\blue 1} \\
&           &                     &                                & {\orange 3}{\green 2}{\blue 1}{\red 4} \\
&           &                     &                                & {\orange 3}{\green 2}{\red 4}{\blue 1} \\
&           &                     &                                & {\orange 3}{\red 4}{\green 2}{\blue 1} \\
&           &                     &                                & {\red 4}{\orange 3}{\green 2}{\blue 1} \\ \hline
\end{tabular}
\hspace{2mm}
\begin{tabular}{|lllll|}
\hline
$n=1$ & $n=2$ & $n=3$ & $n=4$ & $n=5$ \\ \hline
& & {\blue 1}{\green 2} & {\orange 3}{\blue 1}{\green 2} & {\red 4}{\orange 3}{\blue 1}{\green 2} \\
& &                     &                                & {\orange 3}{\red 4}{\blue 1}{\green 2} \\
& &                     &                                & {\orange 3}{\blue 1}{\red 4}{\green 2} \\
& &                     &                                & {\orange 3}{\blue 1}{\green 2}{\red 4} \\
& &                     &                                & {\orange 3}{\blue 1}{\green 2} \\
& &                     & {\blue 1}{\orange 3}{\green 2} & {\blue 1}{\orange 3}{\green 2} \\
& &                     &                                & {\blue 1}{\orange 3}{\green 2}{\red 4} \\
& &                     &                                & {\blue 1}{\orange 3}{\red 4}{\green 2} \\
& &                     &                                & {\blue 1}{\red 4}{\orange 3}{\green 2} \\
& &                     &                                & {\red 4}{\blue 1}{\orange 3}{\green 2} \\
& &                     & {\blue 1}{\green 2}{\orange 3} & {\red 4}{\blue 1}{\green 2}{\orange 3} \\
& &                     &                                & {\blue 1}{\red 4}{\green 2}{\orange 3} \\
& &                     &                                & {\blue 1}{\green 2}{\red 4}{\orange 3} \\
& &                     &                                & {\blue 1}{\green 2}{\orange 3}{\red 4} \\
& &                     &                                & {\blue 1}{\green 2}{\orange 3} \\
& &                     & {\blue 1}{\green 2}            & {\blue 1}{\green 2} \\
& &                     &                                & {\blue 1}{\green 2}{\red 4} \\
& &                     &                                & {\blue 1}{\red 4}{\green 2} \\
& &                     &                                & {\red 4}{\blue 1}{\green 2} \\
& & {\blue 1}           & {\blue 1}                      & {\red 4}{\blue 1} \\
& &                     &                                & {\blue 1}{\red 4} \\
& &                     &                                & {\blue 1} \\
& &                     & {\blue 1}{\orange 3}           & {\blue 1}{\orange 3} \\
& &                     &                                & {\blue 1}{\orange 3}{\red 4} \\
& &                     &                                & {\blue 1}{\red 4}{\orange 3} \\
& &                     &                                & {\red 4}{\blue 1}{\orange 3} \\
& &                     & {\orange 3}{\blue 1}           & {\red 4}{\orange 3}{\blue 1} \\
& &                     &                                & {\orange 3}{\red 4}{\blue 1} \\
& &                     &                                & {\orange 3}{\blue 1}{\red 4} \\
& &                     &                                & {\orange 3}{\blue 1} \\ \hline
\multicolumn{5}{c}{} \\
\multicolumn{5}{c}{} \\
\multicolumn{5}{c}{} \\
\multicolumn{5}{c}{} \\
\multicolumn{5}{c}{} \\
\end{tabular}
\vspace{2mm}
\caption{The Gray code on partial permutations obtained from Algorithm~J.
The parameter~$n$ refers to the number of vertices of the star, and the partial permutations are on the set~$[n-1]$.
}
\label{tab:partial}
\end{table}

\begin{proof}
We use the definition of zigzag languages from Section~\ref{sec:algoJ}, and we argue by induction on~$n$.
For the base case of the induction $n=0$ we have $\Pi(\Star_0)=\{\varepsilon\}$, which is a zigzag language.

For the induction step let $n\geq 1$ and assume by induction that $\Pi(\Star_{n-1})$ is a zigzag language.
First observe that removing that largest entry~$n$ in a permutation $\pi\in\Pi(\Star_n)$ yields a permutation in~$\Pi(\Star_{n-1})$, and every permutation in~$\Pi(\Star_{n-1})$ is obtained in this way.
Specifically, if $T$ is the elimination tree such that $\sigma(T)=\pi$, then $p(\pi)=\sigma(T')$ for the elimination tree~$T'$ for~$\Star_{n-1}$ obtained by removing vertices as described by~$T$, but ignoring the leaf~$n$ of~$\Star_n$.
It follows that $\{p(\pi)\mid \pi\in\Pi(\Star_n)\}=\Pi(\Star_{n-1})$.
Moreover, for any $\pi\in\Pi(\Star_{n-1})$, both $c_1(\pi)$ and $c_n(\pi)$ belong to $\Pi(\Star_n)$, so condition~(z1) is satisfied.
Specifically, consider the elimination tree~$T$ with $\sigma(T)=\pi$.
Then $c_1(\pi)=\sigma(T')$ for the elimination tree~$T'$ obtained with the elimination ordering~$c_1(\pi)$, removing the leaf~$n$ of the star first.
Similarly, $c_n(\pi)=\sigma(T'')$ for the elimination tree~$T''$ obtained with the elimination ordering $c_n(\pi)$, removing the leaf~$n$ of the star last.
\end{proof}

\begin{figure}[t!]
\makebox[0cm]{ 
\includegraphics[page=3]{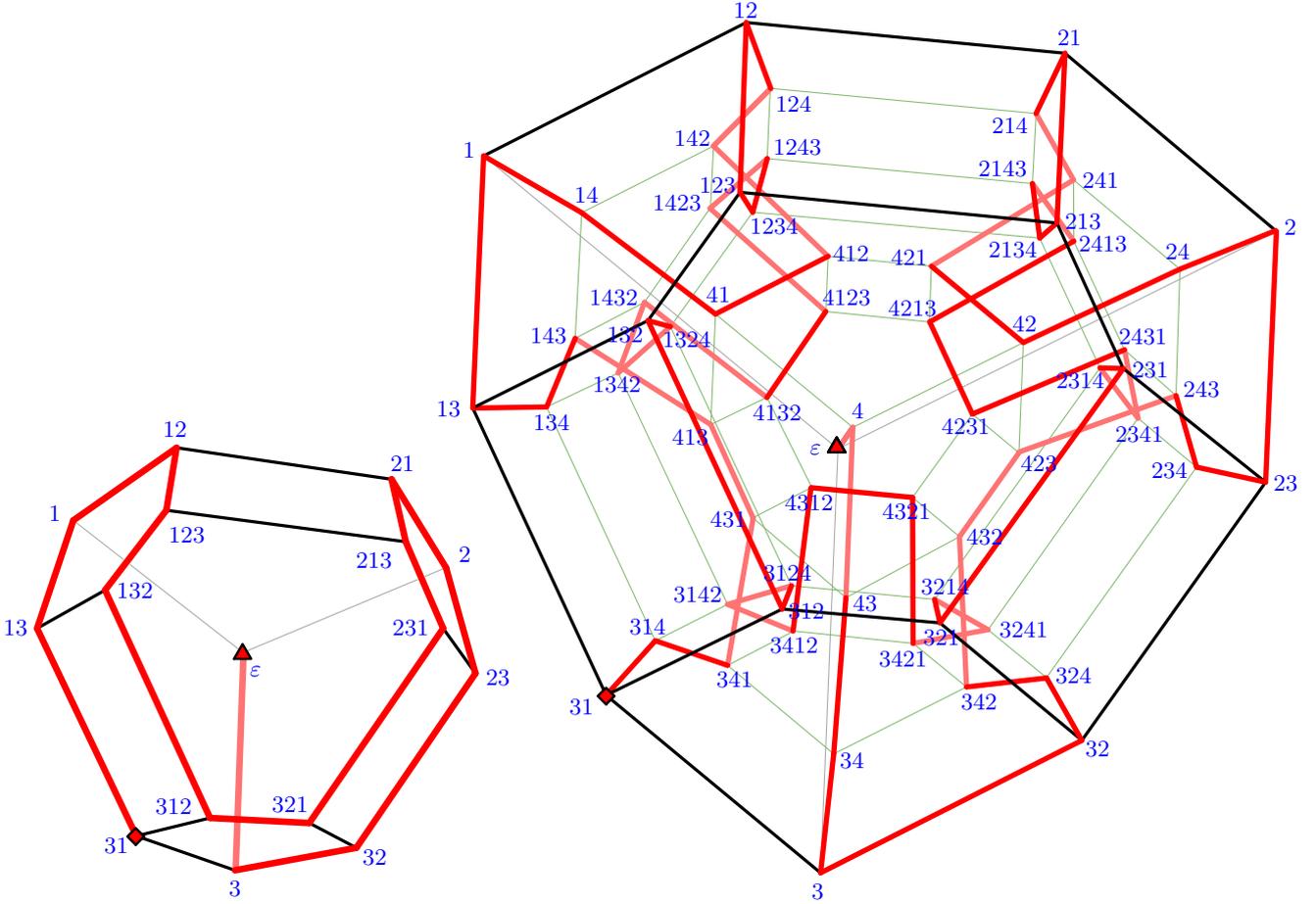}
}
\caption{The three- and four-dimensional stellohedron, and the Hamilton paths obtained from Algorithm~J.
The four-dimensional stellohedron is visualized by its Schlegel diagram.}
\label{fig:stello}
\end{figure}

Algorithm~J therefore applies again.
In order to state our result in terms of partial permutations, it is interesting to understand the effect of a rotation in an elimination tree on the partial permutation represented by this tree.
If the rotation does not involve the center~$1$ of the star, then it corresponds to an adjacent transposition in the partial permutation.
On the other hand, if the rotation does involve the center, then it corresponds to either deleting the trailing element of the partial permutation, or appending a new one.
Note that rotations on the elimination trees for~$\Star_n$ are also in one-to-one correspondence with minimal jumps in permutations in~$\Pi(\Star_n)$.
Specifically, if the rotation does not involve the center~$1$ of the star, then the corresponding adjacent transposition is a jump by 1~step, which is clearly minimal.
On the other hand, if the rotation does involve the center, then the definition of~$\sigma(T)$ requires that vertices of an elemination tree~$T\in\cE(\Star_n)$ not belonging to the handle of the broom are sorted increasingly, so there is only a single valid location for every non-handle vertex in the permutation~$\sigma(T)$, meaning that the corresponding jump to that location must be minimal.

We summarize this as follows.

\begin{lemma}
The following three sets of operations are in one-to-one correspondence:
\begin{itemize}[leftmargin=5mm, noitemsep, topsep=3pt plus 3pt]
\item rotations in elimination trees for~$\Star_n$,
\item minimal jumps in permutations in $\Pi(\Star_n)$,
\item adjacent transpositions, and deletions and insertions of a trailing element in partial permutations of~$[n-1]$.
\end{itemize}
\end{lemma}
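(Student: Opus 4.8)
The plan is to fix the middle object — the broom-shaped elimination tree $T$ — and read off the other two descriptions from it, so that all three sets are parametrized simultaneously by $\cE(\Star_n)$. First I would record the explicit shape of the objects: writing a broom as a handle $r_1,\dots,r_k$ (from the root down to the parent of the center~$1$) followed by the center~$1$ and its remaining leaf-children, the associated permutation is $\sigma(T)=r_1\cdots r_k\,1\,s_1\cdots s_m$ with $s_1<\cdots<s_m$ the leaves sorted increasingly, and the associated partial permutation of~$[n-1]$ is $(r_1-1)\cdots(r_k-1)$. The structural observation I would prove once and reuse is that $\pi\in\Pi(\Star_n)$ if and only if the entries of~$\pi$ lying to the right of the value~$1$ are increasing; this turns every later claim into a short positional check, and it makes the bijections between brooms, permutations in $\Pi(\Star_n)$, and partial permutations transparent.

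Next I would establish the rotation versus partial-permutation correspondence by specializing the rotation rule of Section~\ref{sec:rot} to the star, where there are exactly two kinds of tree edges. An edge $\{r_m,r_{m+1}\}$ on the handle does not involve the center; since deleting the rotated vertex keeps the induced subgraph connected, the subtree below is simply re-attached and the net effect is to interchange $r_m$ and $r_{m+1}$ on the handle, i.e.\ an adjacent transposition of the partial permutation. An edge incident to~$1$ does involve the center, and here deleting~$1$ disconnects the induced star into singletons: taking the center~$1$ as the vertex that rotates up (so its parent is $r_k$) detaches $r_k$ into a leaf-child of~$1$, which is the deletion of the trailing element $r_k-1$, while taking a leaf-child~$\ell$ of~$1$ as the rotating vertex promotes~$\ell$ to the new parent of~$1$, which is the insertion of the trailing element $\ell-1$. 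As these exhaust all tree edges (and the counts of the three edge types match the counts of the three move types), rotations are in one-to-one correspondence with the listed partial-permutation moves.

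Then I would match partial-permutation moves with minimal jumps using the increasing-tail characterization. An adjacent transposition of $r_m,r_{m+1}$ swaps two neighboring entries of $\sigma(T)$, i.e.\ a jump by one step, which is automatically minimal. A trailing deletion makes $r_k$ leave the handle and take its sorted place among the $s_i$; in $\sigma(T)$ this is the right jump of the value~$r_k$ over~$1$ and over the smaller $s_i$ into its unique sorted slot, and any shorter right jump would leave a descent in the tail and hence leave $\Pi(\Star_n)$, so the jump is minimal. A trailing insertion is the mirror-image left jump of a leaf value from its sorted slot to the position just before~$1$, again minimal by the same tail argument.

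The real work, and where I expect the main obstacle, is the two \emph{converse} inclusions needed to upgrade these matchings to bijections: that the three move types exhaust \emph{all} rotations (handled in the second paragraph by enumerating the two edge types) and, more delicately, that they exhaust \emph{all} minimal jumps of permutations in $\Pi(\Star_n)$. For the latter I would argue by the position of the jumped value relative to the value~$1$: a jump whose source and target both lie to the right of~$1$ cannot remain in $\Pi(\Star_n)$, since it would break the increasing order of the tail; a jump confined to the handle is minimal only as a single-step swap of adjacent handle entries, because any longer handle jump factors through shorter ones that stay in the language; and a jump crossing the value~$1$ has its landing position pinned uniquely by the increasing-tail condition (just past~$1$ for a left jump, the sorted slot for a right jump), with every shorter jump exiting the language, so it is forced to be a trailing insertion or deletion. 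Assembling these three identifications yields the claimed one-to-one correspondence among the three sets of operations.
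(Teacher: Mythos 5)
Your proposal is correct and follows essentially the same route as the paper, which states this lemma as an observation justified only by the preceding paragraph (case analysis on whether the rotated edge involves the center~$1$, plus the correspondence with minimal jumps); your write-up simply fleshes out that sketch in full detail. The one ingredient you make explicit that the paper leaves implicit --- the characterization of $\Pi(\Star_n)$ as exactly those permutations whose entries to the right of the value~$1$ are increasing --- is a natural and correct elaboration rather than a different method, and your converse analysis of which jumps can be minimal is sound.
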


We therefore obtain Gray codes for partial permutations, in which any two consecutive partial permutations differ only by one such operation.

\begin{corollary}
\label{cor:partial}
Algorithm~J generates a Gray code on partial permutations of~$[n-1]$, in which any two consecutive partial permutations differ either by an adjacent transposition, or a deletion or insertion of a trailing element.
\end{corollary}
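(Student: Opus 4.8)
The plan is to assemble the corollary directly from the three ingredients established immediately above, since each is exactly the piece we need. First I would recall that $\Pi(\Star_n)$ is a zigzag language, which was just proved. Hence Theorem~\ref{thm:jump} applies: starting from $\pi_0=\ide_n$, Algorithm~J visits every permutation of $\Pi(\Star_n)$ exactly once, and by the description of the algorithm each successive permutation is obtained from the previous one by a minimal jump. (Note that $\ide_n\in\Pi(\Star_n)$, since it is the image under $\sigma$ of the elimination tree obtained by removing the center~$1$ first.)

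Next I would transport this enumeration from permutations to partial permutations. Recall that we established a one-to-one correspondence between elimination trees for $\Star_n$ and partial permutations of $[n-1]$ (reading the handle of the broom from the root to the parent of~$1$ and subtracting~$1$), and that the map $\sigma$ is a bijection from $\cE(\Star_n)$ onto $\Pi(\Star_n)$. Composing these two bijections yields a bijection between $\Pi(\Star_n)$ and the partial permutations of $[n-1]$. Consequently, the sequence produced by Algorithm~J on $\Pi(\Star_n)$ induces a listing of \emph{all} partial permutations of $[n-1]$, each occurring exactly once.

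Finally I would invoke the preceding lemma, which identifies minimal jumps in $\Pi(\Star_n)$ with precisely the three elementary moves on the associated partial permutations: adjacent transpositions, and deletions or insertions of a trailing element. Since any two permutations visited consecutively by Algorithm~J differ by a single minimal jump, the two partial permutations they encode differ by exactly one of these three moves. This is precisely the Gray-code property asserted, and combined with the exhaustiveness from the previous paragraph it proves the corollary.

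The proof is a straightforward synthesis, so I do not anticipate a substantial obstacle; the genuine content lives in the two preceding lemmas (the zigzag property and the three-way correspondence). The only point meriting a line of care is to check that the composite bijection really makes a single minimal jump correspond to \emph{exactly one} elementary partial-permutation move, rather than to a combination of several—but this is exactly what the correspondence lemma supplies, so invoking it settles the matter.
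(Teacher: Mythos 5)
Your proposal is correct and follows essentially the same route as the paper, which treats Corollary~\ref{cor:partial} as an immediate consequence of the zigzag-language lemma for~$\Pi(\Star_n)$, Theorem~\ref{thm:jump}, and the lemma establishing the three-way correspondence between tree rotations, minimal jumps, and the elementary moves on partial permutations. Your added check that $\ide_n\in\Pi(\Star_n)$ (as the image of the broom with empty handle, obtained by removing the center first) is a reasonable point of care that the paper leaves implicit.
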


These Gray codes for $n\leq 5$ are given in Table~\ref{tab:partial}.
The Hamilton paths obtained on the three- and four-dimensional stellohedron are illustrated in Figure~\ref{fig:stello}.

\section{Elimination forests for chordal graphs}
\label{sec:chordal}

We now derive Algorithm~R for generating all elimination forests for a chordal graph by tree rotations.
Unit interval graphs and stars are also chordal, therefore everything we say will be a generalization of the two special cases discussed in the previous section.
We first analyze properties of elimination forests for chordal graphs, then proceed to define a zigzag language of permutations for them, and we then apply the framework from Section~\ref{sec:zigzag}.

\subsection{Chordal graph basics}

Recall that a graph is \emph{chordal} if every induced cycle has length three.
Also recall that given a graph~$G=(V,E)$, a \emph{perfect elimination ordering}, or PEO for short, is a linear ordering of the vertices in~$V$ such that every vertex~$x$ induces a clique together with its neighbors in~$G$ that come before~$x$ in the ordering.
A useful characterization of chordal graphs is given by the following result due to Fulkerson and Gross.

\begin{lemma}[\cite{MR186421}]
\label{lem:chordal-PEO}
A graph~$G$ is chordal if and only if it has a PEO.
\end{lemma}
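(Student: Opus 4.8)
The statement to prove is Lemma~\ref{lem:chordal-PEO} (Fulkerson--Gross): a graph $G$ is chordal if and only if it has a perfect elimination ordering. This is a classical result, so the plan is to give the standard two-direction argument, with the substantive work concentrated in the harder direction.

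\textbf{The easy direction (PEO $\Rightarrow$ chordal).}
First I would assume $G$ has a PEO and show every induced cycle has length three. Suppose for contradiction that $C$ is an induced cycle of length at least four in~$G$. Let $x$ be the vertex of~$C$ that comes \emph{earliest} in the PEO. Then $x$ has exactly two neighbors on~$C$, say $u$ and $v$, and both $u$ and $v$ come after $x$ in the ordering. Since $x$ is a PEO vertex, its neighbors appearing after it must be pairwise adjacent among those lying before each other --- more precisely, the PEO condition applied at the last of $\{u,v\}$ forces $u$ and $v$ to be adjacent. But $u$ and $v$ are at distance two along the induced cycle~$C$, so the edge $\{u,v\}$ would be a chord, contradicting that $C$ is induced of length~$\geq 4$. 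Hence no such cycle exists and $G$ is chordal. I should state the PEO condition carefully so that it is the \emph{earlier} neighbors that form a clique (matching the paper's convention "its neighbors in $G$ that come before $x$"), and then pick $x$ to be the \emph{last} vertex of $C$ in the ordering rather than the earliest; its two cycle-neighbors both precede it, so they must be adjacent, giving the chord.

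\textbf{The hard direction (chordal $\Rightarrow$ PEO).}
This is the main obstacle and the heart of the proof. The standard route is induction on $n=|V(G)|$, and the key lemma I would need is that \emph{every chordal graph on at least two vertices has a simplicial vertex} --- a vertex whose neighborhood induces a clique. Given this, the inductive step is immediate: remove a simplicial vertex $v$, note that $G-v$ is again chordal (any induced cycle in $G-v$ is an induced cycle in $G$), obtain a PEO of $G-v$ by induction, and append $v$ at the \emph{end}; the simpliciality of $v$ guarantees its earlier neighbors (all of $N(v)$) form a clique, so the extended ordering is a PEO of~$G$. So the real content is proving existence of a simplicial vertex. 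The cleanest self-contained argument proves the stronger Dirac statement that a non-complete chordal graph has \emph{two} non-adjacent simplicial vertices, which makes the induction go through: take a minimal vertex separator $S$ (a minimal set whose removal disconnects two vertices $a,b$), let $A$ and $B$ be the components of $G-S$ containing $a$ and $b$; one shows by induction that each of $G[A\cup S]$ and $G[B\cup S]$ contains a simplicial vertex not in~$S$, using chordality to argue that every vertex of $S$ has a neighbor in each component (by minimality of~$S$) and that $S$ itself is a clique (otherwise two non-adjacent vertices of $S$, together with paths through $A$ and through $B$, would yield an induced cycle of length $\geq 4$).

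\textbf{Where the difficulty concentrates.}
The genuinely delicate step is the claim that a minimal separator $S$ induces a clique. I would prove it by contradiction: if $x,y\in S$ are non-adjacent, then by minimality of $S$ each of $x,y$ has a neighbor in both $A$ and $B$, so there exist a shortest $x$--$y$ path through $A$ and a shortest $x$--$y$ path through $B$, with all internal vertices in $A$ respectively $B$; concatenating them gives a cycle, and minimality of the chosen paths together with $A\cap B=\emptyset$ forces this cycle to be induced of length at least four, contradicting chordality. I expect most of the write-up length to go into making this separator argument precise. An acceptable and shorter alternative, if I wanted to avoid separators, is to run the induction directly on "every chordal graph has a simplicial vertex" by a similar separator argument, or simply to cite Dirac's theorem; but since the paper presents this as a named foundational lemma, giving the full separator-based proof of the existence of a simplicial vertex, followed by the one-line induction appending it at the end of the ordering, is the approach I would take.
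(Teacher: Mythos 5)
Your proof is correct, but there is nothing in the paper to compare it against: the paper states this lemma purely as a citation to Fulkerson and Gross and gives no proof of either direction (it only proves the related Lemma~\ref{lem:smaller-child}, and it separately cites Dirac's simplicial-vertex theorem as Lemma~\ref{lem:simplicial}). Your argument is the standard one: for the forward direction your \emph{corrected} choice of $x$ as the \emph{last} cycle vertex in the ordering is indeed the one that matches the paper's convention that the \emph{earlier} neighbors of a vertex form a clique (your first formulation, applying the condition at the later of $u,v$, does not work since $u$ is not known to be a neighbor of $v$); for the converse, appending a simplicial vertex at the end of a PEO of $G-v$ is exactly consistent with that convention, and your minimal-separator proof that every non-complete chordal graph has two non-adjacent simplicial vertices is the classical Dirac argument, with the clique property of minimal separators handled correctly via the two internally disjoint shortest paths through the two components.
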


In what follows, we consider a chordal graph~$G=([n],E)$, where the ordering $1,2,\ldots,n$ is a PEO of~$G$.
We then say that $G=([n],E)$ is a PEO graph.
Clearly, if $G=([n],E)$ is a PEO graph, then the subgraph~$G^{[\nu]}$ of~$G$ induced by the first $\nu$ vertices is also a PEO graph for all $\nu=0,\ldots,n$.
We establish the following characterization of PEOs.

\begin{lemma}
\label{lem:smaller-child}
A graph~$G=([n],E)$ is a PEO graph if and only if for all elimination forests for~$G$, every vertex $j\in [n]$ has at most one child that is smaller than~$j$.
\end{lemma}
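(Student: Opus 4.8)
The plan is to prove both implications directly, relying on two standard facts about elimination forests together with one auxiliary property of perfect elimination orderings. The two facts I would record first are: (i) the subgraph induced by the vertex set of any subtree of an elimination tree is connected, and (ii) every edge of~$G$ joins two vertices that are comparable in the ancestor relation of the elimination forest. Both follow by a straightforward induction on the recursive definition of elimination trees. Writing $H$ for the subgraph induced by the subtree rooted at a vertex~$j$, fact~(ii) shows that the neighbors of~$j$ contained in its subtree are exactly the neighbors of~$j$ that are descendants of~$j$, and fact~(i) shows that every connected component of~$H-j$ contains at least one neighbor of~$j$.

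The auxiliary property I would isolate is that in a PEO graph, with $1<2<\cdots<n$ a PEO, the largest vertex on any induced path is one of its two endpoints: if the maximum were an internal vertex~$v$, its two path-neighbors would both be smaller than~$v$ and would therefore form a clique with~$v$ by the defining property of a PEO, making them adjacent and contradicting that the path is induced. From this I would deduce the key claim: if a component~$A$ of~$H-j$ has all of its neighbors of~$j$ larger than~$j$, then every vertex of~$A$ is larger than~$j$. For the contrapositive, if~$A$ contained a vertex smaller than~$j$, then by connectivity of~$A\cup\{j\}$ there is an induced path from~$j$ to that vertex inside~$A\cup\{j\}$; its first edge reaches a neighbor of~$j$ in~$A$, which by assumption is larger than~$j$, so the maximum of this induced path is strictly interior, contradicting the auxiliary property.

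With these in hand, the forward implication proceeds as follows. Fix any elimination forest and any vertex~$j$. By the PEO property, the set of neighbors of~$j$ that are smaller than~$j$ is a clique; since vertices lying in distinct components of~$H-j$ are non-adjacent (as $H$ is induced), this clique is contained in a single component~$A^*$ of~$H-j$. Every other component then has all its neighbors of~$j$ larger than~$j$, so by the key claim it consists entirely of vertices larger than~$j$, and in particular its root, which is the corresponding child of~$j$, is larger than~$j$. Hence only~$A^*$ can supply a child smaller than~$j$, and it supplies exactly one, namely its root; thus~$j$ has at most one smaller child.

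For the converse I argue by contraposition. If $1,\dots,n$ is not a PEO, then some vertex~$j$ has two non-adjacent neighbors~$a,b<j$, so $\{a,j,b\}$ induces a path with~$j$ in the middle. I would then exhibit an elimination forest in which~$j$ has both~$a$ and~$b$ as children by choosing any elimination ordering that removes all vertices of~$V\setminus\{a,j,b\}$ first, then~$j$, then~$a$ and~$b$: after the first phase the remaining graph is exactly this induced path, so the subtree rooted at~$j$ equals~$\{a,j,b\}$, and removing~$j$ splits it into the components~$\{a\}$ and~$\{b\}$, making~$a$ and~$b$ two children of~$j$ that are both smaller than~$j$. I expect the main obstacle to be the key claim, equivalently the auxiliary induced-path property, since this is precisely where chordality enters: it rules out a component attaching to~$j$ only through larger vertices while still containing smaller ones, and it fails for non-chordal graphs. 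The remaining ingredients, namely the clique-in-a-single-component observation and the explicit construction for the converse, are routine.
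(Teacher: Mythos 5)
Your proof is correct, and the backward direction is essentially identical to the paper's (the same explicit elimination ordering that removes $V\setminus\{a,j,b\}$ first, then $j$, then $a$ and $b$). The forward direction is organized differently. The paper argues by contradiction from two smaller children $i,k$ of $j$: it takes a shortest path from $i$ to $k$ inside the induced component $H$ containing $j$, which must pass through $j$, and claims via Lemmas~\ref{lem:no-peak} and~\ref{lem:smaller-child-j} that the two path-neighbors of $j$ are smaller than $j$ and non-adjacent, contradicting simpliciality of $j$. You instead localize the clique of smaller neighbors of $j$ into a single component $A^*$ of $H-j$ and prove the stronger structural claim that every other component lies entirely above $j$, using the fact that in a PEO graph the maximum of an induced path is an endpoint. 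Both arguments rest on the same chordality mechanism (no ``peak'' on an induced or shortest path), but yours applies it to the \emph{maximum} vertex of the path and therefore genuinely uses that every vertex is simplicial. This is a real advantage: the paper's intermediate Lemma~\ref{lem:smaller-child-j} assumes only that the single vertex $j$ is simplicial, and in that weaker setting the asserted conclusion ``$i'<j$ and $k'<j$'' does not follow from Lemma~\ref{lem:no-peak} (for the path with edges $\{1,4\},\{4,3\},\{3,2\}$ the vertex $3$ is simplicial, yet the elimination tree rooted at $3$ with children $1$ and $2$ gives $3$ two smaller children; the argument is rescued in the paper's applications only because all vertices larger than $j$ are also simplicial there). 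So your route is slightly longer — it needs the extra key claim about whole components lying above $j$ — but it is self-contained and avoids this delicacy entirely.
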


To prove this lemma and others later, we will use the following two auxiliary lemmas.
A vertex of a graph~$G=([n],E)$ is called \emph{simplicial}, if it induces a clique together with its smaller neighbors in~$G$.
By definition, $G$ is a PEO graph if and only if all of its vertices are simplicial.

\begin{lemma}
\label{lem:no-peak}
Let $G=([n],E)$ be a graph with a simplicial vertex~$j\in[n]$, and let $H$ be an induced subgraph of~$G$ containing~$j$.
Then along any shortest path between two vertices of~$H$, there is no triple of vertices~$i,j,k$ in that order with $j>i,k$.
\end{lemma}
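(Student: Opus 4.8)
Looking at Lemma~\ref{lem:no-peak}, I need to prove that along any shortest path in an induced subgraph $H$ containing a simplicial vertex $j$, there is no consecutive triple $i,j,k$ where $j$ is larger than both neighbors $i$ and $k$.

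\textbf{The approach.} The plan is to argue by contradiction. Suppose a shortest path $P$ in $H$ between two vertices contains three consecutive vertices $i, j, k$ (in that order along $P$) with $j > i$ and $j > k$. Since $i$ and $k$ are both neighbors of $j$ on the path, they are neighbors of $j$ in $H$, and hence in $G$ (as $H$ is induced). The key observation is that $i < j$ and $k < j$, so both $i$ and $k$ are \emph{smaller} neighbors of $j$ in $G$. Because $j$ is simplicial, its smaller neighbors induce a clique together with $j$; in particular $i$ and $k$ must be adjacent in $G$.

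\textbf{Deriving the contradiction.} First I would note that $i \neq k$ (they are distinct vertices on the path). Since $H$ is an induced subgraph and both $i,k \in V(H)$, the edge $\{i,k\}$ of $G$ is also an edge of $H$. But then $i$ and $k$ are adjacent in $H$, which means the subpath $i, j, k$ of length two can be shortcut by the single edge $\{i,k\}$. This produces a strictly shorter walk between the two endpoints of $P$, contradicting the assumption that $P$ is a shortest path. Formally, replacing the segment $i, j, k$ of $P$ by the edge $\{i,k\}$ yields a walk (and hence a path, after removing any repeated vertices) between the same endpoints that is shorter by at least one edge, so $P$ was not shortest.

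\textbf{Main obstacle.} The argument is short, and the only subtlety I anticipate is a careful reading of what ``simplicial'' gives us: the definition states that a simplicial vertex $j$ induces a clique together with its \emph{smaller} neighbors in $G$. The crucial step is recognizing that the ordering condition $j > i$ and $j > k$ is exactly what makes $i$ and $k$ \emph{smaller} neighbors, so that the clique property applies and forces $\{i,k\} \in E$. If instead $j$ were smaller than one of its path-neighbors, the simpliciality of $j$ would give us nothing, which is precisely why the statement only forbids the ``peak'' configuration with $j$ largest. I would take care to state explicitly that $i$ and $k$ are neighbors of $j$ in $G$ (using that $H$ is induced) before invoking simpliciality, since this is where the induced-subgraph hypothesis on $H$ is used.
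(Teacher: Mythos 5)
Your proposal is correct and follows exactly the same route as the paper's (one-sentence) proof: simpliciality of~$j$ together with $j>i,k$ forces the edge~$\{i,k\}$ to lie in~$H$, yielding a shortcut that contradicts the shortest-path assumption. Your version just spells out the details (why $i,k$ are smaller neighbors of~$j$ in~$G$, and where the induced-subgraph hypothesis is used) more explicitly.
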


\begin{proof}
If there was such a triple, then $j$ is assumed to be simplicial and $j>i,k$, the edge~$\{i,k\}$ must be in~$H$, so the path is not a shortest path, a contradiction.
\end{proof}

\begin{lemma}
\label{lem:smaller-child-j}
Let $G=([n],E)$ be a graph with a simplicial vertex~$j\in[n]$.
Then in every elimination forest for~$G$, the vertex~$j$ has at most one child that is smaller than~$j$.
\end{lemma}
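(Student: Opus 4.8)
The plan is to prove Lemma~\ref{lem:smaller-child-j} directly, by contradiction, using Lemma~\ref{lem:no-peak} to forbid the configuration of two small children. Suppose that in some elimination forest $F$ for $G$, a simplicial vertex $j$ has two distinct children $i$ and $k$ with $i<j$ and $k<j$. First I would recall what it means combinatorially for $i$ and $k$ to be children of $j$ in $F$: both lie in the subtree rooted at $j$ but in distinct subtrees (connected components obtained after deleting the vertices on the path from the root down to and including $j$). The key structural fact about elimination trees I would invoke is that if two vertices $u,v$ lie in different subtrees of their common ancestor $j$, then every path between $u$ and $v$ in the induced subgraph $H$ (the subgraph induced by the subtree rooted at $j$) must pass through $j$ itself. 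In particular, since $i$ and $k$ are children of $j$ in different components of $H-j$, any path from $i$ to $k$ in $H$ must use the vertex~$j$.

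Next I would produce a short path from $i$ to $k$ that exhibits the forbidden triple. Take a shortest path $P$ between $i$ and $k$ in the induced subgraph $H$. As just argued, $P$ must contain $j$, so $P$ passes through $j$ at some interior position; because $i$ and $k$ are the endpoints, there is a triple of consecutive (or at least suitably ordered) vertices $i',j,k'$ appearing along $P$ in that order. The essential point is that $j>i$ and $j>k$ by assumption, and I would want the two neighbors of $j$ on $P$ to also be smaller than~$j$, so that $j$ appears as a strict ``peak'' along the shortest path. This is exactly the configuration ruled out by Lemma~\ref{lem:no-peak}, which states that along any shortest path in an induced subgraph containing the simplicial vertex $j$, there is no triple $i',j,k'$ in that order with $j>i',k'$.

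The main obstacle, and the step I would treat most carefully, is verifying that the two neighbors of $j$ along the shortest path $P$ are indeed both smaller than~$j$, so that Lemma~\ref{lem:no-peak} applies. This is not automatic from $i<j$ and $k<j$ alone, since the neighbors of $j$ on $P$ need not be $i$ and $k$ themselves. I would argue this using the elimination-tree structure: the vertices of $H$ lying on the $i$-side and $k$-side subtrees of $j$, together with the ancestors, are precisely the vertices eliminated after $j$ would be (equivalently, they are descendants of $j$ or $j$ itself in $F$). Since $j$ is the root of the subtree $H$ and $i,k$ are its children in distinct components, I would show that any path inside $H$ from $i$ to $k$ reaching $j$ does so through vertices that are descendants of $j$, hence larger than the relevant endpoints is not what I need --- rather I would pick the shortest path so that it stays within $H$ and observe that the two subtrees of $j$ containing $i$ and $k$ consist of vertices whose only connection route to each other is through~$j$. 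If a neighbor of $j$ on $P$ were larger than~$j$, it would be an ancestor of $j$ in $F$, contradicting that $P$ is confined to the subtree $H$ rooted at~$j$; thus both neighbors of $j$ on $P$ are descendants of $j$ and the simpliciality of $j$ (Lemma~\ref{lem:no-peak}) forces the chord that shortcuts the path, yielding the contradiction. This completes the proof, establishing that $j$ can have at most one smaller child.
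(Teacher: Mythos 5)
Your overall strategy (shortest path between the two small children inside the subgraph $H$ induced by the subtree rooted at~$j$, the path must pass through~$j$, then use simpliciality to shortcut it) is the same as the paper's, and you have correctly identified the crux: one must show that the two neighbors of~$j$ on the shortest path~$P$ are both smaller than~$j$, which does not follow from $i,k<j$ alone. Unfortunately, your resolution of this point is wrong. You claim that a neighbor of~$j$ on~$P$ that is larger than~$j$ ``would be an ancestor of~$j$ in~$F$''; this conflates the numerical order on the labels with the ancestor--descendant order in the elimination tree. The subtree of~$F$ rooted at~$j$ (equivalently, the component~$H$ of $G-U$ containing~$j$, where $U$ is the set of ancestors of~$j$) may well contain vertices with labels larger than~$j$, and such a vertex can be the neighbor of~$j$ on~$P$. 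Concretely, let $G$ be the path with edges $\{1,4\},\{4,3\},\{3,2\}$. The vertex $j=3$ is simplicial (its only smaller neighbor is~$2$), yet the elimination tree rooted at~$3$ whose subtrees are rooted at~$1$ (with child~$4$) and at~$2$ gives~$3$ the two smaller children $1$ and~$2$; here the shortest path from $1$ to $2$ in~$H=G$ is $1,4,3,2$, and the neighbor $4$ of~$3$ on it is larger than~$3$. So the step you flagged cannot be closed under the stated hypothesis: the lemma really needs the additional assumption that every vertex larger than~$j$ is also simplicial (which holds wherever it is invoked, namely in Lemma~\ref{lem:smaller-child}, where all vertices are simplicial, and in Lemmas~\ref{lem:project} and~\ref{lem:clean-rots-j}, where the simplicial vertices are $s,\dots,n$ and $j\geq s$). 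To your credit, the paper's own justification ``by Lemma~\ref{lem:no-peak} we have $i'<j$ and $k'<j$'' glosses over the very same point---Lemma~\ref{lem:no-peak} asserts precisely the negation of that conjunction, not the conjunction itself.

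Under the strengthened hypothesis the gap is closed differently from what you propose: let $m$ be the maximum-labeled vertex on the shortest path~$P$ from $i$ to~$k$ in~$H$. Since $j$ lies on~$P$ and $j>i,k$, we have $m\geq j$, so $m$ is an interior vertex of~$P$ and both of its neighbors on~$P$ are smaller than~$m$. As $m$ is simplicial, these two neighbors are adjacent in~$G$, hence in the induced subgraph~$H$, so $P$ can be shortcut, contradicting its minimality. I recommend you restate the lemma with the extra hypothesis and rewrite the final step of your argument along these lines.
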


\begin{proof}
Suppose for the sake of contradiction that there is an elimination forest for~$G$ with an elimination tree~$T$ containing the vertex~$j$ and two children~$i$ and~$k$ with $j>i,k$.
Let~$U$ be the set of ancestors of~$j$ in~$T$.
This means that the graph~$G-U$ contains a component~$H$ with the vertex~$j$, and removing~$j$ from~$H$ leaves at least two smaller components, one containing~$i$ and one containing~$k$.
Let $i'$ and $k'$ be the neighbors of~$j$ that lie on a shortest path from~$i$ to~$k$ in~$H$.
By Lemma~\ref{lem:no-peak} we have~$i'<j$ and $k'<j$, and there is no edge between~$i'$ and~$k'$ in~$H$, contradicting the assumption that~$j$ is simplicial.
\end{proof}

\begin{proof}[Proof of Lemma~\ref{lem:smaller-child}]
First suppose that $G=([n],E)$ is a PEO graph.
As every vertex~$j\in[n]$ is simplicial in~$G$, Lemma~\ref{lem:smaller-child-j} yields that in every elimination forest for~$G$, every vertex~$j\in[n]$ has at most one child that is smaller than~$j$.

To prove the other direction, suppose that $G=([n],E)$ is not a PEO graph.
Then there are vertices $i,j,k$ with $j>i,k$ and $\{j,i\},\{j,k\}\in E$ and $\{i,k\}\notin E$.
We consider the elimination forest that arises from the following elimination order:
We first remove all vertices from $[n]\setminus \{i,j,k\}$ in arbitrary order, and then the vertices $j,i,k$ in that order.
Clearly, this creates an elimination tree in which the vertex~$j$ has precisely the two smaller children~$i$ and~$k$.
\end{proof}

\subsection{Deletion and insertion in elimination forests}
\label{sec:del-ins}

Given a PEO graph~$G=([n],E)$, Lemma~\ref{lem:smaller-child} implies that vertex~$n$ has at most one child in any elimination forest~$F$ for~$G$.
This leads us to define natural deletion and insertion operations on~$F$, which will allow us to recursively generate all elimination forests~$\cE(G)$ by rotations.

We first define \emph{deletion}.
Specifically, the forest $p(F)$ on the vertex set~$[n-1]$ is obtained from~$F$ as follows; see Figure~\ref{fig:delete}.
\begin{enumerate}[label=(\alph*),leftmargin=6mm, noitemsep, topsep=3pt plus 3pt]
\item If $n$ is the root of a tree~$T$ with at least two vertices in~$F$, then by Lemma~\ref{lem:smaller-child} it has exactly one child.
Then $p(F)$ is obtained by removing~$n$ from~$T$ and making its only child the new root.
\item If $n$ has both a parent and a child in a tree~$T$ of~$F$, then by Lemma~\ref{lem:smaller-child} it has exactly one child.
Then $p(F)$ is obtained by removing $n$ from~$T$ and connecting its parent to its child.
\item If $n$ is a leaf of a tree in~$F$ or an isolated vertex, then $p(F)$ is obtained by removing~$n$.
\end{enumerate}

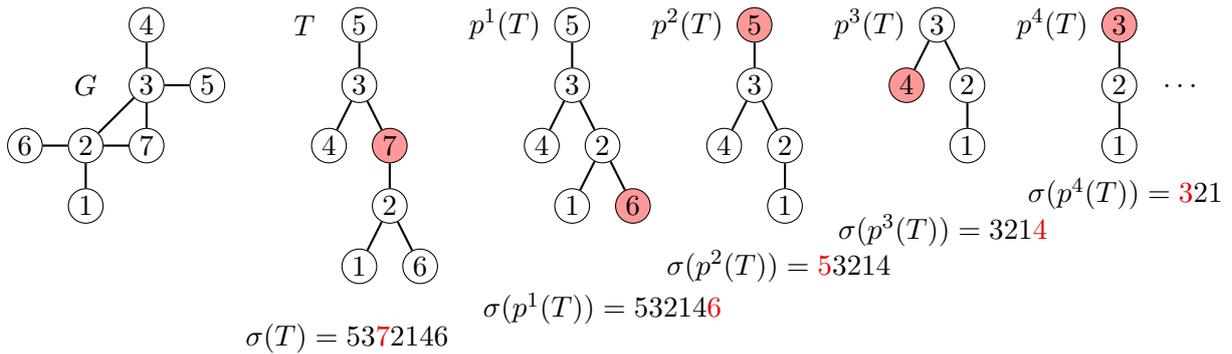
\begin{figure}[h!]
\centering
\begin{tikzpicture}[scale=.8,auto,swap]
  \foreach \pos/\name in {{(1,2)/6}, {(2,1)/1}, {(3,2)/7}, {(3,4)/4}, {(4,3)/5}, {(2,2)/2}, {(3,3)/3}} \node[vertex] (\name) at \pos {$\name$};
  \foreach \source/\dest in {6/2,1/2,2/3,2/7,7/3,3/4,3/5} \path[edge] (\source) -- (\dest);
  \node at (2,3) {$G$};

\begin{scope}[xshift=4.5cm,yshift=-1cm]
  \foreach \pos/\name in {{(2,1)/1},{(2.5,2)/2},{(2,4)/3},{(1.5,3)/4},{(2,5)/5},{(3,1)/6}} \node[vertex] (\name) at \pos {$\name$};
  \node[vertex, fill=red!40] (7) at (2.5,3) {$7$};
  \foreach \source/\dest in {5/3,3/4,3/7,7/2,2/1,2/6} \path[edge] (\source) -- (\dest);
  \node at (1.1,5) {$T$};
  \node at (1.8,-0.2) {$\sigma(T)=53{\red 7}2146$};
\end{scope}

\begin{scope}[xshift=8cm,yshift=-1cm]
  \foreach \pos/\name in {{(2,2)/1},{(2.5,3)/2},{(2,4)/3},{(1.5,3)/4},{(2,5)/5}} \node[vertex] (\name) at \pos {$\name$};
  \node[vertex, fill=red!40] (6) at (3,2) {$6$};
  \foreach \source/\dest in {5/3,3/4,3/2,2/1,2/6} \path[edge] (\source) -- (\dest);
  \node at (0.9,5) {$p^1(T)$};
  \node at (2.5,0.3) {$\sigma(p^1(T))=\linebreak 53214{\red 6}$};
\end{scope}

\begin{scope}[xshift=11cm,yshift=-1cm]
  \foreach \pos/\name in {{(2.5,2)/1},{(2.5,3)/2},{(2,4)/3},{(1.5,3)/4}} \node[vertex] (\name) at \pos {$\name$};
  \node[vertex, fill=red!40] (5) at (2,5) {$5$};
  \foreach \source/\dest in {5/3,3/4,3/2,2/1} \path[edge] (\source) -- (\dest);
  \node at (0.9,5) {$p^2(T)$};
  \node at (2.4,1) {$\sigma(p^2(T))={\red 5}3214$};
\end{scope}

\begin{scope}[xshift=14cm,yshift=-1cm]
  \foreach \pos/\name in {{(2.5,3)/1},{(2.5,4)/2},{(2,5)/3}} \node[vertex] (\name) at \pos {$\name$};
  \node[vertex, fill=red!40] (4) at (1.5,4) {$4$};
  \foreach \source/\dest in {2/1,3/2,3/4} \path[edge] (\source) -- (\dest);
  \node at (0.9,5) {$p^3(T)$};
  \node at (2.1,1.6) {$\sigma(p^3(T))=321{\red 4}$};
\end{scope}

\begin{scope}[xshift=17cm,yshift=-1cm]
  \foreach \pos/\name in {{(2,3)/1},{(2,4)/2}} \node[vertex] (\name) at \pos {$\name$};
  \node[vertex, fill=red!40] (3) at (2,5) {$3$};
  \foreach \source/\dest in {2/1,3/2} \path[edge] (\source) -- (\dest);
  \node at (0.9,5) {$p^4(T)$};
  \node at (2.1,2.2) {$\sigma(p^4(T))={\red 3}21$};
  \node at (3,4) {$\ldots$};
\end{scope}
\end{tikzpicture}
\caption{Repeated deletions in an elimination tree~$T$ for~$G$, and the corresponding permutations.}
\label{fig:delete}
\end{figure}

\begin{lemma}
\label{lem:delete}
The forest~$p(F)$ is an elimination forest for~$G-n$.
Furthermore, for every elimination forest~$F'$ for~$G-n$, there exists an elimination forest~$F$ for~$G$ such that $F'=p(F)$.
\end{lemma}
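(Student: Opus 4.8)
The plan is to reduce everything to elimination orderings and to exploit a single structural fact about the vertex~$n$: since $1,2,\ldots,n$ is a PEO, the vertex~$n$ is simplicial, so its neighborhood is a clique, and consequently $n$ is \emph{not a cut vertex} of any induced subgraph of~$G$ that contains it. Indeed, if a vertex~$v$ lies in the same component as~$n$ after deleting some set of vertices, then a shortest path from~$v$ to~$n$ reaches~$n$ through a neighbor of~$n$, and all neighbors of~$n$ are pairwise adjacent; hence deleting~$n$ leaves the rest of its component connected. This is exactly the reason~$n$ has at most one child (Lemma~\ref{lem:smaller-child}), and it will be the engine of the whole argument.

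The core claim I would prove is the following: for \emph{any} elimination ordering~$\pi$ of~$F$ (i.e.\ any linear extension of~$F$), the permutation~$\pi^-\in S_{n-1}$ obtained by deleting the entry~$n$ from~$\pi$ is an elimination ordering for~$G-n$, and the elimination forest it produces is exactly~$p(F)$. Since $G-n=G^{[n-1]}$ is again a PEO graph and every permutation is a valid elimination ordering, the first part of the lemma follows immediately from this claim. To establish it I would run the two elimination processes in parallel: $\pi$ on~$G$ and $\pi^-$ on~$G-n$. Using the cut-vertex property, at every step the remaining graphs $G[\{a_i,\ldots,a_n\}]$ and $G[\{a_i,\ldots,a_n\}\setminus\{n\}]$ have the same connected components apart from the presence of~$n$ in one of them. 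Therefore each vertex other than~$n$ is removed from the same component and receives the same parent in both processes; the only difference is the fate of~$n$ itself.

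It remains to check that removing~$n$ reattaches the surviving vertices exactly as prescribed by cases~(a)--(c). When $n$ is removed in the $\pi$-process it is the root of the subtree spanned by its current component~$C$, and $C\setminus\{n\}$ is connected by the cut-vertex property, so $n$ has a unique child~$c$ (or none, if $C=\{n\}$), where $c$ is the first vertex of~$C\setminus\{n\}$ removed after~$n$. Since all of~$C$ is still present at the moment~$n$ is removed, this very same vertex~$c$ is also the first vertex of~$C\setminus\{n\}$ to be removed in the $\pi^-$-process. Hence: if $C$ was a full component with $|C|\ge 2$ (case~(a)) then $c$ becomes the new root of that subtree; if $C$ was created by removing the parent~$par$ of~$n$ and $|C|\ge 2$ (case~(b)) then $c$ becomes a child of~$par$; and if $C=\{n\}$ (case~(c)) then $n$ simply disappears. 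These are precisely the three rules defining~$p(F)$.

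Finally, for surjectivity, given any elimination forest~$F'$ for~$G-n$ I would take an elimination ordering $\rho\in S_{n-1}$ of~$F'$ and append~$n$, forming $\pi:=\rho\,n\in S_n$; let $F$ be the elimination forest it produces for~$G$. Deleting~$n$ from~$\pi$ returns~$\rho$, so the core claim gives $p(F)=F'$, as required. I expect the main obstacle to be the bookkeeping in the third paragraph: one must verify carefully that the parent/child reassignment in case~(b) is correct, i.e.\ that the subtree rooted at~$c$ remains a single piece attached to the same ancestor~$par$ once $n$ is spliced out, and this is exactly what the cut-vertex property guarantees.
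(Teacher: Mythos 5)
Your proof is correct, and it rests on exactly the same key fact as the paper's: since $n$ is last in a PEO, its neighborhood is a clique, so $n$ is never a cut vertex of the induced subgraphs arising during elimination, and deleting $n$ therefore changes only the one component containing it, which stays connected. Where you differ is in the bookkeeping. The paper exploits this fact ``locally'' in the tree: it walks down the ancestor path $x_1,\ldots,x_k$ of $n$ and matches the components of $G-\{x_1,\ldots,x_{i-1}\}$ with those of $(G-n)-\{x_1,\ldots,x_{i-1}\}$, then treats the subtree below $n$'s unique child separately; all other parent--child relations are untouched and need no argument. You instead pass to linear extensions and run the elimination processes for $\pi$ and $\pi^-$ in parallel, tracking every vertex's parent uniformly. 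Your version is slightly more systematic --- it makes explicit that the result is independent of the chosen linear extension of $F$, and it handles the reattachment in cases (a)--(c) by one uniform observation about the first vertex of $C\setminus\{n\}$ removed after $n$ --- at the cost of having to state carefully that ``receives the same parent'' excludes the one vertex whose parent in $F$ is $n$, which you do address in your third paragraph. Finally, for surjectivity you append $n$ to an elimination ordering of $F'$, making $n$ a leaf (so $p$ undoes it via rule (c)), whereas the paper prepends $n$, making it the root (rule (a)); both choices give valid preimages.
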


\begin{proof}
Consider the elimination forest~$F$ for~$G$, and consider the tree~$T$ in~$F$ containing~$n$.
Let $x_1,\ldots,x_k$ be the sequence of ancestors of~$n$ in~$T$, starting at the root~$x_1$ of~$T$ and ending at the parent~$x_k$ of~$n$.
For every $i=1,\ldots,k$, consider the component~$H$ of~$G-\{x_1,\ldots,x_{i-1}\}$ containing~$x_i,\ldots,x_k$ and~$n$.
By removing~$x_i$ from~$H$, this component is split into smaller components~$H_1,\ldots,H_\ell$, one of them, $H_1$ say, containing~$x_{i+1},\ldots,x_k$ and~$n$ ($\ell$ is the number of children of~$x_i$ in~$T$).
As the neighborhood of~$n$ in~$G$ is a clique, $H-n$ is one connected component of~$(G-n)-\{x_1,\ldots,x_{i-1}\}$, and removing~$x_i$ from~$H-n$ creates precisely the components~$H_1-n,H_2,\ldots,H_\ell$.

If $n$ is neither an isolated vertex in~$F$ nor a leaf of~$T$, then let~$i$ be the unique child of~$n$ in~$T$ and consider the component~$H$ of~$G-\{x_1,\ldots,x_k\}$ containing~$n$ and~$i$.
We know that $H-n$ is one connected component containing~$i$, which has as an elimination tree the subtree of~$T$ rooted at~$i$.
Consequently, the subtree of~$T$ rooted at~$i$ is an elimination tree for the connected component~$H-n$ of~$(G-n)-\{x_1,\ldots,x_k\}$.

Combining the previous two observations shows that~$p(F)$ is indeed an elimination forest for~$G-n$, which proves the first part of the lemma.

To prove the second part of the lemma, let $F'$ be an elimination forest for~$G-n$.
If $n$ is an isolated vertex of~$G$, then $F$ is obtained from~$F'$ by adding the isolated vertex~$n$, and clearly we have $F'=p(F)$.
On the other hand, if $n$ is not isolated in~$G$, then let~$T$ be the tree of~$F'$ containing the neighbors of~$n$ in~$G$.
Then $F$ is obtained from~$F'$ by making $n$ the new root of~$T$ (in the corresponding elimination ordering for~$G$, we first remove~$n$ to obtain~$G-n$).
\end{proof}

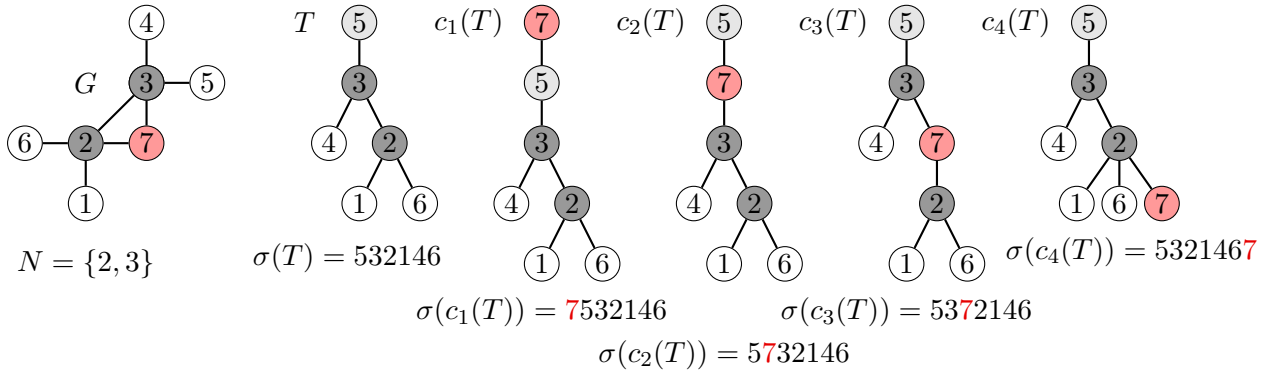
\begin{figure}[b!]
\centering
\begin{tikzpicture}[scale=.8,auto,swap]
  \foreach \pos/\name in {{(1,2)/6}, {(2,1)/1}, {(3,4)/4}, {(4,3)/5}, {(3,3)/3}} \node[vertex] (\name) at \pos {$\name$};
  \node[vertex, fill=black!40] (2) at (2,2) {$2$};
  \node[vertex, fill=black!40] (3) at (3,3) {$3$};
  \node[vertex, fill=red!40] (7) at (3,2) {$7$};
  \foreach \source/\dest in {6/2,1/2,2/3,2/7,7/3,3/4,3/5} \path[edge] (\source) -- (\dest);
  \node at (2,3) {$G$};
  \node at (2,0) {$N=\{2,3\}$};

\begin{scope}[xshift=4.5cm,yshift=0cm]
  \foreach \pos/\name in {{(2,1)/1},{(1.5,2)/4},{(3,1)/6}} \node[vertex] (\name) at \pos {$\name$};
  \node[vertex, fill=black!10] (5) at (2,4) {$5$};
  \node[vertex, fill=black!40] (2) at (2.5,2) {$2$};
  \node[vertex, fill=black!40] (3) at (2,3) {$3$};
  \foreach \source/\dest in {5/3,3/4,3/2,2/1,2/6} \path[edge] (\source) -- (\dest);
  \node at (1.1,4) {$T$};
  \node at (1.8,0.1) {$\sigma(T)=532146$};
\end{scope}

\begin{scope}[xshift=7.5cm,yshift=-1cm]
  \foreach \pos/\name in {{(2,1)/1},{(1.5,2)/4},{(3,1)/6}} \node[vertex] (\name) at \pos {$\name$};
  \node[vertex, fill=black!10] (5) at (2,4) {$5$};
  \node[vertex, fill=black!40] (2) at (2.5,2) {$2$};
  \node[vertex, fill=black!40] (3) at (2,3) {$3$};
  \node[vertex, fill=red!40] (7) at (2,5) {$7$};
  \foreach \source/\dest in {5/3,3/4,5/7,3/2,2/1,2/6} \path[edge] (\source) -- (\dest);
  \node at (0.8,5) {$c_1(T)$};
  \node at (2,0.2) {$\sigma(c_1(T))={\red 7}532146$};
\end{scope}

\begin{scope}[xshift=10.5cm,yshift=-1cm]
  \foreach \pos/\name in {{(2,1)/1},{(1.5,2)/4},{(3,1)/6}} \node[vertex] (\name) at \pos {$\name$};
  \node[vertex, fill=black!10] (5) at (2,5) {$5$};
  \node[vertex, fill=black!40] (2) at (2.5,2) {$2$};
  \node[vertex, fill=black!40] (3) at (2,3) {$3$};
  \node[vertex, fill=red!40] (7) at (2,4) {$7$};
  \foreach \source/\dest in {7/3,3/4,5/7,3/2,2/1,2/6} \path[edge] (\source) -- (\dest);
  \node at (0.8,5) {$c_2(T)$};
  \node at (2,-0.5) {$\sigma(c_2(T))=5{\red 7}32146$};
\end{scope}

\begin{scope}[xshift=13.5cm,yshift=-1cm]
  \foreach \pos/\name in {{(2,1)/1},{(1.5,3)/4},{(3,1)/6}} \node[vertex] (\name) at \pos {$\name$};
  \node[vertex, fill=black!10] (5) at (2,5) {$5$};
  \node[vertex, fill=black!40] (2) at (2.5,2) {$2$};
  \node[vertex, fill=black!40] (3) at (2,4) {$3$};
  \node[vertex, fill=red!40] (7) at (2.5,3) {$7$};
  \foreach \source/\dest in {5/3,3/4,3/7,7/2,2/1,2/6} \path[edge] (\source) -- (\dest);
  \node at (0.8,5) {$c_3(T)$};
  \node at (2,0.2) {$\sigma(c_3(T))=53{\red 7}2146$};
\end{scope}

\begin{scope}[xshift=16.5cm,yshift=0cm]
  \foreach \pos/\name in {{(1.8,1)/1},{(1.5,2)/4},{(2.5,1)/6}} \node[vertex] (\name) at \pos {$\name$};
  \node[vertex, fill=black!10] (5) at (2,4) {$5$};
  \node[vertex, fill=black!40] (2) at (2.5,2) {$2$};
  \node[vertex, fill=black!40] (3) at (2,3) {$3$};
  \node[vertex, fill=red!40] (7) at (3.2,1) {$7$};
  \foreach \source/\dest in {5/3,3/4,3/2,2/1,2/6,2/7} \path[edge] (\source) -- (\dest);
  \node at (0.8,4) {$c_4(T)$};
  \node at (2.7,0.2) {$\sigma(c_4(T))=532146{\red 7}$};
\end{scope}
\end{tikzpicture}
\caption{Insertion in an elimination tree~$T$ for $G-n$, and the corresponding permutations.
The neighbors $N=\{2,3\}$ of~7 in~$G$ are darkgray.
We have $\lambda=3$ and the insertion path in~$T$ is $(x_1,x_2,x_3)=(5,3,2)$, where the vertices on this path that are not in~$N$ are lightgray.
All elimination trees $c_i(T)$, $i=1,\ldots,\lambda+1$, for~$G$ are shown from left to right.}
\label{fig:insert}
\end{figure}

We now define \emph{insertion}; see Figure~\ref{fig:insert}.
We distinguish two cases.
If $n$ is an isolated vertex in~$G$, then for any elimination forest~$F$ of~$G-n$ we define~$\lambda:=0$ and we let $c_1(F)$ be the forest obtained as the disjoint union of~$F$ with the (isolated) vertex~$n$.
If $n$ is not isolated in~$G$, then we consider the set~$N$ of neighbors of~$n$ in~$G$, and for any elimination forest~$F$ for $G-n$, we consider the tree~$T$ in~$F$ that contains the vertices of~$N$.
As $N$ is a clique in~$G$, any two of these vertices are in an ancestor-descendant relation in~$T$.
We let $(x_1,\ldots,x_\lambda)$, $\lambda=\lambda(F)$, be the path in~$T$ starting at the root and ending at the deepest node from~$N$, which we refer to as \emph{insertion path}.
Clearly, we have $N\seq \{x_1,\ldots,x_\lambda\}$ and $x_\lambda\in N$, but there may also be vertices $x_i\notin N$.
For any $1\leq i\leq \lambda+1$ we define $c_i(F)$ as follows:
\begin{enumerate}[label=(\alph*),leftmargin=6mm, noitemsep, topsep=3pt plus 3pt]
\item If $i=1$, then $c_i(F)$ is obtained from~$F$ by making~$n$ the new root of~$T$.
\item If $2\leq i\leq \lambda$, then $c_i(F)$ is obtained from~$F$ by subdividing the edge between~$x_{i-1}$ and~$x_i$ of~$T$ by the vertex~$n$.
\item If $i=\lambda+1$, then $c_i(F)$ is obtained from~$F$ by making~$n$ a leaf of~$x_\lambda$ in~$T$.
\end{enumerate}

The following lemma asserts that the operations of deletion and insertion in elimination trees are inverse to each other, and that in the sequence of elimination forests $c_i(F)$, $i=1,\ldots,\lambda+1$, any two consecutive forests differ in a tree rotation.

\begin{lemma}
\label{lem:insert}
For any $1\leq i\leq \lambda+1$, the forest~$c_i(F)$ is an elimination forest for~$G$ and we have $p(c_i(F))=F$, and for every elimination forest~$F'$ for~$G$ with $p(F')=F$ there is an index~$i$ such that $c_i(F)=F'$.
Moreover, the vertex~$n$ is in depth~$i-1$ in~$c_i(F)$, and $c_i(F)$ and~$c_{i+1}(F)$ differ in a rotation of the edge~$\{x_i,n\}$.
\end{lemma}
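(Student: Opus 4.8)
The plan is to verify the four assertions in the order stated, exploiting throughout that the neighborhood~$N$ of~$n$ is a clique (since $n$ is simplicial, being the last vertex of the PEO), so that in any elimination forest the vertices of~$N\cup\{n\}$ lie on a single root-to-node path, and that $n$ has at most one child by Lemma~\ref{lem:smaller-child}. Only the tree~$T$ meeting~$N$ is affected by insertion; its vertex set together with~$n$ is exactly the component of~$G$ containing~$n$, because all neighbors of~$n$ lie in~$V(T)$, and all other trees of~$F$ remain untouched and thus stay valid elimination trees for their components of~$G$.

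First I would show that each $c_i(F)$ is a genuine elimination forest for~$G$. The guiding idea is that $n$ \emph{travels with} its deepest neighbor~$x_\lambda$: after deleting $x_1,\ldots,x_{t-1}$ the vertex~$n$ stays in the same component as $x_t,\ldots,x_\lambda$, since $n\sim x_\lambda$. I would then check the recursive elimination-tree property along the path. For $i=1$, removing~$n$ leaves the connected set~$V(T)$ rooted at~$x_1$. For $2\le i\le\lambda$, after deleting $x_1,\ldots,x_{i-1}$ the component of~$n$ is $V(\mathrm{subtree}_T(x_i))\cup\{n\}$, and deleting~$n$ from it leaves the single connected component $V(\mathrm{subtree}_T(x_i))$ rooted at~$x_i$. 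For $i=\lambda+1$, all of $N\seq\{x_1,\ldots,x_\lambda\}$ has already been removed, so $n$ becomes a childless leaf of~$x_\lambda$. In each case the modified tree satisfies the recursive definition, so $c_i(F)\in\cE(G)$, and the same analysis shows $n$ sits at depth $i-1$ with unique child~$x_i$ (for $i\le\lambda$), which gives the depth claim.

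Next, $p(c_i(F))=F$ follows by matching the three cases of deletion against those of insertion: deleting the root~$n$, contracting~$n$ between $x_{i-1}$ and~$x_i$, or removing the leaf~$n$ each precisely undoes the corresponding insertion and restores~$T$. For surjectivity I would take any $F'\in\cE(G)$ with $p(F')=F$, let $d$ be the depth of~$n$ in~$F'$, and prove $F'=c_{d+1}(F)$. Since $N\cup\{n\}$ is a clique, the neighbors of~$n$ split into ancestors and descendants lying on the ancestor chain through~$n$; deleting~$n$ turns this chain into the unique root-to-$x_\lambda$ path of~$T$, which is exactly the insertion path. Hence the ancestors of~$n$ are forced to be $x_1,\ldots,x_d$ and its child (if any) to be~$x_{d+1}$, which is precisely the configuration of $c_{d+1}(F)$; here one also argues $d\le\lambda$, since the parent of~$n$ must be the vertex whose removal isolates~$n$'s component and cannot lie strictly below the deepest neighbor~$x_\lambda$. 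Together with the distinctness of $c_1(F),\ldots,c_{\lambda+1}(F)$ (they place~$n$ at distinct depths), this identifies the fiber of~$p$ over~$F$ with exactly these $\lambda+1$ forests.

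Finally, for the rotation claim I would check directly against the definition of an edge rotation. In $c_i(F)$ the vertex~$n$ is the parent of~$x_i$ and the child of~$x_{i-1}$; rotating the edge $\{x_i,n\}$ makes $x_i$ the child of~$x_{i-1}$ and the parent of~$n$. The only subtlety is the reassignment of the subtrees of~$x_i$: with~$H$ the subgraph induced by the subtree rooted at~$n$, the component of $H-x_i$ containing~$n$ is the one reached through~$x_\lambda$, namely the subtree containing~$x_{i+1}$ (for $i<\lambda$) or nothing (for $i=\lambda$). Thus exactly that subtree migrates to~$n$, making~$x_{i+1}$ the new child of~$n$ while the remaining subtrees of~$x_i$ stay put, which is verbatim the description of $c_{i+1}(F)$. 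I expect the surjectivity step to be the main obstacle, as it is the only place that genuinely uses the clique structure of~$N$ to rule out alternative placements of~$n$ and to bound its depth; the remaining parts are careful but routine case checks driven by the ``$n$ travels with~$x_\lambda$'' principle.
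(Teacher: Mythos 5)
Your proposal is correct and follows essentially the same route as the paper: interpreting each $c_i(F)$ via the elimination ordering in which $n$ is removed just before $x_i$ (or first, or last), and using that $N$ is a clique so that $N\cup\{n\}$ lies on a single ancestor chain. The paper's own proof is far terser --- it merely exhibits these elimination orderings and leaves the claims $p(c_i(F))=F$, surjectivity, the depth statement, and the rotation verification implicit --- so your write-up supplies exactly the component-tracking details (in particular the surjectivity argument forcing the ancestors of $n$ onto the insertion path) that the paper omits.
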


\begin{proof}
The elimination orderings that yield~$c_i(F)$ are as follows:
If $i=1$, we first remove $n$, and then the remaining vertices as given by~$F$.
If $i=\lambda+1$, we remove the vertices as given by~$F$, and the vertex~$n$ last.
In this case $n$ becomes a child of~$x_\lambda$ in the corresponding elimination tree, as $n$ becomes an isolated vertex after removal of~$x_\lambda$.
If $2\leq i\leq \lambda$, we remove the vertex~$n$ after~$x_{i-1}$ and before~$x_i$.
\end{proof}

For any PEO graph~$G=([n],E)$, Lemmas~\ref{lem:delete} and~\ref{lem:insert} give rise to a tree~$\fT(G)$ on the sets of elimination forests~$\cE(G^{[\nu]})$ for $\nu=0,\ldots,n$; see Figure~\ref{fig:tree}.
Specifically, the tree~$\fT(G)$ has the empty graph~$\emptyset$ as a root, and for $\nu=1,\ldots,n$, the set of children of any elimination forest~$F\in\cE(G^{[\nu-1]})$ is precisely the set $\{c_i(F)\in\cE(G^{[\nu]})\mid 1\leq i\leq \lambda(F)+1\}$.
Conversely, the parent of each~$F\in\cE(G^{[\nu]})$, $\nu=1,\ldots,n$, is $p(F)\in\cE(G^{[\nu-1]})$.
By Lemma~\ref{lem:insert}, the set of nodes in depth~$\nu$ of the tree is precisely the set~$\cE(G^{[\nu]})$.
This tree is unordered, i.e., there is no ordering associated with the children of each elimination forest.
However, we will use Algorithm~J to generate the set of all elimination forests~$\cE(G^{[\nu]})$ for $\nu=0,\ldots,n$ in a particular order, and this will equip the unordered tree with an ordering of the nodes in each level; see Figure~\ref{fig:Jtree}.

\begin{figure}
\centering
\includegraphics[page=1]{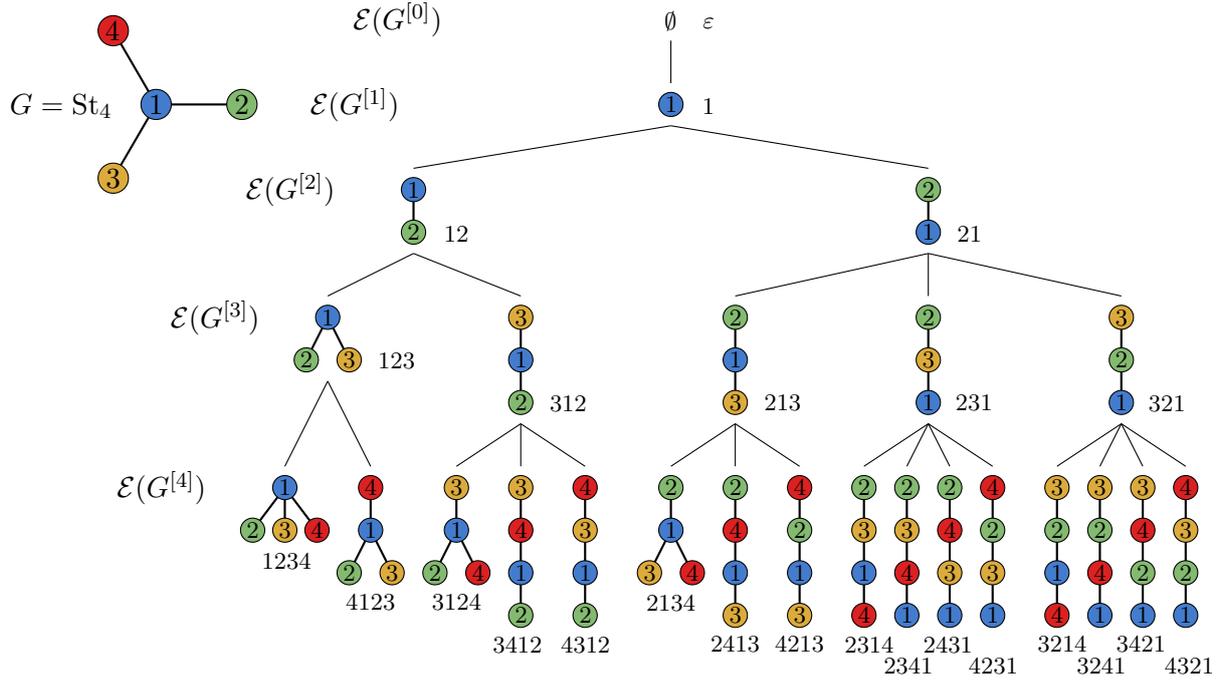}
\caption{The tree~$\fT(G)$ of all elimination trees for~$G^{[\nu]}$ for $\nu=0,\ldots,n$ for $G=\Star_4$.
Next to each elimination tree~$T$ is the permutation~$\sigma(T)$, which is a linear extension of~$T$.}
\label{fig:tree}
\end{figure}

\subsection{Proof of Theorem~\ref{thm:jump-elim}}
\label{sec:zigzag-chordal}

Generalizing the approach from Section~\ref{sec:stars}, we now define a mapping~$\sigma:\cE(G)\rightarrow S_n$ from the set of all elimination forests for a chordal graph~$G$ to permutations of~$[n]$.
The recursive definition of this mapping will facilitate the proof that the set of permutations thus obtained is a zigzag language.

Given a PEO graph~$G=([n],E)$ and an elimination forest~$F$ for~$G$, the permutation $\sigma(F)\in S_n$ is defined recursively as follows; see Figures~\ref{fig:delete} and~\ref{fig:tree}:
If $n=0$, then for the empty elimination forest~$\emptyset$ we define $\sigma(\emptyset):=\varepsilon$, and if $n\geq 1$ we define $\pi:=\sigma(p(F))\in S_{n-1}$ and consider three cases:
\begin{enumerate}[label=(\alph*),leftmargin=6mm, noitemsep, topsep=3pt plus 3pt]
\item If $n$ is the root of a tree with at least two vertices in~$F$, then $\sigma(F):=c_1(\pi)$.
\item If $n$ has both a parent and a child~$i$ in a tree of~$F$ (recall Lemma~\ref{lem:smaller-child}), then $\sigma(F)$ is obtained by inserting $n$ right before~$i$ in~$\pi$.
\item If $n$ is a leaf of a tree in~$F$ or an isolated vertex, then $\sigma(F)=c_n(\pi)$.
\end{enumerate}
We also define $\Pi(G):=\{\sigma(F)\mid F\in\cE(G)\}$.
Observe that $\sigma(F)$ is a linear extension of~$F$.
In particular, $\sigma:\cE(G)\rightarrow \Pi(G)$ is a bijection.
Also note that the deletion operation~$p$ on elimination forests introduced in the previous section and the mapping~$\sigma$ commute, i.e., we have $\sigma(p(F))=p(\sigma(F))$ for all~$F\in\cE(G)$.

\begin{lemma}
\label{lem:chordal-zigzag}
For every PEO graph~$G=([n],E)$, the set~$\Pi(G)\seq S_n$ is a zigzag language.
\end{lemma}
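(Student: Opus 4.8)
The plan is to argue by induction on~$n$, directly verifying the recursive definition of a zigzag language from Section~\ref{sec:algoJ}. For the base case $n=0$ we have $G=\emptyset$ and $\Pi(G)=\{\varepsilon\}$, which is a zigzag language by definition. For the inductive step with $n\geq 1$, writing $L_n:=\Pi(G)$ and $L_{n-1}:=\{p(\pi)\mid \pi\in L_n\}$, it suffices to establish two things: first, that $L_{n-1}=\Pi(G^{[n-1]})$, which is a zigzag language by the induction hypothesis; and second, that $L_n$ satisfies condition~(z1) or~(z2).

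For the first point I would use the commuting relation $\sigma(p(F))=p(\sigma(F))$ noted just before the lemma, together with the surjectivity of deletion from Lemma~\ref{lem:delete}. Concretely, since $G^{[n-1]}=G-n$ and $p$ maps $\cE(G)$ onto $\cE(G-n)$, we obtain $L_{n-1}=\{p(\sigma(F))\mid F\in\cE(G)\}=\{\sigma(p(F))\mid F\in\cE(G)\}=\{\sigma(F')\mid F'\in\cE(G^{[n-1]})\}=\Pi(G^{[n-1]})$. As $G^{[n-1]}$ is again a PEO graph, the induction hypothesis applies and $L_{n-1}$ is a zigzag language.

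For the second point I would split on whether the vertex~$n$ is isolated in~$G$. If $n$ is isolated, then for each $F\in\cE(G^{[n-1]})$ there is a unique preimage under~$p$, namely the insertion $c_1(F)$ adding~$n$ as an isolated vertex, and case~(c) of the definition of~$\sigma$ gives $\sigma(c_1(F))=c_n(\sigma(F))$; hence $L_n=\{c_n(\pi)\mid \pi\in L_{n-1}\}$, which is exactly~(z2). If $n$ is not isolated, then its neighborhood~$N$ is a nonempty clique, so for each $F\in\cE(G^{[n-1]})$ the insertion path has length $\lambda\geq 1$, and the two extreme insertions $c_1(F)$ (making~$n$ the root of the tree containing~$N$) and $c_{\lambda+1}(F)$ (making~$n$ a deepest leaf) are both elimination forests for~$G$ by Lemma~\ref{lem:insert}. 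Unwinding the recursive definition of~$\sigma$ — case~(a) applies to $c_1(F)$ since its root~$n$ lies in a tree with at least two vertices, and case~(c) applies to $c_{\lambda+1}(F)$ since~$n$ is a leaf there — and using $p(c_i(F))=F$ from Lemma~\ref{lem:insert}, I would show $\sigma(c_1(F))=c_1(\sigma(F))$ and $\sigma(c_{\lambda+1}(F))=c_n(\sigma(F))$. Thus both $c_1(\pi)$ and $c_n(\pi)$ lie in $L_n$ for every $\pi\in L_{n-1}$, establishing~(z1).

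The main obstacle is the bookkeeping in this last step: correctly matching the forest-level insertion operations $c_i(F)$ to the permutation-level insertions $c_1$ and $c_n$ by tracing through the three-case recursive definition of~$\sigma$. One must be careful that ``$n$ at the root of a nontrivial tree'' really forces permutation case~(a) (value~$n$ first) and ``$n$ a leaf'' forces case~(c) (value~$n$ last), which is precisely where the hypothesis $N\neq\emptyset$ is used to guarantee the root tree is nontrivial. Once these two identities are verified, both the isolated and non-isolated cases fall out immediately, and no analysis of the intermediate insertions $c_i(F)$ with $2\leq i\leq\lambda$ is needed for the zigzag property itself.
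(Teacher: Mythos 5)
Your proposal is correct and follows essentially the same route as the paper's proof: induction via the identity $\{p(\pi)\mid\pi\in L_n\}=\Pi(G^{[n-1]})$ (using $\sigma\circ p=p\circ\sigma$ and Lemma~\ref{lem:delete}), then a case split on whether the vertex~$n$ is isolated, matching the extreme insertions $c_1(F)$ and $c_{\lambda+1}(F)$ to $c_1(\pi)$ and $c_n(\pi)$ via cases~(a) and~(c) of the definition of~$\sigma$ to verify~(z1), and using case~(c) alone to verify~(z2). Your extra remark that $N\neq\emptyset$ is what forces the root tree to be nontrivial (so that case~(a) rather than case~(c) applies to $c_1(F)$) is a correct and slightly more explicit justification of a step the paper treats as immediate.
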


\begin{proof}
We define $L_\nu:=\Pi(G^{[\nu]})$ for $\nu=0,\ldots,n$, and we argue by induction on~$\nu$ that~$L_\nu$ is a zigzag language.
The base case $\nu=0$ is trivial.
For $\nu\geq 1$, from the definition of the mapping~$\sigma$ and Lemma~\ref{lem:delete} we have $\{p(\pi)\mid \pi\in L_\nu\}=\{p(\sigma(F))\mid F\in\cE(G^{[\nu]})\}=\{\sigma(p(F))\mid F\in\cE(G^{[\nu]})\}=\{\sigma(F')\mid F'\in\cE(G^{[\nu-1]})\}=L_{\nu-1}$, and we know by induction that~$L_{\nu-1}$ is a zigzag language.

We first consider the case that $\nu$ is not an isolated vertex in~$G^{[\nu]}$, i.e., it is contained in a component with its neighbors.
Let $F'$ be the elimination forest for~$G^{[\nu]}-\nu=G^{[\nu-1]}$, let~$T$ be the elimination tree in~$F'$ containing these neighbors, let $x$ be the neighbor of~$\nu$ in~$G^{[\nu]}$ that is deepest in~$T$, and let $\pi\in S_{\nu-1}$ be such that $\sigma(F')=\pi$.
Then making $\nu$ the new root of~$T$ yields an elimination forest~$F$ for~$G^{[\nu]}$ with $\sigma(F)=c_1(\pi)$ by part~(a) in the definition of~$\sigma$, and making $\nu$ a leaf of~$x$ in~$T$ yields an elimination forest~$F$ with $\sigma(F)=c_\nu(\pi)$ by part~(c) in the definition of~$\sigma$.
It follows that $c_1(\pi)\in L_\nu$ and $c_\nu(\pi)\in L_\nu$ for all $\pi\in L_{\nu-1}$, so $L_\nu$ is a zigzag language by condition~(z1).

It remains to consider the case that $\nu$ is an isolated vertex in~$G^{[\nu]}$.
In this case $\nu$ is an isolated vertex in every elimination forest~$F$ for~$G^{[\nu]}$, and consequently $L_\nu=\{c_\nu(\pi)\mid \pi\in L_{\nu-1}\}$ by part~(c) in the definition of~$\sigma$, so $L_\nu$ is a zigzag language by condition~(z2).

This completes the proof of the lemma.
\end{proof}

By Theorem~\ref{thm:jump} and Lemma~\ref{lem:chordal-zigzag}, we can thus apply Algorithm~J to generate all permutations from~$\Pi(G)$.
We now show that under the bijection~$\sigma:\cE(G)\rightarrow \Pi(G)$, the preimages of minimal jumps in permutations performed by Algorithm~J correspond to tree rotations in elimination forests for~$G$.

The following result was proved in~\cite{MR4598046}.
We say that a jump of a value~$j$ in a permutation~$\pi\in S_n$ is \emph{clean}, if for every $k=j+1,\ldots,n$, the value~$k$ is either to the left or right of all values smaller than~$k$ in~$\pi$.

\begin{lemma}[{\cite[Lemma~24~(d)]{MR4598046}}]
\label{lem:clean-jumps}
For any zigzag language $L_n\seq S_n$, all jumps performed by Algorithm~J are clean.
\end{lemma}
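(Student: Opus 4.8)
The plan is to induct on $n$, exploiting the recursive description of the output sequence $J(L_n)$ given in~\eqref{eq:JLn12}. For $n\le 1$ there are no nonvacuous jumps and the statement is trivial, so I would assume $n\ge 2$ and that every jump performed by Algorithm~J on the zigzag language $L_{n-1}$ is clean. Writing $J(L_{n-1})=\pi_1,\pi_2,\ldots$, I would classify each jump in $J(L_n)$ according to whether it moves the largest value~$n$ or some smaller value~$j<n$.

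First, any jump of the value~$n$ is clean by definition, since the index range $k=n+1,\ldots,n$ over which extremeness must be checked is empty. It therefore suffices to control jumps of values $j<n$. By the recursion~\eqref{eq:JLn1} (condition~(z1)), these occur precisely at the boundaries between consecutive blocks $\lvec{c}(\pi_i)$ and $\rvec{c}(\pi_{i+1})$ (or $\rvec{c}(\pi_i)$ and $\lvec{c}(\pi_{i+1})$): within a single block only~$n$ moves, while at a boundary the value~$n$ sits at a fixed extreme position---leftmost (position~$1$) when passing through~$c_1$, rightmost (position~$n$) when passing through~$c_n$---and the remaining entries change from~$\pi_i$ to~$\pi_{i+1}$. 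In the case of~\eqref{eq:JLn2} (condition~(z2)) the value~$n$ stays in position~$n$ throughout, and every jump is of this second type. In either case, after deleting the extremely placed~$n$, the jump of~$j<n$ is exactly the jump $\pi_i\to\pi_{i+1}$ performed by Algorithm~J on~$L_{n-1}$, which is clean by the induction hypothesis.

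It then remains to lift cleanness from~$L_{n-1}$ to~$L_n$. I would fix such a boundary jump of~$j<n$, performed from a permutation $\rho\in L_n$ whose smaller entries form~$\pi_i$. For $k=n$, the value~$n$ is placed at an extreme position of~$\rho$, so all smaller values lie on one side of it, as required. For each~$k$ with $j<k<n$, the induction hypothesis applied to the clean jump $\pi_i\to\pi_{i+1}$ guarantees that~$k$ is extreme in~$\pi_i$; since~$\rho$ arises from~$\pi_i$ by inserting the strictly larger value~$n$, the subpermutation of~$\rho$ induced by the values $\{1,\ldots,k\}$ coincides with that of~$\pi_i$, so~$k$ remains extreme in~$\rho$. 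Hence all $k>j$ are extreme in~$\rho$ and the jump is clean, completing the induction. The one point requiring care is the structural claim that jumps of values $j<n$ in~$J(L_n)$ correspond bijectively to the jumps $\pi_i\to\pi_{i+1}$ of~$J(L_{n-1})$ and always take place with~$n$ frozen at an extreme position; once this is read off from~\eqref{eq:JLn12}, the preservation of extremeness under insertion of the larger value~$n$ is immediate.
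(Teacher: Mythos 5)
Your argument is correct: jumps of the value~$n$ are vacuously clean, and every other jump occurs at a block boundary of the recursion~\eqref{eq:JLn12} with $n$ frozen at the first or last position, so cleanness lifts from the inductively clean jump $\pi_i\to\pi_{i+1}$ in $L_{n-1}$ because prepending or appending~$n$ preserves the relative order of all smaller values. Note that the present paper does not prove this lemma itself but imports it from~\cite{perm_series_iii}, where the argument is essentially the same induction on~$n$ based on the recursive description of~$J(L_n)$.
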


We say that a rotation of an edge~$\{i,j\}$, $i<j$, is \emph{clean} if for all $k=j+1,\ldots,n$, the vertices smaller than~$k$ in its elimination tree are either all descendants of~$k$, or none of them is a descendant of~$k$.

\begin{lemma}
\label{lem:jump-rot}
Let $G=([n],E)$ be a PEO graph.
Clean minimal jumps of values~$j\in[n]$ in~$\Pi(G)$ are in one-to-one correspondence with clean rotations of edges~$\{i,j\}$, $i<j$, in~$\cE(G)$.
Moreover, every minimal jump of a value~$j\in[n]$ in~$\Pi(G)$ corresponds to the rotation of an edge~$\{i,j\}$, $i<j$, in~$\cE(G)$.
\end{lemma}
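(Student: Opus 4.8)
The plan is to prove the statement by induction on~$n$, exploiting the recursive definitions of the bijection~$\sigma$, the deletion map~$p$, and the insertion maps~$c_i$ from Section~\ref{sec:del-ins}, together with the insertion-path description of rotations provided by Lemma~\ref{lem:insert}. The conceptual engine is a dictionary between extremal positions in a permutation and descendant structure in the forest: for any $F\in\cE(G)$ and any $k\in[n]$, the value~$k$ precedes (resp.\ follows) all values smaller than~$k$ in~$\sigma(F)$ if and only if, inside the tree of~$F$ containing~$k$, all (resp.\ none) of the vertices smaller than~$k$ are descendants of~$k$. This is read off directly from cases~(a)--(c) in the recursive definition of~$\sigma$ applied to the restricted forest~$F^{[k]}=p^{n-k}(F)$, in which~$k$ is the largest vertex: $k$ is leftmost exactly in case~(a) (root with a child), rightmost exactly in case~(c) (leaf or isolated), and interior exactly in case~(b). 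Since a jump of~$j$ fixes the position of every value $k>j$, and a rotation of~$\{i,j\}$ fixes the descendant relations among these larger vertices, this dictionary shows that a jump of~$j$ is clean if and only if the corresponding rotation of~$\{i,j\}$ is clean, once the correspondence itself is in place.

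First I would establish the correspondence for the largest value $j=n$, the base of the induction, where no cleanliness condition is active (there is no $k>n$). By Lemma~\ref{lem:insert} the elimination forests~$F'$ with $p(F')=p(F)$ are exactly $c_1(F),\ldots,c_{\lambda+1}(F)$, consecutive ones differing by a rotation of~$\{x_i,n\}$, and~$n$ lies at depth~$i-1$ in~$c_i(F)$. Under~$\sigma$ these are precisely the permutations obtained by placing~$n$ in its valid positions (leftmost, before each successive path vertex, rightmost), and these are the \emph{only} positions yielding a permutation of~$\Pi(G)$; hence two consecutive ones differ by a minimal jump of~$n$, namely a right jump when the depth of~$n$ increases and a left jump when it decreases. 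This matches right jumps with down-rotations~$n{\dird}$ and left jumps with up-rotations~$n{\diru}$, giving a bijection for $j=n$.

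For $j<n$ I would delete the largest value and induct on~$G^{[n-1]}$, using $p(\sigma(F))=\sigma(p(F))$. The window jumped over by~$j$ consists only of values smaller than~$j$, so~$n$ lies entirely to one side of it; deleting~$n$ therefore turns a minimal jump of~$j$ in~$\sigma(F)$ into a jump of~$j$ by the same number of steps in~$\sigma(p(F))$, and re-inserting~$n$ lifts it back. For a \emph{clean} jump the dictionary forces~$n$ into an extremal position, i.e.\ $n$ is a root or a leaf/isolated vertex of~$F$; the corresponding insertion positions $c_1$ and $c_{\lambda+1}$ for~$n$ remain valid no matter how the forest is modified among the smaller vertices, so membership in $\Pi(\cdot)$ of every intermediate position is preserved in both directions. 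Thus minimality and cleanliness pass between~$L_n$ and~$L_{n-1}$, the inductive hypothesis applies to~$G^{[n-1]}$, and the clean rotation of~$\{i,j\}$ obtained for~$p(F)$ lifts to a clean rotation of~$\{i,j\}$ for~$F$ (clean at~$n$ precisely because~$n$ is extremal). Combined with the base case this yields the bijection between clean minimal jumps and clean rotations.

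Finally, for the ``moreover'' part I would show that an arbitrary minimal jump of~$j$ — clean or not — still corresponds to a single rotation of an edge~$\{i,j\}$, by analysing the nearest valid position of~$j$ directly: the jumped window contains only vertices smaller than~$j$, and the nearest forest in~$\cE(G)$ reachable by relocating~$j$ with all other vertices fixed differs from~$F$ by exactly one rotation of an edge~$\{i,j\}$, as in the insertion-path analysis of Lemma~\ref{lem:insert} carried out relative to the smaller vertices. The hard part will be exactly this last point: when the jump is not clean, a larger value sits in the interior (case~(b)) and does not simply commute with the relocation of~$j$, so the clean re-insertion argument of the induction no longer applies, and one must verify by hand, via the recursive structure of~$\sigma$ and the deletion/insertion lemmas, that a minimal jump still lands on the forest obtained by a single rotation rather than by several.
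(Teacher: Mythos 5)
Your overall strategy---induction on $n$, the base case $j=n$ handled via Lemma~\ref{lem:insert} and the insertion path, deletion of the largest value for $j<n$, and the dictionary between extremal positions of a value~$k$ in~$\sigma(F)$ and the all-or-none descendant condition defining cleanliness---is exactly the route the paper takes, and that part of your argument is sound. The correspondence between clean minimal jumps and clean rotations is established the same way in the paper.

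The gap is in the ``moreover'' clause, which you explicitly leave unverified. You assume the deletion--induction ``no longer applies'' when the jump is not clean, and propose instead a direct insertion-path analysis of~$j$ relative to the smaller vertices, which you do not carry out (and which is delicate, since $j$ need not be the largest vertex of its component, so Lemma~\ref{lem:insert} does not apply to it). In fact the induction does go through: cleanliness is only needed to conclude that the \emph{resulting} rotation is clean, not that a single corresponding rotation exists. For any minimal jump of a value $j<n$, the jumped window consists of values smaller than~$j$, hence smaller than~$n$, so $n$ lies on one fixed side of the entire window---in particular on the same side of both~$j$ and~$i$---in both~$\pi$ and~$\pi'$. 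Deleting~$n$ therefore yields a minimal jump of~$j$ between $p(\pi)$ and $p(\pi')$; the inductive ``moreover'' hypothesis gives a rotation of an edge~$\{i,j\}$ between $p(F)$ and~$p(F')$; and since the attachment of~$n$ (root, subdividing vertex of the insertion path, or leaf) is determined by its position relative to values outside the window, which is identical in~$\pi$ and~$\pi'$, re-inserting~$n$ produces forests that still differ in exactly that one rotation, even when $n$ sits in an interior position of case~(b). This is how the paper closes the case you flagged as ``the hard part''; without it, the second sentence of the lemma---which is precisely what Section~\ref{sec:zigzag-chordal} needs to translate Algorithm~J into Algorithm~R---remains unproven in your write-up.
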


Complementing Lemma~\ref{lem:jump-rot}, Figure~\ref{fig:not-jump} shows a tree rotation in~$\cE(G)$ that is \emph{not} a jump in~$\Pi(G)$.
The vertex~$n=4$ is not a root or leaf of the elimination trees~$T$ or~$T'$, and while $\sigma(p(T))=123$ and~$\sigma(p(T'))=312$ differ in a minimal jump of the value~3, inserting~4 into these permutations gives permutations~$\sigma(T)=1423$ and~$\sigma(T')=3142$ that differ in a cyclic right rotation, but as $4>3$ this is not a jump.

\begin{figure}
\centering
\begin{tikzpicture}[scale=.8,auto,swap]
\begin{scope}[yshift=1cm]
\node[vertex] (4) at (0,0) {4};
\node[vertex] (2) at (1,0) {2};
\node[vertex, fill=red!20] (1) at (2,0) {1};
\node[vertex, fill=red!20] (3) at (3,0) {3};
\foreach \source/\dest in {4/2,2/1,1/3} \path[edge] (\source) -- (\dest);
\node at (-1,0) {$G$};
\end{scope}
\begin{scope}[xshift=6cm]
\node[vertex, fill=red!20] (1) at (0,2) {1};
\node[vertex] (4) at (-1,1) {4};
\node[vertex, fill=red!20] (3) at (1,1) {3};
\node[vertex] (2) at (-1,0) {2};
\foreach \source/\dest in {1/4,1/3,4/2} \path[edge] (\source) -- (\dest);
\node at (-1,2) {$T$};
\node at (0,-1) {$\sigma(T)=1423$};
\end{scope}
\begin{scope}[xshift=10cm]
\node[vertex, fill=red!20] (3) at (0,2) {3};
\node[vertex, fill=red!20] (1) at (0,1) {1};
\node[vertex] (4) at (0,0) {4};
\node[vertex] (2) at (0,-1) {2};
\foreach \source/\dest in {3/1,1/4,4/2} \path[edge] (\source) -- (\dest);
\node at (-1,2) {$T'$};
\node at (0,-2) {$\sigma(T')=3142$};
\end{scope}
\end{tikzpicture}
\caption{The rotation of the edge~$\{3,1\}$ is not a jump in the corresponding permutations.
}
\label{fig:not-jump}
\end{figure}
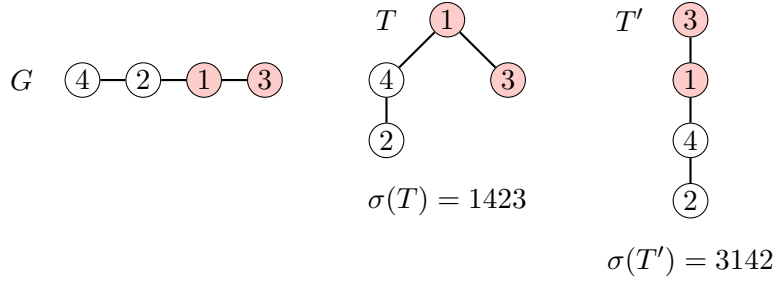

\begin{figure}
\centering
\includegraphics[page=2,scale=0.9]{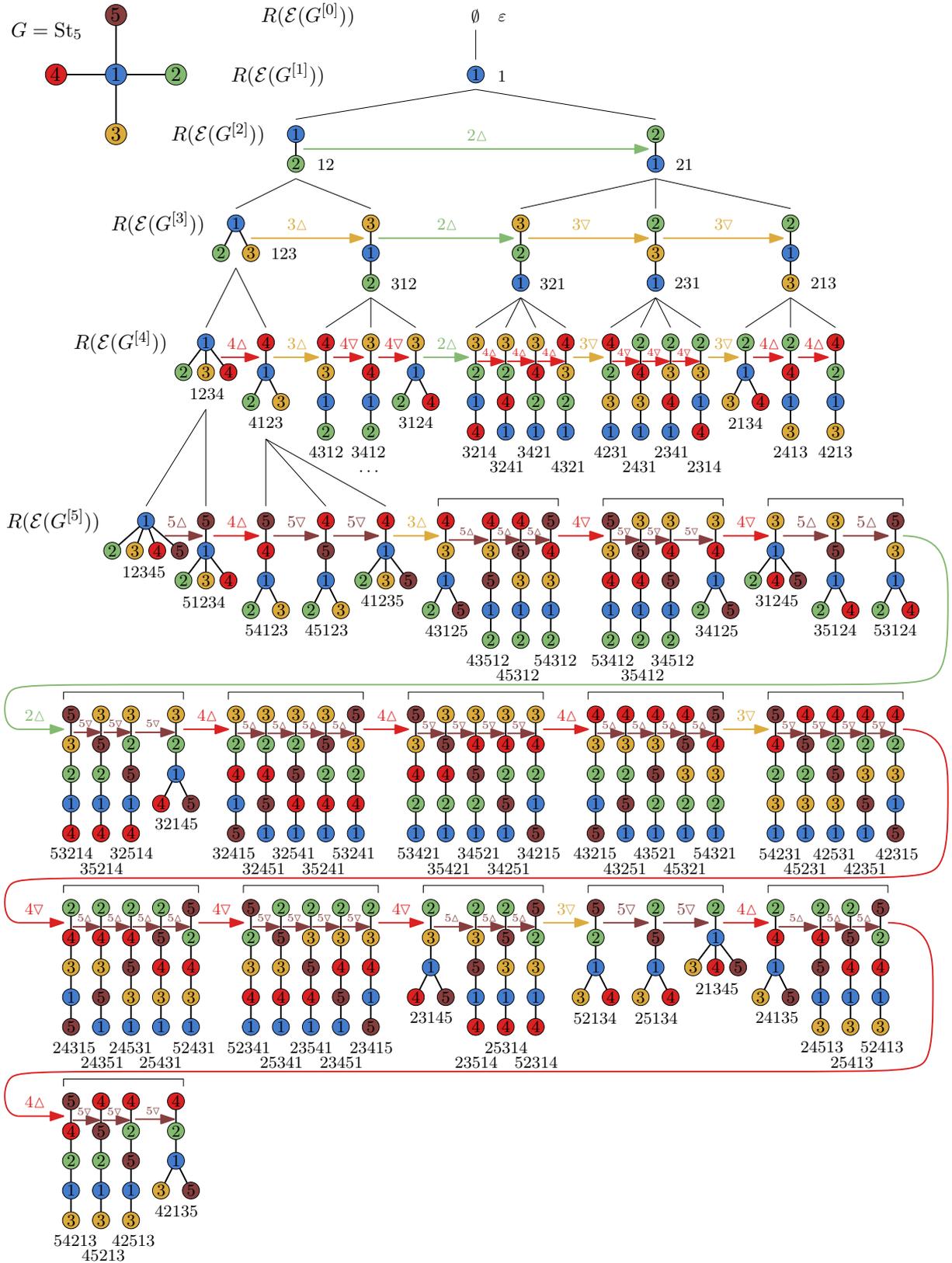}
\caption{The tree~$\fT(G)$ of all elimination trees for~$G^{[\nu]}$ for $\nu=0,\ldots,n$ for $G=\Star_5$, with the elimination trees in each depth listed in the order produced by Algorithm~R.
The bottom level nodes are split across four lines (brackets mark siblings), and for them not all parent-child relations are shown.}
\label{fig:Jtree}
\end{figure}

\begin{proof}
We argue by induction on~$\nu=0,\ldots,n$ that clean minimal jumps of values~$j\in[\nu]$ in~$\Pi(G^{[\nu]})$ are in one-to-one correspondence with clean rotations of edges~$\{i,j\}$, $i<j$, in~$\cE(G^{[\nu]})$, and every minimal jump of a value~$j\in[\nu]$ in~$\Pi(G^{[\nu]})$ corresponds to the rotation of an edge~$\{i,j\}$, $i<j$, in~$\cE(G)$.
The base case $\nu=0$ is trivial.

For the induction step we first consider two elimination forests~$F$ and~$F'$ for~$G^{[\nu]}$ such that~$\pi:=\sigma(F)\in S_\nu$ and~$\pi':=\sigma(F')\in S_\nu$ differ in a minimal jump of some value~$j\in[\nu]$.

{\bf Case (i):} We have $j=\nu$, so the jump is trivially a clean jump.
It follows that $p(\pi)=p(\pi')$ and therefore $p(F)=p(F')=:P\in\cE(G^{[\nu-1]})$.
Let $(x_1,\ldots,x_\lambda)$ be the corresponding insertion path in a tree of the elimination forest~$P$.

By the definition of~$\sigma$, the permutations $\sigma(c_i(P))$ for $i=1,\ldots,\lambda+1$ can be linearly ordered, where $\nu$ appears before $x_1$ if $i=1$, between $x_{i-1}$ and~$x_i$ if $2\leq i\leq \lambda$, and after $x_\lambda$ if $i=\lambda+1$.
As $\pi$ and~$\pi'$ differ in a minimal jump of~$\nu$, we have $\pi=\sigma(c_i(P))$ and~$\pi'=\sigma(c_{i+1}(P))$ for some $1\leq i\leq \lambda+1$.
By Lemma~\ref{lem:insert}, the elimination forests $F=c_i(P)$ and $F'=c_{i+1}(P)$ differ in a rotation of the edge~$\{x_i,\nu\}$, which is trivially a clean rotation.

{\bf Case (ii):} We have $j<\nu$.
By the definition of jumps, the value~$n$ is either to the left or to the right of~$j$ in both~$\pi$ and~$\pi'$.
Consequently, $p(\pi)$ and $p(\pi')$ differ in a minimal jump of the same value~$j$.
By induction, the elimination forests~$p(F)$ and~$p(F')$ differ in the rotation of an edge~$\{i,j\}$, $i<j$.
As $\pi$ and~$\pi'$ are linear extensions of~$F$ and~$F'$, the values~$j$ and~$i$ have different relative positions in~$\pi$ and~$\pi'$, i.e., the value~$j$ jumps over the value~$i$.
Consequently, the value~$n$ is either to the left or to the right of~$j$ and~$i$ in both~$\pi$ and~$\pi'$.
From this we obtain that~$F$ and~$F'$ differ in a rotation of the edge~$\{i,j\}$.

Now suppose that the minimal jump of the value~$j$ in the permutation~$\pi$ is clean, meaning that for every $k=j+1,\ldots,\nu$, the value~$k$ is either to the left or right of all values smaller than~$k$ in~$\pi$.
We claim that every $k=j+1,\ldots,\nu$ is either the root or a leaf of the elimination tree~$P:=p^{\nu-k}(F)$ containing~$j$.
Indeed, if $k$ is left of all values smaller than~$k$, then $k$ must be the root of~$P$, as in cases~(b) and~(c) of the definition of~$\sigma$, the value~$k$ is not at the first position of~$\sigma(P)$.
Similarly, if $k$ is right of all values smaller than~$k$, then $k$ must be a leaf of~$P$, as in cases~(a) and~(b) of the definition of~$\sigma$, the value~$k$ is not at the last position of~$\sigma(P)$.
It follows that in~$F$, the vertices smaller than~$k$ in its elimination tree are either all descendants of~$k$, or none of them is a descendant of~$k$, i.e., the rotation of the edge~$\{i,j\}$ in~$F$ is clean.

To complete the induction step, we now consider two elimination forests~$F$ and~$F'$ for~$G^{[\nu]}$ that differ in the clean rotation of an edge~$\{i,j\}$, $i<j$.
If $j=\nu$, then $p(F)=p(F')=:P\in\cE(G^{[\nu-1]})$, i.e., the permutations $\pi:=\sigma(F)$ and $\pi':=\sigma(F')$ are obtained from~$\sigma(P)$ by inserting the value~$\nu$ before and after one of the values~$x_k$, $1\leq k\leq \lambda$, where $(x_1,\ldots,x_\lambda)$ is the insertion path of~$P$, i.e., the permutations~$\pi$ and~$\pi'$ differ in a minimal jump of the value~$\nu$.

If $j<\nu$, then as the rotation is clean, the vertex~$\nu$ is the root or a leaf of its elimination tree.
Consequently, the elimination forests $p(F)$ and~$p(F')$ of~$G^{[\nu-1]}$ differ in a clean rotation of the edge~$\{i,j\}$, and by induction the permutations~$\pi:=\sigma(p(F))$ and~$\pi':=\sigma(p(F'))$ differ in a clean jump of the value~$j$.
As $\nu$ is the root or a leaf of its elimination tree, $\sigma(F)$ and~$\sigma(F')$ are obtained from~$\pi$ and~$\pi'$, respectively, by either prepending or appending~$\nu$; recall cases~(a) and~(c) of the definition of~$\sigma$.
We conclude that~$\sigma(F)$ and~$\sigma(F')$ differ in a clean minimal jump of the value~$j$.

This completes the induction step and the proof of the lemma.
\end{proof}

By Lemma~\ref{lem:jump-rot}, all minimal jumps in permutations from~$\Pi(G)$ correspond to tree rotations in elimination forests from~$\cE(G)$.
By translating Algorithm~J to operate directly on elimination forests, we thus obtain Algorithm~R presented in Section~\ref{sec:results-algo}.
In fact, we do not need to restrict Algorithm~R to clean rotations.
This is justified by the following analogue of Lemma~\ref{lem:clean-jumps}, which follows from Lemma~\ref{lem:clean-rots-j} stated and proved in Section~\ref{sec:necessity} below.

\begin{lemma}
\label{lem:clean-rots}
For any PEO graph~$G=([n],E)$, all up- or down-rotations performed by Algorithm~R are clean.
\end{lemma}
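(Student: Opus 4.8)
The plan is to import cleanness from the permutation side, where it is already known, through the bijection $\sigma\colon\cE(G)\to\Pi(G)$. By Lemma~\ref{lem:chordal-zigzag} the set $\Pi(G)$ is a zigzag language, so Algorithm~J applies to it, and Lemma~\ref{lem:clean-jumps} guarantees that every minimal jump performed by Algorithm~J on $\Pi(G)$ is clean. On the other hand, Lemma~\ref{lem:jump-rot} matches clean minimal jumps of a value~$j$ with clean rotations of an edge $\{i,j\}$, $i<j$. Hence, once we know that each rotation performed by Algorithm~R is the $\sigma^{-1}$-image of a jump performed by Algorithm~J, Lemma~\ref{lem:clean-jumps} and Lemma~\ref{lem:jump-rot} together yield that this rotation is clean, which is the assertion.

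The point that needs care is precisely this faithfulness of the simulation, i.e.\ that Algorithm~R, started from $F_0=\sigma^{-1}(\ide_n)$, rotates the largest possible vertex exactly when Algorithm~J jumps the largest possible value, including agreement of the termination and ambiguity rules of steps R2 and J2. The difficulty is that a tree rotation need not be a jump: as Figure~\ref{fig:not-jump} illustrates, rotating an edge $\{i,j\}$ while some larger vertex $k>j$ is neither a root nor a leaf of its elimination tree produces a cyclic shift over the larger value~$k$ rather than a jump of~$j$. A priori, then, the largest vertex that Algorithm~R finds rotatable to an unvisited forest could admit only such a non-clean rotation, in which case Algorithm~R would diverge from Algorithm~J. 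We must rule this out.

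I would do so by isolating, and proving by induction on the levels $\cE(G^{[\nu]})$ of the tree $\fT(G)$, the structural invariant that at the moment a vertex~$j$ is rotated, every larger vertex $k>j$ is either the root or a leaf of its elimination tree; by the definition of~$\sigma$ (cases~(a)--(c)) this is exactly the statement that in $\sigma(F)$ the value~$k$ lies to one extreme of all smaller values, which is the cleanness of the rotation. The induction mirrors the block structure of $J(L_\nu)$ in~\eqref{eq:JLn12}: within a block the new largest value~$\nu$ is itself being inserted, while between two consecutive blocks $\nu$ sits at a leftmost or rightmost position, so whenever a smaller value is moved $\nu$ is extremal; deleting $\nu$ via~$p$ (using $\sigma\circ p=p\circ\sigma$) then reduces the claim for all $k<\nu$ to the inductive hypothesis on $G^{[\nu-1]}$. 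The main obstacle is exactly this synchronization of the induction with the greedy, history-dependent execution of Algorithm~R, namely showing that the larger vertices are \emph{forced} to be extremal at every step so that the opportunity to perform a non-clean rotation never arises. This is the content that the paper extracts as the per-vertex statement of Section~\ref{sec:necessity}, from which Lemma~\ref{lem:clean-rots} then follows at once by applying it to each rotation in the execution.
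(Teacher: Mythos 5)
Your overall strategy is the paper's: one cannot simply import cleanness from Lemma~\ref{lem:clean-jumps} through Lemma~\ref{lem:jump-rot}, since that correspondence only applies once we know Algorithm~R tracks Algorithm~J, which is exactly what cleanness is needed to establish; so one must prove a per-rotation structural invariant about the larger vertices directly on the elimination-forest side, by induction along the execution. This is what the paper does via Lemma~\ref{lem:project} and Lemma~\ref{lem:clean-rots-j}. However, the invariant you state --- at the moment a vertex~$j$ is rotated, every $k>j$ is the root or a leaf of its elimination tree --- is false as written. Take $G=K_4$ and the elimination tree given by the ordering $4312$, i.e., the path $4\to 3\to 1\to 2$; Algorithm~R next up-rotates~$2$ (the jump $4312\to 4321$ in Table~\ref{tab:SJT}), and at that moment the vertex~$3$ has parent~$4$ and child~$1$, so it is neither a root nor a leaf. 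The correct invariant, and the one your inductive scheme actually delivers, is that $k$ is a root or a leaf of the \emph{projected} tree $p^{n-k}(F)$ (equivalently, $k$ is ``up-blocked'' or ``down-blocked'': its parent, if any, is larger than~$k$, or all of its children are larger than~$k$). Deleting $\nu$ via~$p$ only controls the position of $k<\nu$ inside $p(F)$, not inside~$F$, so the stronger statement cannot survive your induction.

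This weakening is not cosmetic: once the invariant only says that each $k$ is up- or down-blocked, the cleanness condition ``the vertices smaller than~$k$ are all descendants of~$k$, or none of them is'' no longer follows for each $k$ in isolation, and a second, downward induction on $k=n,n-1,\ldots,j+1$ is needed, combined with Lemma~\ref{lem:smaller-child-j}: if $k$ is up-blocked but not a root, its parent is a larger vertex that (by the outer invariant and the inner induction) already has all smaller vertices among its descendants, and the at-most-one-smaller-child property then forces them all below~$k$. This closing step of the paper's proof of Lemma~\ref{lem:clean-rots-j} is absent from your sketch. Finally, the heart of the matter --- that the greedy, history-dependent rule of step~R2 actually forces every larger vertex to be blocked whenever a smaller one is rotated --- is deferred rather than proved; the paper establishes it as Lemma~\ref{lem:project} by induction on the number of steps performed, simultaneously maintaining the companion invariant that a partially rotated largest vertex always has an unvisited forest to move to, which is what prevents the algorithm from switching to a smaller vertex mid-sweep.
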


With these lemmas at hand, we are now in position to prove Theorem~\ref{thm:jump-elim}.

\begin{proof}[Proof of Theorem~\ref{thm:jump-elim}]
Let $G=([n],E)$ be a PEO graph, and consider the mapping $\sigma:\cE(G)\rightarrow S_n$ defined at the beginning of this section and the corresponding set of permutations $\Pi(G)=\{\sigma(F)\mid F\in\cE(G)\}$.
The set $\Pi(G)\seq S_n$ is a zigzag language by Lemma~\ref{lem:chordal-zigzag}, so by Theorem~\ref{thm:jump} we can apply Algorithm~J to generate all permutations from~$\Pi(G)$.
By Lemma~\ref{lem:clean-jumps}, all jumps performed by Algorithm~J in permutations from~$\Pi(G)$ are clean.
By Lemma~\ref{lem:jump-rot}, clean jumps in~$\Pi(G)$ are in one-to-one correspondence with clean rotations in elimination forests from~$\cE(G)$.
We can thus translate Algorithm~J to operate on elimination forests, which yields Algorithm~R presented in Section~\ref{sec:results-algo}.
This algorithm as stated in principle allows rotations that are not clean, which however, never occur by Lemma~\ref{lem:clean-rots}.
Translating Theorem~\ref{thm:jump} to elimination forests thus proves Theorem~\ref{thm:jump-elim}.
The initial elimination forest~$F_0$ that is obtained by removing vertices in increasing order is $F_0=\sigma^{-1}(\ide_n)$, i.e., it corresponds to the identity permutation as the elimination order.
\end{proof}

An example output of Algorithm~R in the case where $G$ is a tree on five vertices is shown in Figure~\ref{fig:Jtree}.
We write $R(\cE(G))$ for the ordering of elimination forests generated by Algorithm~R with initial elimination forest $F_0=\sigma^{-1}(\ide_n)$.
We clearly have $R(\cE(G))=\sigma^{-1}(J(\Pi(G)))$.

By~\eqref{eq:JLn12}, the ordering~$R(\cE(G))$ of elimination forests produced by the algorithm can be described recursively as follows; see Figure~\ref{fig:Jtree}:
From the sequence~$R(\cE(G^{[n-1]}))$ of all elimination forests for~$G^{[n-1]}$ we obtain the sequence~$R(\cE(G^{[n]}))$ either by replacing each elimination forest~$F\in\cE(G^{[n-1]})$ by the sequence~$c_i(F)$ for $i=\lambda(F)+1,\lambda(F),\ldots,1$ or $i=1,\ldots,\lambda(F)+1$, alternatingly in decreasing or increasing order (in the case~\eqref{eq:JLn1}), or by adding~$n$ as an isolated vertex to each elimination forest (in the case~\eqref{eq:JLn2}). 
In this way, the sequences~$R(\cE(G^{[\nu]}))$ for $\nu=0,\ldots,n$ turn the unordered tree~$\fT(G)$ into an ordered tree, and the ordering of elimination forests on each level of the tree corresponds to a Hamilton path on the graph associahedron~$\cA(G^{[\nu]})$.

\section{Efficient implementation and proof of Theorem~\ref{thm:algo}}
\label{sec:algo}

In this section we prove Theorem~\ref{thm:algo} by presenting an efficient implementation of Algorithm~R, following the ideas sketched in Section~\ref{sec:results-impl}.
In particular, this implementation is history-free, i.e., it does not require any bookkeeping of previously visited elimination forests.

\subsection{An efficient algorithm for chordal graphs}

For a PEO graph~$G=([n],E)$, we say that a vertex~$j\in[n]$ is \emph{rotatable}, if $j$ is not an isolated vertex in~$G^{[j]}$; see Figure~\ref{fig:alpha}.

\begin{wrapfigure}{r}{0.55\textwidth}
\centering
\includegraphics{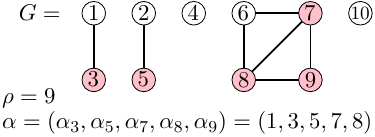}
\caption{Definition of rotatable vertices (highlighted) and of~$\alpha$ and~$\rho$.}
\label{fig:alpha}
\end{wrapfigure}
Clearly, $j$ is rotatable if and only if it is not the smallest vertex in its connected component.
For every rotatable vertex~$j\in[n]$, we let $\alpha_j$ denote the next smaller rotatable vertex or $\alpha_j=1$ if there is no such vertex.
Moreover, we let $\rho$ denote the largest rotatable vertex.
If $G$ is connected all vertices $j=2,\ldots,n$ are rotatable and we have $\rho=n$ and $\alpha_j=j-1$.

\begin{algo}{\bfseries Algorithm~H}{History-free rotations}
This algorithm generates all elimination forests for a PEO graph~$G=([n],E)$ by rotations in the same order as Algorithm~R.
It maintains the current elimination forest in the variable~$F$, the insertion path~$Q_j$ for the vertex~$j\in[n]$ in $p^{n-j+1}(F)$ and an index~$q_j$ to a vertex on this path, as well as auxiliary arrays $o=(o_1,\ldots,o_n)$ and $s=(s_1,\ldots,s_n)$.
\begin{enumerate}[label={\bfseries H\arabic*.}, leftmargin=8mm, noitemsep, topsep=3pt plus 3pt]
\item{} [Initialize] Set $F\gets \sigma^{-1}(\ide_n)$, and $o_j\gets \diru$ and $s_j\gets j$ for $j=1,\ldots,n$.
\item{} [Visit] Visit the current elimination forest~$F$.
\item{} [Select vertex] Set $j\gets s_\rho$, and terminate if $j=1$.
\item{} [Insertion path] If $o_j=\dird$ and $q_j=0$, compute the insertion path~$Q_j$ for the vertex~$j$ in its elimination tree in~$F$. 
\item{} [Rotate vertex] In the current elimination forest~$F$, if $o_j=\diru$ perform an up-rotation of~$j$, whereas if $o_j=\dird$ perform a down-rotation of~$j$ and set $q_j\gets q_j+1$.
\item{} [Update $o$ and $s$] Set $s_\rho\gets \rho$.
If $o_j=\diru$ and $j$ is the root of a tree in~$F$ or its parent is larger than~$j$ set $o_j\gets\dird$ and $q_j\gets 0$, or if $o_j=\dird$ and $j$ is a leaf or its children are all larger than~$j$ set $o_j\gets \diru$, and in both cases set $k\gets \alpha_j$, $s_j\gets s_k$ and $s_k\gets k$. Go back to~H2.
\end{enumerate}
\end{algo}

Algorithm~H is a straightforward translation of the history-free algorithm for zigzag languages presented in~\cite{MR4598046} to elimination forests.
The array $s=(s_1,\ldots,s_n)$ simulates a stack that selects the vertex to rotate up and down in the next step.
In fact, only array entries~$s_j$ for which $j$ is rotatable are modified and used in the algorithm.
Specifically, the vertex~$j$ rotated in the current step is extracted from the top of the stack by the instruction $j\gets s_\rho$ (line~H3).
The direction in which each vertex is rotating next (up or down) is determined by the array $o=(o_1,\ldots,o_n)$ (line~H5), where again only entries~$o_j$ for which $j$ is rotatable are relevant.
The arrays~$o$ and~$s$ are updated in line~H6.
Specifically, after rotating~$j$ up, the direction of rotation for~$j$ is reversed if $j$ has become root of its elimination tree or its parent is larger than~$j$.
Similarly, after rotating~$j$ down, the direction of rotation is reversed if $j$ has become a leaf of its elimination tree or its children are all larger than~$j$.

When a vertex~$j$ starts a sequence of down-rotations, we first compute its insertion path~$Q_j$ (line~H4); recall the definitions from Section~\ref{sec:del-ins} and Figure~\ref{fig:insert}.
While~$Q_j$ is defined in~$p^{n-j+1}(F)$, we compute it directly in~$F$, by ignoring any of the vertices $j+1,\ldots,n$.
We maintain the current position of~$j$ on~$Q_j$ in the variable~$q_j$, by incrementing $q_j$ after every down-rotation of~$j$ in line~H5.
During up-rotations, this information is irrelevant and not maintained.
The counter~$q_j$ is reset to~0 in line~H6 after the vertex~$j$ completes a sequence of up-rotations, which triggers recomputation of~$Q_j$ in line~H4, once $j$ is again selected for a down-rotation.
The insertion path~$Q_j$ can be computed efficiently as follows:
We iterate over each neighbor~$i$ of~$j$ in~$G^{[j]}$, and we traverse the path from~$i$ in~$F$ upwards until we reach~$j$ or a previously encountered neighbor of~$j$ in~$G^{[j]}$, and by reversing and concatenating the resulting subpaths to~$Q_j$.
Following this approach, the computation of~$Q_j$ takes time~$\cO(\lambda)$, where $\lambda=\lambda(F)$ is the number of vertices of~$Q_j$.
This is amortized to~$\cO(1)$ by the $\lambda$ subsequent down-rotation steps involving the vertex~$j$, which do not incur any further computations in step~H4.

The operations in line~H3 and~H6 can be implemented straightforwardly in constant time.
It remains to discuss how to implement the tree rotation operation in line~H5 efficiently; recall the definitions from Section~\ref{sec:rot} and Figure~\ref{fig:rotation}.
For the rotation of an edge~$\{i,j\}$, the key problem is to decide which of the subtrees of the lower of the two vertices~$i$ and~$j$ change their parent.
During an up-rotation of~$j$, i.e., $i$ is the parent of~$j$, a subtree of~$j$ rooted at a vertex~$k$ changes its parent from~$j$ to~$i$ if and only if $k<j$, or $k>j$ and $\{k,i\}$ is an edge in~$G$.
Similarly, during a down-rotation of~$j$, i.e., $i$ is the unique child of~$j$ that is smaller than~$j$, a subtree of~$i$ rooted at a vertex~$k$ changes its parent from~$i$ to~$j$ if and only if $k<j$ and $k$ lies on the insertion path~$Q_j$ of~$j$ (specifically, at position~$p_j+1$), or $k>j$ and $\{k,j\}$ is an edge in~$G$.
With the help of the precomputed insertion path~$Q_j$, these steps can be performed in time~$\cO(\ell)$, where $\ell\leq \sigma(G)$ is the number of children of the lower of the two vertices~$i$ and~$j$.
For this we require constant-time adjacency queries in~$G$, which is achieved by storing an $n\times n$ adjacency matrix.

Combining these observations proves the runtime bound~$\cO(\sigma)$, $\sigma=\sigma(G)$, stated in the first part of Theorem~\ref{thm:algo} for chordal graphs~$G$.

Testing whether an arbitrary graph~$G$ is chordal, and if so computing a PEO for~$G$, can be done in time~$\cO(m+n)$ by lexicographic breadth-first-search~\cite{MR408312}, where $m$ is the number of edges of~$G$.
This is dominated by the time~$\cO(n^2)$ it takes to initialize the adjacency matrix of~$G$.

\subsection{A loopless algorithm for trees}

We now describe how to improve the algorithm to a loopless algorithm for trees~$G$.
The loopless algorithm is based on the observation that in a tree~$G=([n],E)$, for every rotation of an edge~$\{i,j\}$ in an elimination tree for~$G$, \emph{at most one} subtree changes its parent; see Figure~\ref{fig:rot-tree}.
Specifically, this subtree corresponds to the unique subtree between~$i$ and~$j$ in~$G$.
We can thus obtain a speed-up by introducing the following auxiliary data structures:
For each vertex~$j\in[n]$ in the current elimination tree we keep track of the unique child of~$j$ that is smaller than~$j$ (recall Lemma~\ref{lem:smaller-child}).
Upon a down-rotation of~$j$, this allows us to quickly determine the child~$i$ of~$j$ for which to perform the rotation of the edge~$\{i,j\}$ (this is not needed for up-rotations of~$j$).
Also, we precompute an $n\times n$ matrix~$\beta$ where $\beta_{i,j}$ is the unique neighbor of~$i$ in~$G$ in the direction of~$j$.
This matrix is only based on~$G$ and does not change throughout the algorithm.
Furthermore, we maintain an $n\times n$ matrix~$\gamma$ where $\gamma_{i,a}=j$ if in the current elimination tree the vertex~$i$ has a child~$j$, and in~$G$ the vertex~$j$ is reached from~$i$ in the direction of the neighbor~$a$ of~$i$, and $\gamma_{i,a}=0$ otherwise.

With these data structures, each tree rotation can be performed in constantly many steps as described by the function $\rotatej()$ shown below; see Figure~\ref{fig:rot-tree} for illustration.
The auxiliary function $\deli(i)$ deletes vertex~$i$ from its elimination tree, and $\insi(i,j)$ inserts $i$ as a child of~$j$.

\begin{figure}[t!]
\centering
\includegraphics{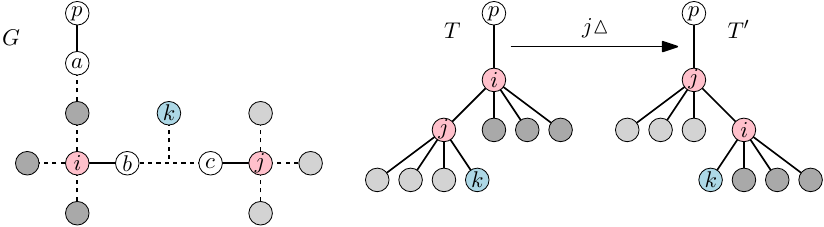}
\caption{Elimination tree rotation in constant time when $G$ is a tree.}
\label{fig:rot-tree}
\end{figure}

\begin{algo}{{\bfseries Function} $\rotatej(j{\diru}\!)$}{Up-rotate $j$}
\begin{enumerate}[label={\bfseries \arabic*.}, leftmargin=6mm, noitemsep, topsep=3pt plus 3pt]
\item{}[Prepare] Set $i$ to be the parent of~$j$, and set $p$ to be the parent of~$i$ or $p\gets 0$ if $i$ is root of its elimination tree.
Moreover, set $a\gets \beta_{p,i}$, $b\gets \beta_{i,j}$, $c\gets \beta_{j,i}$ and $k\gets \gamma_{j,c}$.
\item{}[Rotate]
Call $\deli(j)$.
If $k\neq 0$ call $\deli(k)$ and $\insi(k,i)$.
If $p\neq 0$ call $\deli(i)$ and $\insi(j,p)$, otherwise set $j$ to be the new root.
Call $\insi(i,j)$.
\item{}[Update] Set $\gamma_{p,a}\gets j$, $\gamma_{i,b}\gets k$ and $\gamma_{j,c}\gets i$.
\end{enumerate}
\end{algo}

This completes the proof of Theorem~\ref{thm:algo}.

We implemented the algorithm for chordal graphs and the optimized version for trees on the Combinatorial Object Server~\cite{cos_elim}, and we encourage the reader to experiment with this code.

\section{Hamilton cycles and proof of Theorem~\ref{thm:cycle-summary}}
\label{sec:cycle}

Recall from Theorem~\ref{thm:cycle} that the graph associahedron~$\cA(G)$ has a Hamilton cycle for any graph~$G$ with at least two edges.
By Theorem~\ref{thm:jump-elim}, Algorithm~R produces a Hamilton path for chordal graphs~$G$.
We now derive several conditions on~$G$ under which the end vertices of the Hamilton path produced by the algorithm are adjacent in~$\cA(G)$, i.e., the algorithm produces a Hamilton cycle.
In particular, combining Theorems~\ref{thm:2conn} and~\ref{thm:tree} stated below immediately yields Theorem~\ref{thm:cycle-summary} stated in Section~\ref{sec:results-cycle}.

We start with a positive result.
For any integer~$k\geq 1$, a graph is \emph{$k$-connected}, if it has at least $k+1$~vertices and removing any set of $k-1$~vertices yields a connected graph.

\begin{theorem}
\label{thm:2conn}
If $G=([n],E)$ is a 2-connected PEO graph, then $R(\cE(G))$ is cyclic.
\end{theorem}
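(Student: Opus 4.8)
The plan is to pin down the last forest $F_{\mathrm{last}}$ of the list $R(\cE(G))=\sigma^{-1}(J(\Pi(G)))$ and to check that it differs from $F_0=\sigma^{-1}(\ide_n)$ by a single rotation; adjacency in $\cA(G)$ then follows from Lemma~\ref{lem:Gasso-edges}. Since a $2$-connected graph is connected and removing a simplicial vertex from a connected graph keeps it connected, every prefix $G^{[\nu]}$ is connected, so for each $\nu\ge 2$ only condition~(z1) occurs. Hence, by~\eqref{eq:JLn1}, the last permutation $\ell_\nu$ of $J(\Pi(G^{[\nu]}))$ arises from $\ell_{\nu-1}$ by inserting the new largest value $\nu$ either at the front (as a root) or at the back (as a leaf): writing $m_\mu:=|\cE(G^{[\mu]})|$, the final block of~\eqref{eq:JLn1} is $\lvec{c}(\cdot)$ when $m_{\nu-1}$ is odd and $\rvec{c}(\cdot)$ when $m_{\nu-1}$ is even, and these end in $c_1$ (prepend $\nu$, a new root) and in the rightmost valid insertion (append $\nu$, a leaf), respectively. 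Thus $\ell_\nu=\nu\,\ell_{\nu-1}$ if $m_{\nu-1}$ is odd and $\ell_\nu=\ell_{\nu-1}\,\nu$ if $m_{\nu-1}$ is even.

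First I would prove that $\ell_n=2\,1\,3\,4\cdots n$. As $m_1=1$ is odd, the value $2$ is prepended to give $\ell_2=21$; and if $m_\mu$ is even for $2\le\mu\le n-1$, then every larger value is appended, yielding $\ell_3=213$, \dots, $\ell_n=2\,1\,3\cdots n$. Let $\tau$ be the relabeling that exchanges vertices $1$ and $2$. In $F_0$ the vertex $2$ is the unique child of the root $1$ (because $G-1$ is connected) and is therefore an ancestor of all other vertices, so $\tau(F_0)$ keeps every edge of~$G$ comparable and is again an elimination forest, with $\sigma(\tau(F_0))=2\,1\,3\cdots n=\ell_n$ (by induction, since $n$ is a leaf of $\tau(F_0)$ and removing it commutes with $\tau$). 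Hence $F_{\mathrm{last}}=\sigma^{-1}(\ell_n)=\tau(F_0)$. Finally, $\tau(F_0)$ is exactly the up-rotation $2{\diru}$ applied to $F_0$: rotating the edge $\{1,2\}$ makes $2$ the root, and since $G-2$ is connected every subtree of $2$ lies in the same component of $G-2$ as $1$ and so moves onto $1$ (recall the rotation rule and Lemma~\ref{lem:insert}). Therefore $F_0$ and $F_{\mathrm{last}}$ differ by a single rotation, and $R(\cE(G))$ is cyclic.

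The crux is the parity claim $m_\mu\equiv 0\pmod 2$ for $2\le\mu\le n-1$. Here I would first argue that the prefixes stay highly connected: removing a simplicial vertex $v$ from a $2$-connected graph on at least four vertices leaves a $2$-connected graph, since if $H-v$ had a cut vertex $w$ separating parts $A$ and $B$, then the clique $N(v)$ would lie entirely in $A\cup\{w\}$ (or in $B\cup\{w\}$), whence $w$ would already be a cut vertex of $H$. Peeling off $n,n-1,\dots$ in reverse PEO order then shows that $G^{[\mu]}$ is $2$-connected for every $\mu\ge 3$, while $G^{[2]}$ is a single edge with $m_2=2$. It then remains to show that a $2$-connected graph has an even number of elimination forests, for which the natural approach is a fixed-point-free involution on $\cE(H)$ that exchanges the two smallest vertices $1$ and $2$; because $\{1,2\}\in E(H)$ these are comparable in every elimination tree, so such an involution would have no fixed point.

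The hard part will be constructing this involution. In a $2$-connected PEO graph the vertices $1$ and $2$ are in general neither true twins (their neighborhoods among $\{3,\dots,n\}$ may differ) nor adjacent in the elimination tree (arbitrarily many larger vertices may separate them on the ancestor chain), so the naive relabeling need not produce a valid elimination forest and the naive transposition of the elimination order is not even well defined. Turning the exchange of $1$ and $2$ into a genuine involution — and thereby proving the evenness of $m_\mu$ — is the technical heart of the argument; as a partial sanity check one already gets $m_\mu$ even whenever $\mu$ is even for free, via the handshake identity $\dim\bigl(\cA(G^{[\mu]})\bigr)\cdot m_\mu=2\,e\bigl(\cA(G^{[\mu]})\bigr)$ for the simple polytope $\cA(G^{[\mu]})$ of odd dimension $\mu-1$, so the genuine work lies in the odd-$\mu$ case.
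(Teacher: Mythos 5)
Your reduction of the theorem to the parity claim is sound, and in fact more explicit than the paper's: you identify the last forest as $\tau(F_0)$ with $\sigma(\tau(F_0))=21\,3\cdots n$ and exhibit the closing rotation of the edge $\{1,2\}$ directly, whereas the paper simply invokes its Lemma~\ref{lem:cyclic} (even level sizes $|L_\nu|$ for $\nu=2,\ldots,n-1$ imply the first and last permutations differ by a minimal jump) together with Lemma~\ref{lem:jump-rot}. Your peeling argument that $G^{[\mu]}$ stays 2-connected is also correct (the paper proves the stronger Lemma~\ref{lem:peo-kconn} instead), and your handshake argument for even $\mu$ is exactly the paper's Lemma~\ref{lem:even}.

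However, there is a genuine gap exactly where you flag it: you never prove that a 2-connected graph with an \emph{odd} number of vertices has an even number of elimination trees, and the fixed-point-free involution exchanging $1$ and $2$ that you propose is not needed and would indeed be delicate to construct (as you note, $1$ and $2$ need not be twins, so naive relabelling fails). The missing step has a two-line proof that uses no involution at all: by~\eqref{eq:eG-sum}, for connected $H$ one has $e(H)=\sum_{x\in H}e(H-x)$; if $H$ is 2-connected with an odd number of vertices, then every $H-x$ is connected with an even number of vertices, so every summand is even by the very handshake argument you already have, and hence $e(H)$ is even. This is the paper's Lemma~\ref{lem:2-conn}, and substituting it for your unconstructed involution completes your proof. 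Without it, the argument does not go through, since the parity of $m_\mu$ for odd $\mu$ is precisely what forces every value $\nu\geq 3$ to be \emph{appended} (rather than prepended) in the last permutation, which is what your identification $\ell_n=21\,3\cdots n$ rests on.
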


On the other hand, for trees~$G$, which are 1-connected, our algorithm does not produce a Hamilton cycle.

\begin{theorem}
\label{thm:tree}
If $G=([n],E)$, $n\geq 4$, is a PEO tree, then the ordering $R(\cE(G))$ is not cyclic.
\end{theorem}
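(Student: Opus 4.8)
The plan is to analyze the structure of the first and last elimination forests produced by Algorithm~R when $G$ is a tree, and show they are \emph{not} adjacent in $\cA(G)$, i.e., they differ by more than one tree rotation. By Theorem~\ref{thm:jump-elim}, $F_0=\sigma^{-1}(\ide_n)$ is the first forest, obtained by removing vertices $1,2,\ldots,n$ in increasing order. The last forest $F_{\mathrm{last}}=\sigma^{-1}(\pi_{\mathrm{last}})$ can be identified via the recursive description~\eqref{eq:JLn12} of $J(\Pi(G))$: since $G$ is a tree it is connected, so condition~(z1) applies at every level, and the last permutation in $J(L_n)$ is the last permutation of the final block $\lvec{c}(\cdot)$ or $\rvec{c}(\cdot)$ built on top of the last permutation of $J(L_{n-1})$. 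Unwinding this recursion should give an explicit closed form for $\pi_{\mathrm{last}}$, and hence for $F_{\mathrm{last}}$, analogous to how the Steinhaus--Johnson--Trotter order terminates at a known permutation.

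First I would make the recursion concrete. Tracking the alternation of $\lvec{c}$ and $\rvec{c}$ determined by the parity of the position of the previous permutation, I expect $\pi_{\mathrm{last}}$ to be a permutation in which the value $n$ sits at an extreme position (first or last), allowing me to read off where $n$ is placed in $F_{\mathrm{last}}$ as either the root or a leaf of its elimination tree. The key observation I want to extract is a \emph{local} obstruction: to show $F_0$ and $F_{\mathrm{last}}$ are non-adjacent, it suffices by Lemma~\ref{lem:Gasso-edges} to exhibit two distinct tree rotations that must both be performed to transform one into the other. Concretely, I would compare the positions of the two largest values, say $n$ and $n-1$, or more robustly compare $F_0$ and $F_{\mathrm{last}}$ restricted to a well-chosen pair of vertices, and argue that they occupy incompatible configurations that cannot be reconciled by a single rotation.

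The cleanest route may be to use the recursive structure directly rather than a closed form. Since $R(\cE(G))$ arises from inserting the largest value $n$ into the sequence $R(\cE(G^{[n-1]}))$ in alternating left/right sweeps, the first and last forests of the level-$n$ list differ from the first and last of the level-$(n-1)$ list in a controlled way. I would set up an induction on $n\geq 4$: the base case is checked directly on a path or small tree (the examples in Figure~\ref{fig:algo}(b),(c) already exhibit non-cyclicity), and for the inductive step I would show that adding vertex $n$ cannot ``repair'' non-adjacency, because the positions of $n$ in $F_0$ (as a leaf, inserted last) and in $F_{\mathrm{last}}$ are forced to be the same extreme, so the discrepancy between the two forests is inherited from level $n-1$, where it is already at least two rotations by induction. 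The condition $n\geq 4$ enters because for $n\leq 3$ every tree associahedron is small enough (an edge, path, or triangle) to be a cycle, and the requisite ``room'' for two independent discrepancies only exists from four vertices onward.

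\textbf{The main obstacle} I anticipate is pinning down $F_{\mathrm{last}}$ precisely enough across \emph{all} trees $G$, not just paths: the insertion path $(x_1,\ldots,x_\lambda)$ and its length $\lambda(F)$ depend on the branching structure of $G$ at each vertex, so the alternation pattern governing $\lvec{c}$ versus $\rvec{c}$ is sensitive to the tree shape. I would therefore aim to avoid a fully explicit formula and instead isolate a single robust invariant—for instance, the relative root-versus-leaf status of a fixed low-indexed vertex that is guaranteed to be a non-leaf cut vertex of $G$ (which exists because $G$ is a tree with $n\geq 4$)—and show this invariant differs between $F_0$ and $F_{\mathrm{last}}$ while simultaneously a second, independent invariant also differs, so that no single rotation suffices. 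Proving these two discrepancies are genuinely independent (not both fixable by one rotation) is the crux, and I would lean on the detailed parent/child bookkeeping in the definition of $\sigma$ and on Lemma~\ref{lem:insert} to make it rigorous.
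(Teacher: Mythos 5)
Your overall skeleton---reduce the question to a small prefix $G^{[\nu]}$ and then argue that non-adjacency of the first and last forests is inherited upward---is the same as the paper's, which establishes non-adjacency at level $\nu=4$ and then notes that $p^{n-4}$ maps the first and last forests $S,S'$ of $R(\cE(G))$ onto the first and last forests $T,T'$ of $R(\cE(G^{[4]}))$. However, two of your specific claims are problematic. First, the mechanism you give for the inductive step is incorrect: you assert that the position of $n$ in $F_{\mathrm{last}}$ is ``forced to be the same extreme'' as in $F_0$ (a leaf). By~\eqref{eq:JLn1} the last block of $J(L_n)$ is $\lvec{c}(\pi_m)$ or $\rvec{c}(\pi_m)$ according to the parity of $m=e(G^{[n-1]})$, so $n$ ends up as the \emph{root} of $F_{\mathrm{last}}$ whenever $e(G^{[n-1]})$ is odd---for instance for the path on $8$ vertices in path order, where $e(G^{[7]})=429$. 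The inheritance step is nevertheless true, but for a simpler reason that needs no control over where $n$ sits: a single rotation between $S$ and $S'$ projects under $p^{n-4}$ to equality or to a single rotation between $T$ and $T'$, so it suffices to know that $T\neq T'$ and that $T,T'$ are non-adjacent.

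Second, and more importantly, the base case is where all the content lies, and you have not carried it out: ``checked directly on a path or small tree'' plus a pointer to Figure~\ref{fig:algo} neither covers all $4$-vertex trees with all admissible PEOs nor explains why the check must succeed. The paper's key observation is that, by Lemma~\ref{lem:no-peak}, the first three vertices of a PEO of a tree induce a connected subgraph, so $G^{[3]}$ is always a $3$-vertex path with $e(G^{[3]})=5$ elimination trees; since $5$ is \emph{odd}, the recursion~\eqref{eq:JLn1} forces vertex~$4$ to be a leaf (a descendant of~$1$) in the first tree $T$ of $R(\cE(G^{[4]}))$ and the root of the last tree $T'$, in which $1$ lies below $4$ with $2$ strictly between them. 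The ancestor--descendant relation of $1$ and $4$ is thus reversed between $T$ and $T'$ while $1$ and $4$ are not in a parent--child relation in $T'$, so no single rotation transforms $T$ into $T'$. Without this parity argument (or an equivalent exhaustive analysis of the level-$4$ lists), your proof is incomplete precisely at its crux; the ``robust invariant'' you propose as an alternative is not identified concretely enough to assess.
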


\begin{wrapfigure}{r}{0.4\textwidth}
\centering
\includegraphics{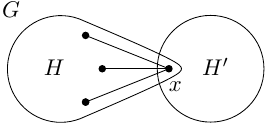}
\caption{Definition of $H$-appendix.}
\vspace{-7mm}
\label{fig:appendix}
\end{wrapfigure}
Lastly, we identify several classes of graphs that are not 2-connected and not trees, and for which we still obtain a Hamilton cycle.

For any graph~$H$ that is 2-connected and that has a vertex~$x$ such that $H-x$ is 2-connected or a single edge, we say that a graph~$G$ has an \emph{$H$-appendix}, if $G$ is obtained from an arbitrary graph~$H'$ with at least two vertices by gluing $H$ with the vertex~$x$ onto one of the vertices of~$H'$; see Figure~\ref{fig:appendix}.
The smallest suitable such~$H$ would be a triangle.
Note that $H'$ is not required to be connected in this definition.

We now also define the notion of a simplicial vertex for graphs~$G$ that have no ordering of their vertices, slightly overloading notation.
Specifically, a vertex in a graph~$G$ is called \emph{simplicial}, if it forms a clique together with all its neighbors in the graph.

\begin{figure}
\centering
\includegraphics{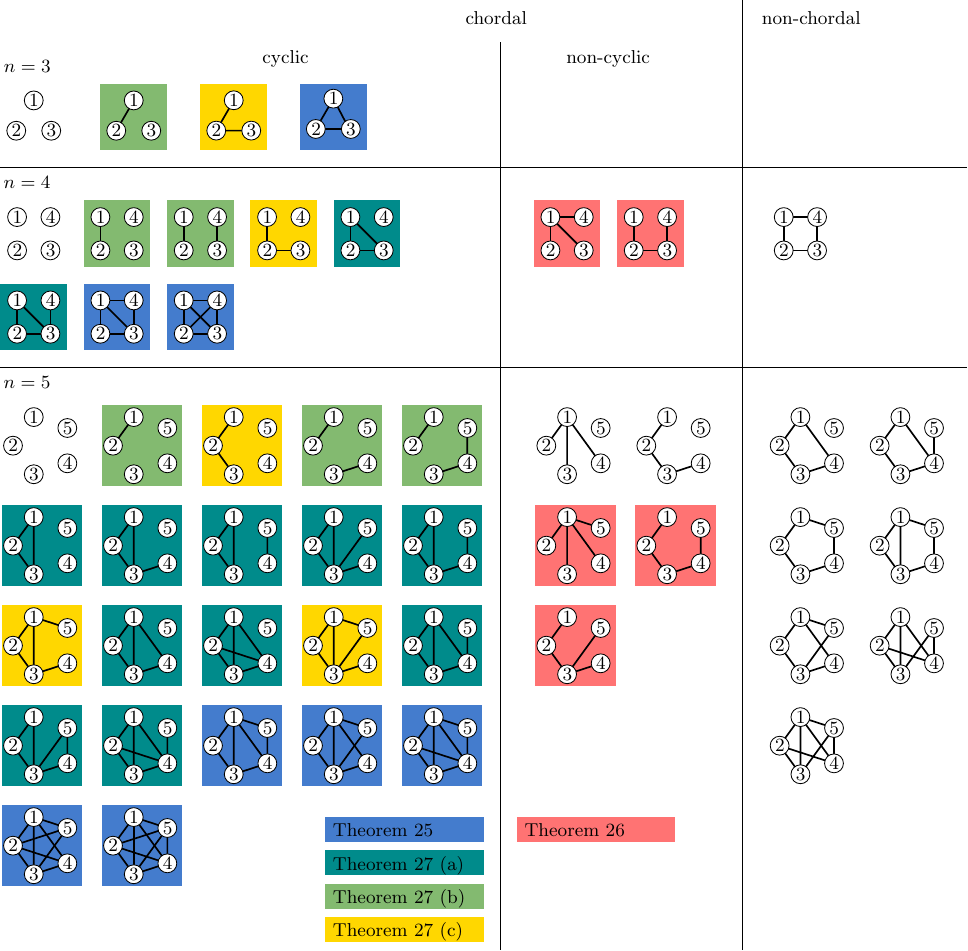}
\caption{Applying Theorems~\ref{thm:2conn}--\ref{thm:peo} to all non-isomorphic graphs on $n=3,4,5$ vertices; cf.~Figure~\ref{fig:asso4}.}
\label{fig:cyclic}
\end{figure}

\todo{Make sure theorem numbers in the figure are up to date.}

\begin{theorem}
\label{thm:peo}
The following chordal graphs~$G$ admit a PEO such that $R(\cE(G))$ is cyclic:
\begin{enumerate}[label=(\alph*),leftmargin=6mm, noitemsep, topsep=3pt plus 3pt]
\item $G$ has an $H$-appendix;
\item $G$ has an isolated edge;
\item $G$ has a simplicial vertex~$y$, and $G-y$ is as in~(a), (b) or a 2-connected chordal graph.
\end{enumerate}
\end{theorem}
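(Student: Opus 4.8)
The plan is to prove Theorem~\ref{thm:peo} by reducing each of the three cases to the already-established positive result for 2-connected graphs, namely Theorem~\ref{thm:2conn}, combined with a careful choice of PEO that controls the structure of the first and last elimination forest in the ordering~$R(\cE(G))$. The recursive description of~$R(\cE(G))$ at the end of Section~\ref{sec:zigzag-chordal} via~\eqref{eq:JLn12} is the central tool: the first and last elimination forests for~$G^{[\nu]}$ are obtained from the first and last forests for~$G^{[\nu-1]}$ by inserting the new vertex~$\nu$ at a controlled position. Thus I would track explicitly what $F_{\mathrm{first}}$ and $F_{\mathrm{last}}$ look like, and show that they differ in a single tree rotation.

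For part~(a), suppose $G$ has an $H$-appendix, glued at vertex~$x$ onto a vertex of~$H'$. The idea is to choose the PEO so that the vertices of~$H'$ (other than the gluing vertex) come first, then the gluing vertex, then the remaining vertices of~$H$. Since $H$ is 2-connected and $H-x$ is 2-connected or a single edge, the subgraph $G^{[\nu]}$ becomes 2-connected at the stage~$\nu_0$ where the last vertex completing~$H$ enters; at that point Theorem~\ref{thm:2conn} guarantees the ordering~$R(\cE(G^{[\nu_0]}))$ is cyclic, meaning its first and last forests differ in one rotation. First I would verify that insertions of the later vertices (those in $H'$ that come after, if the PEO is arranged so the appendix completes first) preserve this property: if at every subsequent level the new vertex is inserted at the \emph{same} position in both the current first and last forest, the single-rotation difference is preserved. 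The key combinatorial observation is that the parity of the number of children at each level, governing whether~\eqref{eq:JLn1} runs left-to-right or right-to-left, aligns so that $F_{\mathrm{first}}$ and $F_{\mathrm{last}}$ receive~$\nu$ in matching positions.

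For part~(b), if $G$ has an isolated edge~$\{a,b\}$, I would place $a$ and $b$ last in the PEO, as vertices $n-1$ and~$n$. The isolated vertex~$n$ triggers condition~(z2), so by~\eqref{eq:JLn2} the ordering simply appends~$n$ as an isolated vertex to every forest; this does not disturb a single-rotation difference between first and last. The edge vertex at position~$n-1$ is then a vertex whose only neighbor is~$n$, and I would check directly that its insertion via~\eqref{eq:JLn1}, which has only two positions (root or leaf of the one-edge tree), produces first and last forests differing in exactly the rotation of~$\{n-1,n\}$. For part~(c), where $G$ has a simplicial vertex~$y$ and $G-y$ falls under~(a), (b), or is 2-connected chordal, I would put~$y$ last in the PEO (legitimate since~$y$ is simplicial, so the ordering remains a PEO), apply the inductive structure to~$G-y$ to get a single-rotation difference for~$R(\cE(G-y))$, and then show that the final insertion of~$y$ preserves cyclicity. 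Since $y$ is simplicial its neighborhood is a clique, forcing a single insertion path, and one checks the endpoints stay one rotation apart.

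The main obstacle I expect is part~(c) and the propagation argument shared across all three parts: verifying that inserting the final vertices keeps the first and last forests exactly one rotation apart, rather than merely close. This requires understanding precisely how the alternating-direction recursion~\eqref{eq:JLn1} positions the maximal vertex in the very first and very last forest at each level, and showing these positions coincide (or differ in a way that collapses to a single rotation). The crux is a lemma describing $\sigma(F_{\mathrm{first}})$ and $\sigma(F_{\mathrm{last}})$ explicitly as permutations and checking that they remain in the single-jump relationship of Lemma~\ref{lem:jump-rot} after each insertion; handling the parity bookkeeping carefully, and treating the boundary case where $H-x$ is merely a single edge separately from the 2-connected case, is where the real work lies.
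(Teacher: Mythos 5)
Your overall strategy---tracking the first and last forests through the recursion~\eqref{eq:JLn12}---is workable in principle, but it collapses to exactly the criterion the paper already has in Lemma~\ref{lem:cyclic}: the endpoints receive the new vertex~$\nu$ ``in matching positions'' if and only if $|L_{\nu-1}|=e(G^{[\nu-1]})$ is even, because the parity deciding whether the last block of~\eqref{eq:JLn1} is $\rvec{c}$ or $\lvec{c}$ is the parity of the \emph{total} number of forests at the previous level, not ``the number of children at each level'' as you write. The real content of the theorem is therefore proving that $e(G^{[\nu]})$ is even for all $\nu=2,\ldots,n-1$, and your proposal never supplies this. For part~(a) the paper does it by taking a PEO that \emph{starts} with the vertices of~$H$ (Lemma~\ref{lem:peo-triangle}) and then running an induction over all induced subgraphs $H^+\supseteq H$, splitting the sum~\eqref{eq:eG-sum} $e(H^+)=\sum_y e(H^+-y)$ into the terms $y=x$ (even because $H^+-x$ contains the 2-connected component $H-x$, via Lemma~\ref{lem:2-conn} and~\eqref{eq:eG-prod}), $y\in H-x$ (even by Lemma~\ref{lem:even}, since $H^+-y$ is connected with an even number of vertices in the relevant case), and $y\in H'-x$ (even by induction). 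That induction is the missing idea: once vertices of $H'-x$ enter, $x$ is a cut vertex, $G^{[\nu]}$ is never again 2-connected, and Theorem~\ref{thm:2conn} cannot simply be ``propagated.''

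Two of your concrete PEO choices are moreover wrong, not just incompletely justified. In~(a) you put $H'$ first and $H-x$ last; then the stage $\nu_0$ at which $H$ completes is $\nu_0=n$, where $G^{[n]}=G$ has the cut vertex~$x$ and is not 2-connected, and worse, the early levels inside~$H'$ need not be even: for $H$ a triangle glued at~$x$ to a path $H'=(x,p,q)$, your order $p,q,x,\ldots$ gives $e(G^{[3]})=e(P_3)=5$, which is odd, so by Lemma~\ref{lem:cyclic} your ordering is provably \emph{not} cyclic. In~(b) you place the isolated edge last as $\{n-1,n\}$ (and it is the vertex $n-1$, not $n$, that would be isolated in its prefix); the paper labels it $1,2$ precisely so that the factor $e(K_2)=2$ divides $e(G^{[\nu]})$ for every $\nu\ge 2$ via the product formula~\eqref{eq:eG-prod}---with the edge last, the intermediate level counts are those of the rest of the graph and can be odd (take $G=P_3\cup K_2$ with the path labelled $1,2,3$). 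Only your treatment of~(c) is sound in spirit: placing the simplicial vertex~$y$ last works because Lemma~\ref{lem:cyclic} imposes no condition on $|L_n|$, so no analysis of the ``final insertion'' is actually needed.
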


Note that Theorems~\ref{thm:2conn} and~\ref{thm:tree} hold for arbitrary PEOs for 2-connected chordal graphs and trees, respectively, whereas Theorem~\ref{thm:peo} talks about one suitably chosen PEO for the graph.
Figure~\ref{fig:cyclic} shows the result of applying Theorems~\ref{thm:2conn}--\ref{thm:peo} to all graphs on $n=3,4,5$ vertices.

\subsection{Counting elimination trees}

For proving our results, we need the following lemma.

\begin{lemma}[{\cite[Lemma~4]{MR4391718}}]
\label{lem:cyclic}
For any zigzag language of permutations~$L_n\seq S_n$, in the ordering of permutations~$J(L_n)$ generated by Algorithm~J, the first and last permutation are related by a minimal jump if and only if $|L_\nu|=|\{p^{n-\nu}(\pi)\mid \pi\in L_n\}|$ is even for all~$\nu=2,\ldots,n-1$.
\end{lemma}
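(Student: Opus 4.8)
The plan is to induct on $n$ using the recursive description of the output $J(L_n)$ recorded in~\eqref{eq:JLn12}. By Theorem~\ref{thm:jump} the first permutation of $J(L_n)$ is always $\ide_n$, which has the largest value $n$ in its last position, so everything hinges on locating $n$ in the last permutation, which I denote $\omega_n$. Writing $J(L_{n-1})=\pi_1,\ldots,\pi_m$ with $m:=|L_{n-1}|$, the blocks in~\eqref{eq:JLn1} alternate, so the final block is $\lvec{c}(\pi_m)$ when $m$ is odd and $\rvec{c}(\pi_m)$ when $m$ is even. Since $\lvec{c}(\pi_m)$ ends with $c_1(\pi_m)$ and $\rvec{c}(\pi_m)$ ends with $c_n(\pi_m)$, and $\pi_m=\omega_{n-1}$, the value $n$ sits at the \emph{front} of $\omega_n=c_1(\omega_{n-1})$ exactly when $|L_{n-1}|$ is odd and at the \emph{back} of $\omega_n=c_n(\omega_{n-1})$ exactly when $|L_{n-1}|$ is even; by~\eqref{eq:JLn2}, condition~(z2) likewise places $n$ at the back.

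I would then split according to the position of $n$ in $\omega_n$. If $|L_{n-1}|$ is even, then $n$ occupies the last position of both $\ide_n$ and $\omega_n$, and I claim the two are related by a minimal jump in $L_n$ if and only if $\ide_{n-1}$ and $\omega_{n-1}$ are related by a minimal jump in $L_{n-1}$. The point is that the permutations of $L_n$ with $n$ in the last position are exactly $c_n(L_{n-1})$, so any jump of a value $j\le n-1$ happens entirely among the first $n-1$ entries and stays inside $L_n$ precisely when the corresponding shortened jump stays inside $L_{n-1}$; hence minimality is preserved in both directions and the trailing value $n$ never participates. Invoking the induction hypothesis, $\ide_n,\omega_n$ are related iff $|L_\nu|$ is even for $\nu=2,\ldots,n-2$, which combined with $|L_{n-1}|$ even is exactly the range $\nu=2,\ldots,n-1$.

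In the remaining case $|L_{n-1}|$ is odd, so $n$ lies at the front of $\omega_n$ but at the back of $\ide_n$. Because a single jump relocates only one value, relating the two permutations would force a full-length jump of the value $n$; applying it to $\ide_n=12\cdots n$ yields $n\,12\cdots(n-1)=c_1(\ide_{n-1})$, which equals $\omega_n=c_1(\omega_{n-1})$ only if $\omega_{n-1}=\ide_{n-1}$, i.e.\ $|L_{n-1}|=1$. Since condition~(z1) guarantees $|L_\nu|\ge 2$ for all $\nu\ge 2$, for $n\ge 3$ this cannot happen, so $\ide_n$ and $\omega_n$ are unrelated while the parity condition simultaneously fails (as $|L_{n-1}|$ is odd), and the equivalence holds. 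The base cases $n\le 2$ are checked directly: the index range $\nu=2,\ldots,n-1$ is empty and $\ide_2=12,\ \omega_2=21$ are related, matching the vacuously-true condition.

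The step I expect to be the main obstacle is the minimality transfer in the even case: one must argue carefully that inserting the new maximum $n$ at the back sets up a bijection $c_n\colon L_{n-1}\to c_n(L_{n-1})$ that neither creates nor destroys minimal jumps of smaller values, so that "the shortest jump staying inside the language" is the same notion before and after insertion. The secondary bookkeeping -- matching left versus right jumps against the alternation of $\lvec{c}$ and $\rvec{c}$ blocks in~\eqref{eq:JLn1} -- needs attention as well, but it is fully determined by the parity of $|L_{n-1}|$ established in the first paragraph.
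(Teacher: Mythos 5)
The paper never proves this lemma itself --- it imports it verbatim from \cite{MR4391718} --- so your argument can only be judged on its own terms. Your inductive strategy via the block decomposition~\eqref{eq:JLn12}, the identification of the first permutation as $\ide_n$, and the minimality-transfer argument in the even case (that $c_n(\tau)\in L_n$ iff $\tau\in L_{n-1}$, so jumps of values below $n$ pass back and forth through $c_n$) are all sound, and this is surely close to the original proof. The problem is your odd case, and it is not merely presentational. First, when condition~(z2) is used at level~$n$, the last permutation is $c_n(\omega_{n-1})$ \emph{regardless} of the parity of $|L_{n-1}|$, so your dichotomy ``$|L_{n-1}|$ odd $\Rightarrow$ $n$ sits at the front of $\omega_n$'' is wrong; in that situation your even-case argument applies and yields ``related iff $|L_\nu|$ is even for $\nu=2,\ldots,n-2$,'' with no constraint on $|L_{n-1}|$. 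Second, your claim that ``condition~(z1) guarantees $|L_\nu|\ge 2$ for all $\nu\ge 2$'' is false for the zigzag languages of this paper: if (z2) is used at level~$\nu$ then $|L_\nu|=|L_{\nu-1}|$ can equal~$1$ (e.g.\ $L_\nu=\{\ide_\nu\}$).

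These gaps cannot be repaired, because the statement itself fails once (z2) is admitted. Take $L_1=\{1\}$, $L_2=\{12\}$ via~(z2), and $L_3=\{123,312\}$ via~(z1) (both $c_1(12)=312$ and $c_3(12)=123$ are present, while $132$ is not). Then $J(L_3)=123,\,312$, and these are related by a minimal jump of the value~$3$ by two steps (minimal because the one-step jump gives $132\notin L_3$), yet $|L_2|=1$ is odd. For the (z2) flavour: over $L_3=\{123,132,312,321,213\}$, a valid (z1)-extension of $\{12,21\}$ with $|L_3|=5$, set $L_4=c_4(L_3)$ via~(z2); then $J(L_4)$ starts at $1234$ and ends at $2134$, which differ in a minimal jump of the value~$2$, although $|L_3|$ is odd. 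The lemma --- and, essentially, your proof --- is correct in the setting of the cited source \cite{MR4391718}, where only condition~(z1) exists and hence $|L_\nu|\ge 2$ for all $\nu\ge 2$. To use it with the present paper's (z1)/(z2) definition one must either exclude from the parity condition the levels at which (z2) is applied, or use only the ``if'' direction after verifying that \emph{all} the $|L_\nu|$ are even, which is what the paper's applications in Section~\ref{sec:cycle} actually do; in that regime your even-case argument carries the whole induction and the problematic odd subcases never arise.
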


Motivated by this lemma, for any graph~$G$ we write $e(G):=|\cE(G)|$ for the number of elimination forests for~$G$.
This function generalizes all the counting functions corresponding to the combinatorial classes mentioned in Section~\ref{sec:objects}, i.e., Catalan numbers, factorial numbers etc.
The number~$e(G)$ can be computed recursively as follows:
If $G$ is connected we have
\begin{subequations}
\label{eq:eG}
\begin{equation}
\label{eq:eG-sum}
e(G)=\sum_{x\in G} e(G-x),
\end{equation}
and if $G$ is disconnected we have
\begin{equation}
\label{eq:eG-prod}
e(G)=\prod_{H\in\cC(G)} e(H),
\end{equation}
\end{subequations}
where $\cC(G)$ is the set of connected components of~$G$.

\begin{lemma}
\label{lem:even}
Any connected graph with an even number of vertices has an even number of elimination trees.
\end{lemma}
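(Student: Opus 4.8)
The plan is to bypass the recursion~\eqref{eq:eG} entirely and instead exploit the polytope structure of the graph associahedron~$\cA(G)$ via a handshake/parity argument. The key point is that for a connected graph~$G$ on~$n$ vertices, \emph{every} elimination tree has exactly~$n-1$ neighbors in the skeleton of~$\cA(G)$. Indeed, by Lemma~\ref{lem:Gasso-edges} the neighbors of an elimination tree~$T$ are precisely the elimination trees obtained from~$T$ by a single tree rotation, and such rotations are indexed by the edges of~$T$; since $T$ is a tree on~$n$ vertices it has exactly $n-1$ edges, each yielding a distinct neighbor. Equivalently, this is the well-known fact that graph associahedra are simple polytopes, so that $\cA(G)$, being of dimension~$n-1$ for connected~$G$, has all vertices of degree exactly~$n-1$.

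Granting this, the conclusion is immediate. The skeleton of~$\cA(G)$ is a graph whose vertex set is~$\cE(G)$, so it has $e(G)=|\cE(G)|$ vertices, and each vertex has degree~$n-1$. By the handshake lemma,
\[
  \sum_{T\in\cE(G)}\deg(T)=(n-1)\,e(G)
\]
equals twice the number of edges of the skeleton and is therefore even. When $n$ is even, the factor~$n-1$ is odd, and hence $e(G)$ must be even, which is exactly the assertion of the lemma.

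The only step requiring care, and thus the main obstacle, is the claim that the vertex degree is \emph{exactly} $n-1$ rather than merely at most $n-1$: one must verify that rotating two distinct edges of~$T$ never produces the same elimination tree, so that the $n-1$ edges of~$T$ correspond bijectively to its $n-1$ neighbors. I would justify this by invoking simplicity of graph associahedra (a standard property of nestohedra, consistent with the Minkowski-sum realization in Theorem~\ref{thm:Msum}), which forces each vertex degree to equal the dimension~$n-1$. Alternatively one can check directly that a rotation of a given edge~$\{i,j\}$ is the unique rotation that reverses the ancestor--descendant relation of~$i$ and~$j$, so distinct edges yield distinct neighbors. Everything else in the argument is a one-line parity deduction from the handshake identity.
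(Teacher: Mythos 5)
Your proposal is correct and is essentially identical to the paper's own proof: the paper likewise observes that every elimination tree of a connected $n$-vertex graph has $n-1$ edges, hence $\cA(G)$ is $(n-1)$-regular by Lemma~\ref{lem:Gasso-edges}, and concludes by the handshake lemma that $e(G)\cdot(n-1)$ is even, forcing $e(G)$ to be even since $n-1$ is odd. Your additional care about distinct edges yielding distinct neighbors is a reasonable refinement but does not change the argument.
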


\begin{proof}
Consider a connected graph~$G$ with an even number~$n$ of vertices.
As $G$ is connected, all of its elimination trees have $n-1$ edges, so there are $n-1$ possible tree rotations and consequently the graph associahedron~$\cA(G)$ is an $(n-1)$-regular graph by Lemma~\ref{lem:Gasso-edges}.
By the handshaking lemma, $e(G)\cdot(n-1)$ must be even, and as $n-1$ is odd, we obtain that $e(G)$ is even.
\end{proof}

\begin{lemma}
\label{lem:2-conn}
Any 2-connected graph has an even number of elimination trees.
\end{lemma}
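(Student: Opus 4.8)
The plan is to reduce the statement to Lemma~\ref{lem:even} by a case distinction on the parity of the number of vertices~$n$ of the $2$-connected graph~$G$. Recall at the outset that a $2$-connected graph is in particular connected and has $n\geq 3$, so both Lemma~\ref{lem:even} and the recursion~\eqref{eq:eG-sum} are available.

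First I would dispatch the case that $n$ is even. Since $G$ is connected and has an even number of vertices, Lemma~\ref{lem:even} applies directly and immediately yields that $e(G)$ is even, with nothing further to check.

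The only case with any content is that $n$ is odd. Here I would invoke the recursion~\eqref{eq:eG-sum}, which gives $e(G)=\sum_{x\in G}e(G-x)$ because $G$ is connected. The crucial point is that the defining property of $2$-connectivity guarantees that $G-x$ is connected for \emph{every} vertex~$x$. Moreover $G-x$ has $n-1$ vertices, and $n-1$ is even precisely because $n$ is odd. Thus each summand $e(G-x)$ counts the elimination forests of a connected graph on an even number of vertices, and is therefore even by Lemma~\ref{lem:even}. A sum of even numbers being even, we conclude that $e(G)$ is even, completing the odd case.

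I do not anticipate a genuine obstacle: the result follows almost at once once one observes that the hypothesis ``$G-x$ stays connected'' built into $2$-connectivity is exactly what is needed to apply Lemma~\ref{lem:even} to each single-vertex deletion in the odd case. The only minor point to verify is that $G-x$ really has at least two vertices, so that the even-vertex instance of Lemma~\ref{lem:even} is non-vacuous; this holds because $n\geq 3$ forces $n-1\geq 2$.
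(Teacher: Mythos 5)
Your proof is correct and follows essentially the same argument as the paper: split on the parity of $n$, apply Lemma~\ref{lem:even} directly when $n$ is even, and when $n$ is odd use~\eqref{eq:eG-sum} together with the fact that $2$-connectivity makes every $G-x$ connected on an even number of vertices. Nothing is missing.
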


\begin{proof}
Consider a 2-connected graph~$G$ with $n\geq 3$ vertices.
If $n$ is even, then the claim follows directly from Lemma~\ref{lem:even}.
If $n$ is odd, then removing any vertex~$x$ of~$G$ leaves a connected graph~$G-x$ with an even number~$n-1$ of vertices.
Consequently, by Lemma~\ref{lem:even} and by~\eqref{eq:eG-sum}, $e(G)$ is a sum of even numbers, so it is even.
\end{proof}

We will use the following well-known result due to Dirac about chordal graphs.

\begin{lemma}[\cite{MR130190}]
\label{lem:simplicial}
Any chordal graph with at least two vertices has at least two simplicial vertices.
\end{lemma}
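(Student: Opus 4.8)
The plan is to prove Lemma~\ref{lem:simplicial} by the classical Dirac argument via minimal vertex separators, strengthening the statement so that induction goes through. Specifically, I would prove: \emph{every chordal graph $G$ with at least two vertices has at least two simplicial vertices, and if $G$ is connected and not complete then these two can be chosen to be non-adjacent.} The non-adjacency refinement is what makes the induction self-sustaining, and it is the natural hypothesis to carry.

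First I would dispose of the easy cases. If $G$ is complete, every vertex is simplicial, and since $G$ has at least two vertices we are done. If $G$ is disconnected, it has at least two connected components; each component is chordal with at least one vertex, so by induction (a single vertex being trivially simplicial) each contains a simplicial vertex, and these are non-adjacent across distinct components — giving the required two. So the substantive case is $G$ connected and not complete.

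For that case I would take two non-adjacent vertices $a,b$ and let $S$ be an inclusion-minimal $a$--$b$ separator, with $G_a$ and $G_b$ the components of $G-S$ containing $a$ and $b$ respectively. The argument then splits into two reusable claims. The \emph{first claim} is that $S$ is a clique: by minimality, every vertex of $S$ has a neighbor in both $G_a$ and $G_b$, so any two non-adjacent vertices of $S$ would be joined by one path through $G_a$ and one through $G_b$, producing an induced cycle of length at least four and contradicting chordality. The \emph{second claim} is that a simplicial vertex of $G[V(G_a)\cup S]$ lying in $V(G_a)$ is already simplicial in $G$: since $G_a$ is a component of $G-S$, no vertex of $V(G_a)$ has a neighbor outside $V(G_a)\cup S$, so its neighborhood — and hence the clique condition — is unaffected by passing to $G$. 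Applying the induction hypothesis to the strictly smaller chordal graph $G[V(G_a)\cup S]$ (it omits $b$) yields a simplicial vertex; because $S$ is a clique, at least one simplicial vertex can be taken in $V(G_a)$, and by the second claim it is simplicial in $G$. Running the symmetric argument on $G[V(G_b)\cup S]$ produces a simplicial vertex in $V(G_b)$, and since $V(G_a)$ and $V(G_b)$ are distinct components of $G-S$ the two simplicial vertices are non-adjacent.

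I expect the main obstacle to be the first claim, that the minimal separator $S$ is a clique, since it is where chordality is genuinely used and requires the careful choice of a \emph{minimal} (rather than arbitrary) separator together with the observation that each separator vertex reaches both sides. The second claim is essentially bookkeeping about neighborhoods, and the induction itself is routine once the strengthened statement is in place. It is worth noting that this is Dirac's original theorem, cited here as~\cite{MR130190}, so the proof is standard; the only care needed is to phrase the inductive statement strongly enough (with non-adjacency) that the separator decomposition produces two \emph{distinct} simplicial vertices rather than one.
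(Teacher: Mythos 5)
Your proof is correct: the paper states Lemma~\ref{lem:simplicial} without proof, citing Dirac~\cite{MR130190}, and your argument --- the strengthened induction carrying the non-adjacency clause, the fact that a minimal $a$--$b$ separator $S$ is a clique by chordality, and the recursion into $G[V(G_a)\cup S]$ and $G[V(G_b)\cup S]$ --- is exactly the classical proof of that cited result. The only point worth making explicit is that $G[V(G_a)\cup S]$ is connected (every vertex of the minimal separator has a neighbor in $G_a$), which you need before invoking the non-adjacency part of the induction hypothesis to conclude that the two simplicial vertices cannot both lie in the clique $S$; this follows from the same minimality of $S$ that you already use for the clique claim.
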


The next lemma asserts that in a chordal graph, we can choose any vertex to be first in a PEO.

\begin{lemma}
\label{lem:peo-first}
For any chordal graph~$G$ and any vertex~$x$ of~$G$, there is a PEO starting with~$x$.
\end{lemma}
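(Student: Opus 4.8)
The plan is to induct on the number~$n$ of vertices of~$G$, constructing the perfect elimination ordering from its \emph{last} vertex and relying on Dirac's Lemma~\ref{lem:simplicial} to supply, at each step, a simplicial vertex distinct from~$x$. Recall that in the convention used here a PEO is an ordering $u_1,\ldots,u_n$ in which every vertex, together with its \emph{earlier} neighbors, induces a clique; thus ``starting with~$x$'' means $u_1=x$, and the clique condition at~$u_1$ is vacuous. Dually, placing a vertex~$y$ \emph{last} (as~$u_n$) turns all of its neighbors into earlier neighbors, so the PEO condition at~$u_n$ holds exactly when~$y$ is simplicial, i.e.\ forms a clique together with all its neighbors in~$G$.

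For the base case $n=1$ the single vertex equals~$x$, and the trivial one-element ordering is a PEO starting with~$x$. For the inductive step with $n\ge 2$, I would first apply Lemma~\ref{lem:simplicial} to obtain (at least) two simplicial vertices of~$G$ and select one of them, say~$y$, with $y\neq x$; this choice is possible precisely because we are guaranteed \emph{two} simplicial vertices. Since any induced subgraph of a chordal graph is chordal---its induced cycles are induced cycles of~$G$ and hence triangles---the graph~$G-y$ is chordal, has $n-1$ vertices, and still contains~$x$. By the induction hypothesis it admits a PEO $u_1,\ldots,u_{n-1}$ with $u_1=x$. I would then append~$y$ and verify that $u_1,\ldots,u_{n-1},y$ is a PEO of~$G$: for each $u_i$ with $i\le n-1$, the set of its earlier neighbors is unaffected by appending~$y$ at the very end, so it induces a clique by the induction hypothesis; and the earlier neighbors of~$y=u_n$ comprise \emph{all} of its neighbors in~$G$, which induce a clique since~$y$ is simplicial. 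Hence $u_1=x,\ldots,u_{n-1},y$ is the desired PEO starting with~$x$.

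The one genuine obstacle is avoiding being forced to remove~$x$ too early, and this is exactly what the strength of Lemma~\ref{lem:simplicial} resolves: guaranteeing \emph{two} simplicial vertices (rather than merely one) lets us always keep~$x$ in reserve and eliminate a different simplicial vertex instead. The remaining ingredients---the closure of chordality under induced subgraphs, and the observation that appending a vertex at the end leaves the earlier-neighbor sets of the previously placed vertices untouched---are routine.
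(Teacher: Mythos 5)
Your proof is correct and follows essentially the same route as the paper's: induct on the number of vertices, use Dirac's Lemma~\ref{lem:simplicial} to pick a simplicial vertex $y\neq x$, obtain a PEO of $G-y$ starting with $x$ by induction, and append $y$ at the end. The paper's version is just terser, leaving the base case and the verification of the clique conditions implicit.
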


\begin{proof}
We argue by induction on the number of vertices of~$G$.
By Lemma~\ref{lem:simplicial}, $G$ has two simplicial vertices, at least one of which is different from~$x$, and we pick one such vertex~$y$.
By induction, $G-y$ has a PEO starting with~$x$.
As $y$ is simplicial in~$G$, we can append~$y$ to that ordering and obtain a PEO of~$G$ that starts with~$x$.
\end{proof}

\begin{lemma}
\label{lem:peo-triangle}
If $G$ is chordal and has an $H$-appendix, then it has a PEO that starts with the vertices of~$H$.
\end{lemma}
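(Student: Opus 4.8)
The plan is to exploit the fact that gluing $H$ onto $H'$ at the single vertex~$x$ makes $x$ the only vertex shared by the two parts: we have $V(H)\cap V(H')=\{x\}$, and there are no edges of~$G$ between $V(H)\setminus\{x\}$ and $V(H')\setminus\{x\}$. In particular, both $H$ and $H'$ are \emph{induced} subgraphs of~$G$, hence chordal. I will build the desired PEO of~$G$ as a concatenation of two blocks: a PEO of~$H$ placed first, followed by the vertices of $H'\setminus\{x\}$ appended one at a time. Throughout I use the convention of the present paper that an ordering $w_1,\dots,w_n$ is a PEO exactly when, for each~$i$, the neighbors of~$w_i$ among $\{w_1,\dots,w_{i-1}\}$ form a clique; equivalently, each~$w_i$ is simplicial in the prefix-induced subgraph $G[\{w_1,\dots,w_i\}]$.

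First I would fix an arbitrary PEO of the chordal graph~$H$ and use it as the initial block $w_1,\dots,w_{|V(H)|}$. This is automatically a valid PEO prefix of~$G$: every prefix $\{w_1,\dots,w_i\}$ with $i\le |V(H)|$ lies inside $V(H)$, so $G[\{w_1,\dots,w_i\}]=H[\{w_1,\dots,w_i\}]$, and thus $w_i$ is simplicial in this graph precisely because the chosen ordering is a PEO of~$H$. Next I would apply Lemma~\ref{lem:peo-first} to the chordal graph~$H'$ and the vertex~$x$ to obtain a PEO $x=u_1,u_2,\dots,u_m$ of~$H'$ starting with~$x$, where $m=|V(H')|$. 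I then append $u_2,u_3,\dots,u_m$ after the block for~$H$, in this order.

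The key step is to verify that each appended vertex $u_i$ (for $2\le i\le m$) is simplicial in the corresponding prefix of~$G$, namely in $G[V(H)\cup\{u_2,\dots,u_i\}]$. Here the cut-vertex structure is essential: since $u_i\in V(H')\setminus\{x\}$ and there are no edges between $V(H')\setminus\{x\}$ and $V(H)\setminus\{x\}$, all neighbors of~$u_i$ in~$G$ lie in $V(H')$. Among the vertices preceding~$u_i$ these are exactly the neighbors of~$u_i$ in $H'[\{u_1,\dots,u_i\}]$ (recall $u_1=x\in V(H)$), which form a clique because $u_1,\dots,u_m$ is a PEO of~$H'$. Hence attaching~$H$ at~$x$ does not enlarge the neighborhood of any appendix vertex, and $u_i$ is simplicial in $G[V(H)\cup\{u_2,\dots,u_i\}]$ as required. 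Concatenating the two blocks therefore yields a PEO $w_1,\dots,w_{|V(H)|},u_2,\dots,u_m$ of~$G$ whose first $|V(H)|$ entries are exactly the vertices of~$H$, which is what we want. The only genuinely delicate point — and the one I expect to require care — is precisely this preservation of simpliciality when the appendix vertices are deleted from~$G$ rather than from~$H'$; everything else is bookkeeping. Note also that the $2$-connectivity clauses in the definition of an $H$-appendix are not needed for this lemma, only the fact that $H$ and~$H'$ meet in the single vertex~$x$.
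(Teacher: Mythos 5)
Your proposal is correct and follows exactly the paper's argument: take a PEO of the chordal graph~$H$, obtain a PEO of~$H'$ starting with~$x$ via Lemma~\ref{lem:peo-first}, and concatenate the two with~$x$ removed from the second. The paper states this in three lines without the explicit simpliciality verification; your added check that the appended vertices remain simplicial in the prefixes of~$G$ is the right justification and contains no errors.
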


\begin{proof}
Let $x$ be the vertex of~$G$ that joins~$H$ to the rest of the graph, and let~$H'$ denote the subgraph of~$G$ induced by~$x$ and the vertices not in~$H$; see Figure~\ref{fig:appendix}.
Clearly~$H$ is chordal, so it has a PEO by Lemma~\ref{lem:chordal-PEO}.
Moreover, $H'$ has a PEO starting with~$x$ by Lemma~\ref{lem:peo-first}.
The concatenation of the first PEO and the second one with $x$ removed is the desired PEO for~$G$.
\end{proof}

It is well known that in a connected graph, any prefix of a PEO is also connected.
Interestingly, this property of PEOs in chordal graphs generalizes to $k$-connectedness.

\begin{lemma}
\label{lem:peo-kconn}
If $G=([n],E)$ is a $k$-connected PEO graph, then $G^{[\nu]}$ is $\min\{\nu-1,k\}$-connected for all $\nu=2,\ldots,n$.
\end{lemma}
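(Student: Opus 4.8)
The plan is to argue by \emph{downward} induction on $\nu$, from $\nu=n$ down to $\nu=2$. The base case $\nu=n$ is immediate: $G^{[n]}=G$ is $k$-connected by hypothesis, and since a $k$-connected graph has at least $k+1$ vertices we have $n\geq k+1$, whence $\min\{n-1,k\}=k$. For the inductive step I pass from $G^{[\nu]}$ to $G^{[\nu-1]}=G^{[\nu]}-\nu$. The crucial structural fact is that the top vertex~$\nu$ is \emph{simplicial} in~$G^{[\nu]}$: by the definition of a PEO, its neighbors coming before it form a clique, and within $G^{[\nu]}$ these are exactly all of its neighbors.

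The main technical ingredient I would isolate is the following \emph{bypass lemma}: if $H$ is $k$-connected on at least $k+2$ vertices and $v$ is simplicial in~$H$, then $H-v$ is again $k$-connected. To prove it, take any vertex set $S$ with $|S|\leq k-1$; since $H$ is $k$-connected, $H-S$ is connected, and it has at least $(k+2)-(k-1)-1=2$ vertices even after further removing~$v$. Now $v$ is still simplicial in $H-S$ (its neighborhood there is a subset of the clique $N_H(v)$). A simplicial vertex can be bypassed in any connected graph: any path through $v$ enters and leaves via two vertices of the clique $N_{H-S}(v)$, which are adjacent, so the path can be rerouted to avoid~$v$. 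Hence $(H-S)-v = H-(S\cup\{v\})$ is connected for every such~$S$, i.e.\ $H-v$ is $k$-connected. (The $\nu=2$ endpoint, and $1$-connectedness of prefixes generally, is the special case where the rerouting argument alone shows connectivity is preserved.)

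With this lemma the inductive step splits into two cases according to which term realizes $\min\{\nu-1,k\}$. If $\nu-1\leq k$, then the induction hypothesis says $G^{[\nu]}$ is $(\nu-1)$-connected on $\nu$ vertices, which forces $G^{[\nu]}=K_\nu$ (maximum connectivity on $\nu$ vertices is attained only by the complete graph); hence $G^{[\nu-1]}=K_{\nu-1}$ is $(\nu-2)$-connected, and $\nu-2\leq k$ gives exactly $\min\{\nu-2,k\}$-connectivity. If instead $\nu-1>k$, then $\min\{\nu-1,k\}=k$ and also $\min\{\nu-2,k\}=k$, and since $G^{[\nu]}$ is $k$-connected on $\nu\geq k+2$ vertices with $\nu$ simplicial, the bypass lemma yields that $G^{[\nu-1]}$ is $k$-connected. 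I expect the main obstacle to be cleanly handling the interface between these two regimes — the ``clique regime'' for small prefixes versus the genuine $k$-connectivity regime — and in particular verifying that removing the simplicial vertex \emph{together with} a separator $S$ of size up to $k-1$ (a total of up to $k$ vertices, which in general can disconnect a $k$-connected graph) does not disconnect the graph; this is precisely what the simpliciality of~$v$ buys us in the bypass argument.
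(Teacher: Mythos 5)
Your proof is correct, and it takes a genuinely different route from the paper's. The paper argues by minimal counterexample: it takes the smallest~$\nu$ at which $k$-connectivity of the prefix fails, deduces that $\nu$ then has exactly $k-1$ neighbours among $1,\ldots,\nu-1$, invokes Menger's theorem in the \emph{full} graph~$G$ to get a path from~$\nu$ to a smaller vertex avoiding those neighbours, and derives a contradiction from Lemma~\ref{lem:no-peak} (a shortest path cannot have a ``peak'' at a simplicial vertex). You instead run a downward induction, peeling off the top vertex~$\nu$ of each prefix --- which is simplicial in~$G^{[\nu]}$ precisely by the PEO property --- and rest the whole argument on a clean, self-contained bypass lemma: deleting a simplicial vertex from a $k$-connected graph on at least $k+2$ vertices preserves $k$-connectivity, because after removing any separator candidate~$S$ with $|S|\leq k-1$ the vertex remains simplicial and any path through it can be rerouted across the clique edge joining its two path-neighbours. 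Your handling of the small-prefix regime (where $(\nu-1)$-connectivity on $\nu$ vertices forces $G^{[\nu]}=K_\nu$) is also sound and correctly meshes with the $k$-connectivity regime at the boundary $\nu=k+2$. What your approach buys is self-containedness and elementarity --- no appeal to Menger's theorem, and only the local fact that each prefix-maximal vertex is simplicial; what the paper's approach buys is brevity by reusing Lemma~\ref{lem:no-peak}, which it has already established for other purposes. Both proofs are valid.
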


\begin{proof}
It suffices to prove the cases~$\nu>k$.
Suppose the lemma was false, and consider the smallest counterexample, i.e., the smallest possible~$k$, a $k$-connected PEO graph~$G=([n],E)$ and the smallest $\nu$ such that $G^{[\nu-1]}$ is $k$-connected and $G^{[\nu]}$ is not $k$-connected, which implies that $G^{[\nu]}$ is $(k-1)$-connected.
This means that the vertex~$\nu$ has a set~$N$ of precisely $|N|=k-1$ neighbors among the vertices~$1,\ldots,\nu-1$, because if it had more than that, then removing any $k-1$ of them from~$G^{[\nu]}$ would leave the graph connected, i.e., $G^{[\nu]}$ would be $k$-connected, and if it had less, then~$G^{[\nu]}$ would not be $(k-1)$-connected.
Let $x$ be one of the vertices from~$\{1,\ldots,\nu-1\}\setminus N$, which exists as $\nu-1>k-1$.
As $G$ is $k$-connected, there are $k$ vertex-disjoint paths in~$G$ between~$\nu$ and~$x$, at most $k-1$ of which use a vertex from~$N$, and we let $P$ be such a path starting at~$\nu$, ending at~$x$ and avoiding~$N$.
As $P$ avoids~$N$, all neighbors of~$\nu$ on~$P$ have a higher label than~$\nu$.
On the other hand, the last vertex~$x$ of~$P$ has a smaller label than~$\nu$.
It follows that the subgraph of~$G$ induced by the vertex set of~$P$ contains a shortest path that starts at~$\nu$, ends at~$x$ and contains a triple of vertices~$i,j,k$ in that order with $j>i,k$, which is a contradiction by Lemma~\ref{lem:no-peak}.
\end{proof}

\subsection{Proofs of Theorems~\ref{thm:2conn}--\ref{thm:peo}}

With the previous lemmas in hand, we are now ready to present the proofs of the theorems stated at the beginning of this section.

\begin{proof}[Proof of Theorem~\ref{thm:2conn}]
As $G$ is 2-connected, $G^{[2]}$ is 1-connected, i.e., a single edge, and $G^{[\nu]}$ is 2-connected for all $\nu=3,\ldots,n$ by Lemma~\ref{lem:peo-kconn}.
Consequently, by Lemmas~\ref{lem:even} and~\ref{lem:2-conn}, $|L_\nu|=|\sigma(\cE(G^{[\nu]}))|=e(G^{[\nu]})$ is even for all $\nu=2,\ldots,n$.
The theorem now follows by combining Lemmas~\ref{lem:jump-rot} and~\ref{lem:cyclic}.
\end{proof}

\begin{proof}[Proof of Theorem~\ref{thm:tree}]
As $G$ is a tree, the first three vertices in a PEO of~$G$ must be adjacent by Lemma~\ref{lem:no-peak}.
Consequently, the subgraph~$G^{[3]}$ is a path~$P$ on three vertices, which satisfies~$e(P)=5$, and between its two end vertices the path is either labeled $1,2,3$ or $2,1,3$ in a PEO.
The orderings~$R(\cE(G^{[3]}))$ for these two cases are shown in Figure~\ref{fig:P3-chordal}~(a) and~(b), respectively.
As $e(P)=5$ is odd, the first tree~$T$ in the ordering~$R(\cE(G^{[4]}))$ has the vertex~4 as a descendant of~1, whereas the last tree~$T'$ has 1 as a descendant of~4, but with~2 in between them.
It follows that $T$ and~$T'$ do not differ in a tree rotation, and the same is true for the first tree~$S$ and the last tree~$S'$ in~$R(\cE(G))$, as $p^{n-4}(S)=T$ and $p^{n-4}(S')=T'$.
\end{proof}

\begin{figure}
\centering
\includegraphics[page=3]{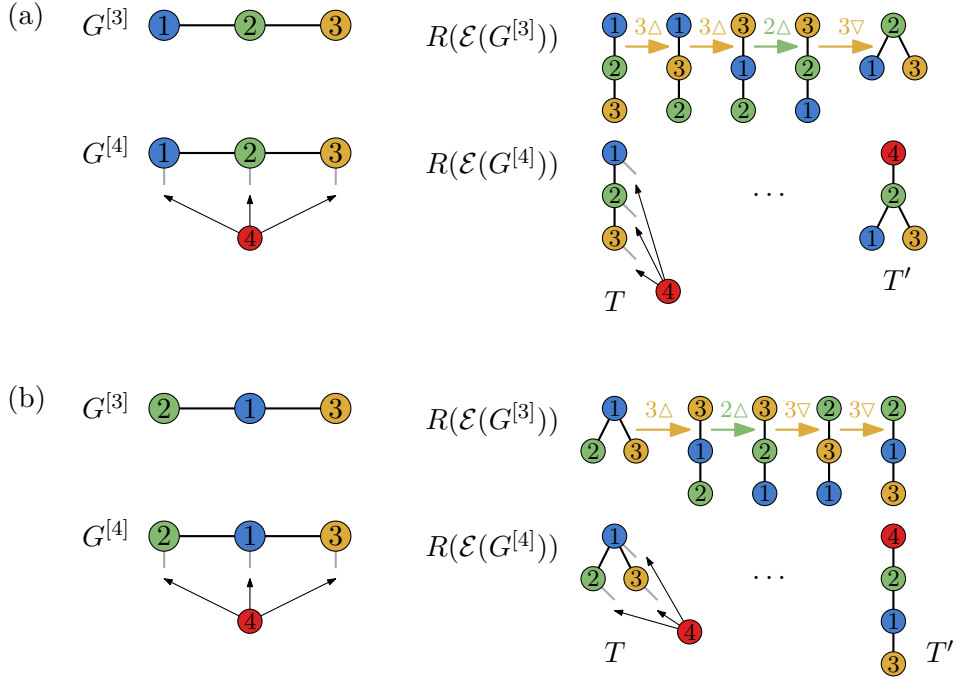}
\caption{The two rotation Gray codes for a path on three vertices.
See also Figure~\ref{fig:P3-non-chordal}.
}
\label{fig:P3-chordal}
\end{figure}

\begin{proof}[Proof of Theorem~\ref{thm:peo}]
We begin proving~(a).
Let $x$ and $H'$ be defined as in the proof of Lemma~\ref{lem:peo-triangle} (recall Figure~\ref{fig:appendix}), and consider a PEO of~$G$ that starts with the vertices in~$H$, which exists by this lemma.
We let $G=([n],E)$ denote the graph equipped with this PEO.

For every $\nu=3,\ldots,n$ that is contained in~$H$, $G^{[\nu]}$ is a subgraph of~$H$, which is assumed to be 2-connected.
Therefore, by Lemma~\ref{lem:peo-kconn}, $G^{[\nu]}$ is 2-connected, and therefore $e(G^{[\nu]})$ is even by Lemma~\ref{lem:2-conn}.
Moreover, $G^{[2]}$ is 1-connected, i.e., a single edge, by Lemma~\ref{lem:peo-kconn}, and therefore $e(G^{[2]})=2$ is even.

We now prove inductively that any induced subgraph~$H^+$ of~$G$ that includes~$H$ has an even number of elimination trees.
This claim implies in particular that $e(G^{[\nu]})$ is even for all $\nu=3,\ldots,n$ that are contained in~$H'-x$.

Note that it suffices to prove our claim for connected subgraphs~$H^+$, as for disconnected~$H^+$, the connected component containing~$H$ contributes an even factor to the product~\eqref{eq:eG-prod}.
So consider a connected induced subgraph~$H^+$ of~$G$ containing~$H$.
If the number of vertices of~$H^+$ is even, then the claim is immediate from Lemma~\ref{lem:even}.
If the number of vertices of~$H^+$ is odd, then \eqref{eq:eG-sum} gives
\begin{equation*}
e(H^+)=e(H^+-x)+\sum_{y\in H-x}e(H^+-y)+\sum_{y\in (H'-x)\cap H^+} e(H^+-y).
\end{equation*}
The graph $H^+-x$ is disconnected, but contains the 2-connected component $H-x$ (as required in the definition of $H$-appendix), and therefore $e(H^+-x)$ is even by Lemma~\ref{lem:2-conn} and~\eqref{eq:eG-prod}.
The graph $H^+-y$ for any $y\in H-x$ is connected, and therefore $e(H^+-y)$ is even by Lemma~\ref{lem:even}.
The graph $H^+-y$ for any $y\in (H'-x)\cap H^+$ is an induced subgraph of~$G$ containing~$H$, so $e(H^+-y)$ is even by induction.
We conclude that $e(H^+)$ is even.

Consequently, $L_\nu:=\Pi(G^{[\nu]})$ has even cardinality for all $\nu=2,\ldots,n$, so the theorem follows by combining Lemmas~\ref{lem:jump-rot} and~\ref{lem:cyclic}.

To prove~(b), we choose a PEO that labels the vertices of the isolated edge of~$G$ by~$1,2$, and the remaining vertices arbitrarily.
We have $|L_2|=2$ and $L_\nu$ has even cardinality for all $\nu=3,\ldots,n$ by~\eqref{eq:eG-prod}.

To prove~(c), we choose a PEO that labels the vertices of~$G-y$ by~$1,\ldots,n-1$ as explained in the proofs of~(a), (b) and the proof of Theorem~\ref{thm:2conn}, and we label the vertex~$y$ last.
We obtain that $L_\nu$ has even cardinality for all $\nu=2,\ldots,n-1$, so Lemma~\ref{lem:cyclic} still applies.
\end{proof}

\section{Chordality is necessary}
\label{sec:necessity}

By Theorem~\ref{thm:jump-elim}, Algorithm~R generates all elimination forests~$\cE(G)$ of a graph~$G$, provided that~$G$ is chordal with vertices in perfect elimination order.
We now show that the converse also holds, i.e., the algorithm fails to generate~$\cE(G)$ for non-chordal graphs~$G$ (Theorem~\ref{thm:necessity} below), regardless of the choice of the initial elimination forest~$F_0$.
As Algorithm~R is oblivious of the notion of chordal graphs, this is an interesting characterization of graph chordality.

The following lemma generalizes some of our earlier observations about the behavior of Algorithm~R to graphs that are not necessarily chordal.

\begin{lemma}
\label{lem:project}
Let $G=([n],E)$ be a graph with simplicial vertices~$s,s+1,\ldots,n$.
For any $F_0\in\cE(G)$, let $\cF=(F_0,\ldots,F_t)$ be the sequence of elimination forests produced by Algorithm~R through $t$ up- or down-rotations of vertices~$j\in\{s,\ldots,n\}$.
Then for every step~$F_{t-1}\rightarrow F_t$ in which Algorithm~R up- or down-rotates a vertex~$j\in\{s,\ldots,n\}$ we have the following:
\begin{enumerate}[label=(\alph*),leftmargin=6mm, noitemsep, topsep=3pt plus 3pt]
\item If $j<n$, then the vertex~$n$ is a root or a leaf in~$F_{t-1}$.
\item If $j=n$, then all further up-rotations or down-rotations of~$n$, respectively, until $n$ is a root or a leaf, yield elimination forests that are unvisited, i.e., that do not appear in the sequence~$\cF=(F_0,\ldots,F_t)$.
\end{enumerate}
\end{lemma}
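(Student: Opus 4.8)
The plan is to exploit that, although $G$ need not be chordal, the top vertex $n$ is simplicial, so by Lemma~\ref{lem:smaller-child-j} it has at most one child in any elimination forest, and hence the deletion map $p$ and the insertion maps $c_i$ from Section~\ref{sec:del-ins} apply verbatim to $n$. First I would record the chain structure this gives: for a fixed $F'\in\cE(G-n)$, the forests $F\in\cE(G)$ with $p(F)=F'$ are exactly $c_1(F'),\dots,c_{\lambda+1}(F')$ with $\lambda=\lambda(F')$ (Lemmas~\ref{lem:delete} and~\ref{lem:insert}), consecutive forests in this chain differ by an up- or down-rotation of $n$, and $c_1(F')$ (resp.\ $c_{\lambda+1}(F')$) is the forest in which $n$ is a root (resp.\ a leaf). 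Since $n$ is the largest vertex and simplicial, a rotation $n{\diru}$ or $n{\dird}$ is never ambiguous, so moving along this chain is the only way Algorithm~R can act on $n$.

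I would prove the statement by induction on $n$, strengthening it to also assert the following projection invariant: the sequence obtained from $\cF$ by deleting the steps that rotate $n$ and applying $p$ to every forest (collapsing the resulting repetitions) is exactly the prefix of the run of Algorithm~R on $G-n$ (whose vertices $s,\dots,n-1$ remain simplicial) started at $p(F_0)$ and restricted to rotations of $\{s,\dots,n-1\}$. The induction hypothesis applied to $G-n$ then guarantees that this projected run visits each forest of $G-n$ at most once, so no two maximal blocks of consecutive indices of $\cF$ with constant projection can share the same projection value; in particular, when a new block with projection $F'$ begins, the whole chain $c_1(F'),\dots,c_{\lambda+1}(F')$ is still fresh. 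I would also record at the outset that $\cF$ is non-trivial only if $n$ is a root or a leaf in $F_0$: otherwise $n$ lies strictly inside its chain and both $n{\diru}$ and $n{\dird}$ reach unvisited forests, so step~R2 halts immediately by ambiguity and there is nothing to prove.

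The inductive step is an analysis of the current step according to the rotated vertex. If the step rotates $n$, the projection is unchanged; the reverse rotation of $n$ lands in the already-visited part of the chain (so there is no ambiguity and Algorithm~R is indeed forced to continue in a fixed direction), while the forest actually reached is the next fresh one. Thus the visited forests of the current block form a contiguous sub-chain swept monotonically from a chain-boundary inward, and every forest lying ahead on the chain up to the opposite boundary is still unvisited; this is precisely~(b). If instead the step rotates some $j<n$, then since Algorithm~R examines $n$ before $j$ and did not rotate it, neither $n{\diru}$ nor $n{\dird}$ reaches an unvisited forest; combined with the sweep description this forces the current block to be a completed sweep, so $F_{t-1}$ sits at a chain-boundary, i.e.\ $n$ is a root or a leaf, which is~(a). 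To keep the projection invariant alive across such a step, I would note that the rotation of $\{i,j\}$ leaves $n$ at its boundary, so $F_t$ is again a boundary forest of its (by freshness, untouched) chain and a new sweep begins there.

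The main obstacle is the commutation claim underlying both the projection invariant and the transition at a $j<n$ step: a rotation of an edge $\{i,j\}$ with $i<j<n$ performed while $n$ is a root or a leaf descends to the \emph{same} rotation in $p(F)$, with the up/down direction and the visited/unvisited status of the two forests matching between $G$ and $G-n$. This is the exact analogue of Case~(ii) in the proof of Lemma~\ref{lem:jump-rot}, but it must be re-derived here from scratch since $G$ is not assumed to be a PEO graph; the only facts available are that $n$ is a root or a leaf and that $n\neq i,j$. I expect to dispatch it by a short case analysis: deleting a root promotes its unique child and deleting a leaf simply removes it, and in either case the removal of $n$ touches neither the edge $\{i,j\}$ nor the local reconnection of subtrees performed by its rotation, so the finitely many sub-cases---distinguished by whether the subtree containing $n$ is one of those reparented by the rotation---all commute. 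Once this is in place, the projected run reads off as Algorithm~R on $G-n$, the induction closes, and~(a) and~(b) follow.
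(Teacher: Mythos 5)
Your proposal is correct and follows essentially the same route as the paper's proof: both arguments rest on the chain structure $c_1(F'),\dots,c_{\lambda+1}(F')$ over each forest of $G-n$ (available because $n$ is simplicial, via Lemma~\ref{lem:smaller-child-j} and the deletion/insertion operations), the monotone sweep of $n$ along its chain, and the projection of the run to a run of Algorithm~R on $G-n$ to settle visited/unvisited status. The differences are organizational rather than substantive: the paper phrases this as a single induction on the step index $t$ with (a) and (b) as the invariants and derives the projection property inside case~(ii), whereas you wrap the same step analysis in an induction on $n$ with the projection property as the strengthened invariant, and you are somewhat more explicit about the commutation of a $j<n$ rotation with deletion of $n$ and about the degenerate case where $F_0$ has $n$ strictly inside its chain.
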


By Lemma~\ref{lem:smaller-child-j}, in every elimination forest~$F$ of~$G$, every vertex~$j\in\{s,\ldots,n\}$ has at most one child that is smaller than~$j$.
Consequently, the deletion operation introduced in Section~\ref{sec:del-ins} can be applied to all vertices~$j=n,n-1,\ldots,s$ in~$F$, and the resulting forest~$p^{n-j+1}(F)$ is an elimination forest for the graph~$G^{[j-1]}$.

\begin{proof}
We argue by induction on~$t$.
The base case is trivial.

For the induction step let $t\geq 1$, and consider the elimination forests~$F_{t-1}$ and~$F_t$, and suppose that $F_t$ is obtained from~$F_{t-1}$ by up-rotating~$j\in\{s,\ldots,n\}$.
We need to argue that the invariants~(a) and~(b) are satisfied for~$F_{t+1}$ that is obtained from~$F_t$.
We first assume that~$j=n$, and we distinguish whether~$j=n$ is a root in~$F_t$ or not (cases~(i) and~(ii) below).
We will see that the case $j<n$ is covered in case~(ii), by also considering the next step~$F_{t+1}\rightarrow F_{t+2}$.

{\bf Case~(i):}
If $j=n$ is not a root in~$F_t$, then by~(b) and the definition of step~R2 of Algorithm~R, the vertex~$j=n$ up-rotates again in the next step, i.e., $F_{t+1}$ is obtained from~$F_t$ by up-rotating~$j=n$, and the invariant~(b) remains satisfied.

{\bf Case~(ii):}
If $j=n$ is the root of an elimination tree in~$F_t$, then down-rotating~$n$ yields the already visited elimination forest~$F_{t-1}$.
Consequently, Algorithm~R considers vertices~$j<n$ for rotation.
By~(a), whenever a vertex~$j<n$ was rotated in the sequence~$\cF=(F_0,\ldots,F_t)$, the vertex~$n$ was a root or a leaf.
Consequently, deleting the vertex~$n$ from every elimination forest of this subsequence of~$\cF$, we obtain a sequence of elimination forests for~$G-n$ that Algorithm~R produces for the graph~$G-n$, and the rotation operation applied by the algorithm to reach the next elimination forest in this subsequence is the same as the one applied to reach~$F_{t+1}$, i.e., $p(F_{t+1})$ is unvisited in this subsequence if and only if $F_{t+1}$ is unvisited in the entire sequence (this is because $n$ is a root in~$F_t$).
Moreover, all elimination forests obtained by down-rotating the vertex~$n$ from the root until it is a leaf are unvisited, as they give the same elimination forest for~$G-n$ when deleting~$n$, i.e., in the step~$F_{t+1}\rightarrow F_{t+2}$ the algorithm down-rotates~$n$ and the invariant~(b) is maintained.

To complete the proof, it remains to consider the case where $F_t$ is obtained from~$F_{t-1}$ by down-rotating~$j\in\{s,\ldots,n\}$.
This argument is analogous to the previous one, by exchanging the words `up' and `down', as well as `root' and `leaf'.
\end{proof}

\begin{lemma}
\label{lem:clean-rots-j}
Let $G=([n],E)$ be a graph with simplicial vertices~$s,s+1,\ldots,n$.
Then for any initial elimination forest~$F_0\in\cE(G)$, whenever Algorithm~R performs an up- or down-rotation of a vertex~$j\in[n]$, then for every vertex~$k=\max\{j+1,s\},\ldots,n$ the vertices smaller than~$k$ in its elimination tree are either all descendants of~$k$, or none of them is a descendant of~$k$.
\end{lemma}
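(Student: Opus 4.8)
The plan is to prove the statement by induction on $n$, peeling off the largest vertex $n$ (which is simplicial, since $s\le n$), and to use Lemma~\ref{lem:project} to control the movement of $n$. Recall that cleanness at a value $k$ means that in the tree containing $k$ the vertices smaller than $k$ are either all descendants of $k$ or none of them is; in particular this holds automatically whenever $k$ is a \emph{root} (all smaller vertices in its tree are descendants) or a \emph{leaf} (none is). The base case is vacuous: once the graph has been reduced below level $s$ there are no simplicial top vertices and the index range $\{\max\{j+1,s\},\dots\}$ is empty. Likewise, for a rotation of $j=n$ the range $\{\max\{n+1,s\},\dots,n\}$ is empty, so that case is trivial as well; hence I may assume $j<n$.

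The crux is the following sub-claim: \emph{whenever Algorithm~R rotates a vertex $j<n$, the vertex $n$ is a root or a leaf of its elimination tree.} This immediately yields cleanness at $k=n$. To prove it, I would first observe that a rotation of an edge $\{i,j'\}$ with $i,j'<n$ preserves whether $n$ is a root or a leaf: such a rotation only swaps $i$ and $j'$ and reattaches subtrees strictly below the root $n$ (when $n$ is the root), and never attaches a subtree to the childless $n$ (when $n$ is a leaf). Consequently $n$ can only enter an interior position by a rotation of $n$ itself. But whenever $n$ is interior it has a unique smaller child by Lemma~\ref{lem:smaller-child-j} (so its down-rotation is unambiguous), and by Lemma~\ref{lem:project}(b) a further rotation of $n$ in its current sweep direction always produces an unvisited forest; since Algorithm~R always rotates the largest admissible vertex, it keeps rotating $n$ until $n$ reaches a root or a leaf and only then turns to a smaller vertex. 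Hence $n$ is a root or a leaf at every step that rotates some $j<n$, which is exactly the sub-claim (and for $s\le j<n$ this is also precisely Lemma~\ref{lem:project}(a)).

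With the sub-claim in hand, I would treat the remaining values $\max\{j+1,s\}\le k\le n-1$ by projection to $G-n$. Since deleting a vertex preserves simpliciality of the others, $G-n$ has simplicial vertices $s,\dots,n-1$, so the induction hypothesis applies to it. Using the sub-claim, $n$ is a root or a leaf at every step rotating a $j<n$, so deleting $n$ commutes with that rotation and collapses the intervening rotations of $n$; by the argument in the proof of Lemma~\ref{lem:project} (case~(ii)), this projected sequence is exactly the run of Algorithm~R on $G-n$ started from the deletion of $n$ in $F_0$ — the elimination-forest form of the recursion~\eqref{eq:JLn12}. Because $n$ is a root or a leaf, deleting it does not change any ancestor--descendant relation among the vertices smaller than $n$; therefore cleanness at each $k\le n-1$ for the rotation of $j$ in the $G$-run is equivalent to cleanness at $k$ for the corresponding rotation in the $G-n$ run, which holds by the induction hypothesis. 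Combining this with cleanness at $k=n$ completes the induction.

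The main obstacle is making the projection fully rigorous across steps that rotate vertices below $s$. Lemma~\ref{lem:project} is phrased for the block of rotations of the top vertices $\{s,\dots,n\}$, whereas the run considered here interleaves rotations of arbitrary vertices; one must check that the visited/unvisited status, on which Algorithm~R's greedy choices depend, is respected by the deletion of $n$ throughout the \emph{entire} run, so that the projected greedy choices genuinely coincide with those of Algorithm~R on $G-n$. This bookkeeping is exactly what the proof of Lemma~\ref{lem:project} carries out, and the work here amounts to threading it through the induction.
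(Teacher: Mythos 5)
Your proof is correct; it rests on the same two pillars as the paper's own proof --- Lemma~\ref{lem:project}(b) combined with the greedy choice of the largest vertex in step~R2, and Lemma~\ref{lem:smaller-child-j} --- but organizes them differently. The paper argues inside a single elimination forest: it calls a vertex~$k$ \emph{up-blocked} if it is a root or has a larger parent and \emph{down-blocked} if it is a leaf or all its children are larger, deduces that every $k\geq\max\{j+1,s\}$ is up- or down-blocked at the moment $j$ is rotated, and then converts blockedness into the descendant dichotomy by a downward induction on $k=n,n-1,\ldots,\max\{j+1,s\}$ within that forest. You instead induct on~$n$, proving the stronger statement that $n$ itself is a root or a leaf (which for $k=n$ is equivalent to being blocked, since $n$ is the maximum of its tree) and handling all levels $k\leq n-1$ by projecting the run to~$G-n$. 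Unfolded, your induction asserts that $k$ is a root or a leaf of its tree in $p^{n-k}(F)$, which is literally the descendant dichotomy, so you reach the conclusion without the intermediate blockedness notion. Your explicit observation that a rotation of an edge with both endpoints below~$n$ cannot move~$n$ out of a root or leaf position is a clean way to extend Lemma~\ref{lem:project}(a) from the block of rotations of vertices in $\{s,\ldots,n\}$ to the entire run --- a step the paper's one-sentence justification of blockedness leaves implicit. The projection bookkeeping you flag as the main remaining obstacle is exactly what the paper carries out in the proofs of Lemmas~\ref{lem:project} and~\ref{lem:non-peo-4}, so invoking it is legitimate; the cost of your route is that this projection must be threaded through every level of the induction rather than invoked once.
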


Lemma~\ref{lem:clean-rots-j} implies Lemma~\ref{lem:clean-rots} stated in Section~\ref{sec:zigzag-chordal}.

\begin{proof}
We say that a vertex~$j\in[n]$ in an elimination tree for~$G$ is \emph{up-blocked}, if $j$ is the root of the tree or the parent of~$j$ is larger than~$j$.
Similarly, we say that~$j$ is \emph{down-blocked}, if $j$ is a leaf of the elimination tree or all children of~$j$ are larger than~$j$.
As a consequence of Lemma~\ref{lem:project}~(b) and the choice of the largest vertex to rotate in step~R2 of Algorithm~R, in every step where the algorithm up- or down-rotates a vertex~$j\leq [n]$, all vertices~$k=\max\{j+1,s\},\ldots,n$ are up-blocked or down-blocked in the current elimination forest.

It follows inductively for $k=n,n-1,\ldots,\max\{j+1,s\}$ that if $k$ is up-blocked, then the vertices smaller than~$k$ in its elimination tree are all descendants of~$k$ (recall Lemma~\ref{lem:smaller-child-j}), whereas if $k$ is down-blocked, then none of them is a descendant of~$k$.
\end{proof}

The next lemma isolates the reason for failure of Algorithm~R.
It is motivated by the example of the 4-cycle shown in Figure~\ref{fig:C4}, where the algorithm terminates with the elimination tree~$T_{16}$ due to ambiguity about several possible rotations of the largest vertex~4 in this tree.

\begin{lemma}
\label{lem:non-peo-4}
Let $G=([n],E)$ be a connected non-PEO graph, and let $\ell$ be the largest non-simplicial vertex.
If $\ell\geq 4$, then Algorithm~R terminates without exhaustively generating~$\cE(G)$, for any initial elimination forest~$F_0\in\cE(G)$.
\end{lemma}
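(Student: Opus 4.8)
The plan is to reduce the statement to an analysis of the single largest vertex~$\ell$ and then to exhibit an unavoidable ambiguity in its rotations. First I would invoke Lemma~\ref{lem:project} with $s=\ell+1$: since $\ell$ is the largest non-simplicial vertex, the vertices $\ell+1,\dots,n$ are all simplicial, so the hypotheses of that lemma hold for every $F_0\in\cE(G)$. Combined with Lemma~\ref{lem:clean-rots-j}, this yields the crucial phase structure of the run: whenever Algorithm~R rotates a vertex strictly smaller than~$\ell$, the vertex~$\ell$ is a root or a leaf of its elimination tree, so $\ell$ occupies an intermediate position only transiently, during a maximal monotone sequence of its own up- or down-rotations. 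Deleting the (cleanly positioned) vertices $\ell+1,\dots,n$ projects the whole run to a run of Algorithm~R on the connected graph~$G^{[\ell]}$ (it stays connected because a simplicial vertex is never a cut vertex), started from $p^{n-\ell}(F_0)$, and a premature, non-exhaustive termination of the projected run lifts back to one for~$G$. Hence it suffices to prove the claim for~$G^{[\ell]}$, where $\ell$ is the largest vertex, is non-simplicial, and $\ell\ge 4$.

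Next I would use non-simpliciality to locate the trap. By definition $\ell$ has two smaller neighbours $i,k$ with $\{i,k\}\notin E$. Since in any elimination tree every pair of adjacent vertices is in ancestor--descendant relation, $\ell$ is comparable to both $i$ and~$k$; moreover $\ell$ cannot be a descendant of both, as that would force $i$ and~$k$ to be comparable through~$\ell$. Consequently, whenever $i$ and~$k$ lie in the subtree below~$\ell$ they must sit in two \emph{distinct} child-subtrees of~$\ell$, i.e.\ $\ell$ has at least two smaller children. Tracing~$\ell$ during its descent shows this branching is unavoidable: as $\ell$ moves downward it absorbs its neighbours into its own subtree — exactly as in the prototypical $4$-cycle, where one down-rotation of~$4$ from $4(2(1,3))$ already produces $2(4(1,3))$ with~$4$ carrying the two children $1,3$ — and at such a forest the operation $\ell{\dird}$ is no longer well-defined because there are two candidate children to rotate. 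Any such forest is therefore a potential point of ambiguous termination in step~R2.

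The remaining, and main, step is the \emph{forcing}: I must show that the greedy run is actually driven into a forest at which $\ell$ admits two distinct \emph{unvisited} rotations — two down-rotations (past~$i$ and past~$k$), or one up- and one down-rotation — before every forest of $\cE(G^{[\ell]})$ has been produced. Because $\ell$ is the largest vertex, Algorithm~R rotates it with top priority, so its behaviour is governed entirely by how the neighbours $i,k$ are swept into and out of its subtree; I would analyse the order in which the forests differing only in the placement of~$\ell$, $i$ and~$k$ are visited, using the phase structure above to control what the smaller vertices do while $\ell$ is parked at a root or a leaf, and argue that at the first branching forest the run reaches, both continuations are still unvisited, so R2 halts. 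The hypothesis $\ell\ge 4$ enters precisely here: it guarantees enough structure below~$\ell$ for two continuations to be simultaneously new. This is sharp — for $\ell=3$ (e.g.\ the path $1{-}3{-}2$) every branching forest is reached with only a single unvisited continuation, so the algorithm threads through it and succeeds, which is why the lemma must exclude that case.

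I expect the forcing to be the hard part. The delicate point is ruling out that the algorithm always ``sneaks through'' each branching forest with exactly one unvisited neighbour at a time (as genuinely happens when $\ell=3$). Making this rigorous requires careful bookkeeping of the visiting order around the branch — most naturally a first/last-visited or parity argument over the small family of forests related by moving~$\ell$ past~$i$ and~$k$ — together with the phase structure from Lemmas~\ref{lem:project} and~\ref{lem:clean-rots-j}. Once the ambiguity is established at a forest reached strictly before $\cE(G^{[\ell]})$ is exhausted, step~R2 terminates the run non-exhaustively, and the reduction of the first paragraph transports this failure back to~$G$, completing the proof.
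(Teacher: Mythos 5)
Your reduction to $G^{[\ell]}$ via Lemma~\ref{lem:project} and Lemma~\ref{lem:clean-rots-j} with $s=\ell+1$ is exactly the paper's first step, and your identification of the obstruction (a forest in which $\ell$ has the two non-adjacent neighbours $i,k$ in distinct child-subtrees, making the next rotation of $\ell$ ambiguous) is the right target. But the proof has a genuine gap precisely where you flag it: the ``forcing'' step is never carried out, only described as requiring ``careful bookkeeping'' or ``a parity argument''. As stated, your configuration does not force termination. A forest in which $\ell$ has two smaller children admits one up-rotation and two down-rotations of $\ell$; the greedy run can in principle arrive there having already visited all but one of these neighbours and thread through, which is exactly what happens for $\ell=3$. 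Nothing in your sketch rules this out for $\ell\geq 4$, and you never say where the extra structure guaranteed by $\ell\geq 4$ concretely comes from.

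The paper closes this gap with a construction you are missing: since $\ell\geq 4$ and $G^{[\ell]}$ is connected, there is a \emph{third} vertex $k<\ell$ adjacent to a nonempty subset of $\{i,j,\ell\}$ (here $i,j$ are the two non-adjacent smaller neighbours of $\ell$). Removing $[\ell]\setminus\{i,j,k,\ell\}$ first and then $k,\ell,i,j$ in this order produces a specific forest $F$ with \emph{three} neighbours $F^i,F^j,F^k$, each obtained from $F$ by rotating an edge incident with $\ell$, the top-priority vertex. Now a pigeonhole argument on the visiting order finishes the proof without any delicate bookkeeping: at least two of $F^i,F^j,F^k$ lie on the same side of $F$ in the (assumed exhaustive) run. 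If two come after $F$, then at $F$ there are two unvisited continuations via rotations of $\ell$ and step~R2 halts. If two come before $F$, then at the first of them the greedy rule would have to rotate $\ell$ towards the then-unvisited $F$ and visit $F$ immediately next, contradicting that $F$ appears only after the second. It is the degree-three configuration, not a parity count over the forests obtained by moving $\ell$ past $i$ and $k$, that makes the argument work; with only the two neighbours your sketch uses, the before/after dichotomy does not yield a contradiction.
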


\begin{proof}
Note that $G^{[\ell]}$ is connected and has~$\ell\geq 4$ vertices.
As $\ell$ is not simplicial, it has two neighbors~$i,j<\ell$ that are not joined by an edge.
Moreover, as $\ell\geq 4$ and~$G^{[\ell]}$ is connected, there is another vertex~$k<\ell$ connected to a nonempty subset of~$\{i,j,\ell\}$; see Figure~\ref{fig:non-chordal}~(a).
We consider the elimination forest~$F$ of~$G^{[\ell]}$ obtained by removing the vertices~$[\ell]\setminus \{i,j,k,\ell\}$ in any order, and then $k,\ell,i,j$ in this order; see Figure~\ref{fig:non-chordal}~(b).
Moreover, we let $F^i$, $F^j$ and~$F^k$ denote the elimination forests obtained from~$F$ by rotating the tree edge~$\{\ell,i\}$, $\{\ell,j\}$ or~$\{\ell,k\}$, respectively.

\begin{figure}[b!]
\centering
\includegraphics{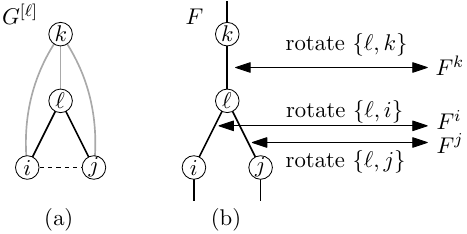}
\caption{The elimination forest~$F$ has three neighbors~$F^i,F^j,F^k$ obtained by up- or down-rotations of~$\ell$.
The dashed line indicates that $i$ and~$j$ are not adjacent in~$G$, and the three gray lines incident with~$k$ indicate that at least one of them is an edge in~$G$.}
\label{fig:non-chordal}
\end{figure}

Let $\cF=(F_0,\ldots,F_T)$ be the sequence of elimination forests for~$G$ produced by Algorithm~R initialized with~$F_0$ until the algorithm terminates.

By the choice of~$\ell$, all vertices $\ell+1,\ldots,n$ are simplicial.
Applying Lemma~\ref{lem:clean-rots-j} with $s:=\ell+1$ shows that in every step~$F_{t-1}\rightarrow F_t$ where Algorithm~R up- or down-rotates a vertex~$j\leq \ell$, then for every vertex~$k=\ell+1,\ldots,n$ the vertices smaller than~$k$ in its elimination tree are either all descendants of~$k$, or none of them is a descendant of~$k$, i.e., these rotations are clean (recall the definition of clean rotations given after Lemma~\ref{lem:clean-jumps}).
By considering the subsequence of~$\cF$ of those elimination forests and deleting the vertices~$n,n-1,\ldots,\ell+1$, we obtain a sequence~$\cF'$ of elimination forests for~$G^{[\ell]}$ that Algorithm~R produces for the graph~$G^{[\ell]}$, and the rotation operation applied by the algorithm to reach the next elimination forest in~$\cF'$ is the same as the one applied to reach the next elimination forest in the corresponding step in~$\cF$.

Suppose for the sake of contradiction that the sequence~$\cF$ contains every elimination forest of~$\cE(G)$ exactly once, then in particular~$\cF'$ must contain~$F$, $F^i$, $F^j$ and~$F^k$, in any order.
If $F$ comes before at least two elimination forests from the set~$\cR:=\{F^i,F^j,F^k\}$, then the algorithm would have terminated when encountering~$F$, as there are two possible edges in~$F$ with end vertex~$\ell$ such that rotating them leads to previously unvisited elimination forests from~$\cR$, a contradiction.
On the other hand, if $F$ comes after at least two elimination forests from~$\cR$, then when encountering the first of them, the algorithm would have rotated an edge with end vertex~$\ell$, and since the algorithm did not terminate because of ambiguity, it must have produced~$F$ next, a contradiction because~$F$ appears only after the second elimination forest from~$\cR$.

This completes the proof.
\end{proof}

The smallest connected non-PEO graph~$G=([n],E)$ not captured by Lemma~\ref{lem:non-peo-4} is the path on three vertices with the largest vertex~3 in the middle.
In fact, Algorithm~R succeeds in exhaustively generating all five elimination trees for this graph, despite the fact that this is \emph{not} a PEO; see Figure~\ref{fig:P3-non-chordal}.

\begin{figure}[h!]
\centering
\includegraphics[page=4]{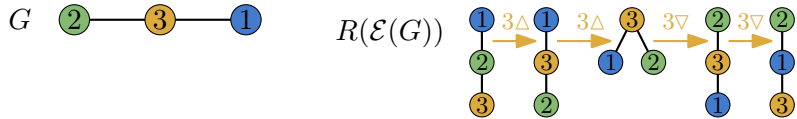}
\caption{A non-PEO graph~$G$ and the Gray code of its elimination trees obtained from Algorithm~R.
Note that in the third elimination tree, the vertex~3 has \emph{two} smaller children, namely~1 and~2, and that vertex~2 never rotates up or down (cf.\ Lemma~\ref{lem:smaller-child} and Figure~\ref{fig:P3-chordal}).
}
\label{fig:P3-non-chordal}
\end{figure}

\begin{theorem}
\label{thm:necessity}
If $G$ is not chordal, then for any ordering of the vertices of~$G$ and any initial elimination forest~$F_0\in\cE(G)$, Algorithm~R fails to exhaustively generate~$\cE(G)$.
\end{theorem}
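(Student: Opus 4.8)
The plan is to reduce the statement to the connected case already settled by Lemma~\ref{lem:non-peo-4}. First I would record two structural facts. Since $G$ is not chordal, by Lemma~\ref{lem:chordal-PEO} it admits no PEO, so \emph{every} ordering of its vertices makes it a non-PEO graph; this is what lets us quantify over all orderings at once. Second, fix an induced cycle $C$ of length at least four. Within $C$ every vertex has two non-adjacent neighbours, so every vertex of $C$ is non-simplicial in $G$; as $|C|\ge 4$, there are at least four non-simplicial vertices, and hence the largest non-simplicial vertex has label at least $4$. When $G$ is connected these two observations are exactly the hypotheses of Lemma~\ref{lem:non-peo-4}, which gives premature termination directly, so the real content is the reduction of a disconnected (or adversarially ordered) $G$ to this case.

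I would carry out the reduction by strong induction on $n$, the base case being the smallest non-chordal graph $C_4$, which is connected with largest non-simplicial vertex $4$ and is handled by Lemma~\ref{lem:non-peo-4} (cf.\ Figure~\ref{fig:C4}). For the inductive step I would \emph{peel} top vertices: whenever the largest vertex $n$ is simplicial in the current graph, Lemma~\ref{lem:smaller-child-j} ensures it has at most one smaller child, and Lemmas~\ref{lem:project} and~\ref{lem:clean-rots-j} show that an exhaustive run of Algorithm~R on $G$ restricts, after deleting $n$, to an exhaustive run on $G-n$; contrapositively, failure on $G-n$ forces failure on $G$. Crucially, a vertex on the induced cycle $C$ is never simplicial, so peeling never removes a cycle vertex and must stop at some $H=G^{[m]}$ that is still non-chordal and whose top vertex $m$ is non-simplicial in $H$. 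Thus it suffices to prove failure on such an $H$.

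At $H$ the vertex $m$ has two non-adjacent neighbours $i,j<m$. If the connected component $D$ of $m$ in $H$ contains a fourth vertex, I would reproduce inside $D$ the configuration $F,F^i,F^j,F^k$ of Lemma~\ref{lem:non-peo-4} (extending arbitrarily on the remaining components); since $m$ is the global maximum it is the first vertex Algorithm~R examines, so the very same pigeonhole argument applies verbatim: whichever two of $F^i,F^j,F^k$ lie on the same side of $F$ force either an ambiguous rotation of $m$ or an immediate forced return to $F$, and in either case the algorithm terminates prematurely.

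The hard part will be the leftover case in which $D$ has exactly three vertices, i.e.\ $D$ is the path $i-m-j$, while all non-chordal structure lives in $W:=H-D$, whose labels all lie below $m$. Here $m$ has only two rotation-neighbours, so the pigeonhole breaks down (indeed on a standalone such path Algorithm~R \emph{succeeds}, Figure~\ref{fig:P3-non-chordal}), and the non-simpliciality of $m$ blocks any further peeling; an adversary can always force this configuration (for instance placing the centre of a $P_3$ above a disjoint $C_4$), so it must be met head-on. My plan is to prove a disjoint-union reduction: using $\cE(D\sqcup W)=\cE(D)\times\cE(W)$ from~\eqref{eq:eG-prod}, together with the fact that rotations of $D$-vertices never alter the $W$-part of a forest, I would show that an exhaustive greedy run on $D\sqcup W$ restricts to an exhaustive greedy run on $W$, so that the inductive failure of Algorithm~R on the smaller non-chordal graph $W$ propagates back up to $H$ and hence to $G$. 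The delicate point is that ``unvisited'' is measured on the product $\cE(D)\times\cE(W)$ rather than on $\cE(W)$ alone; turning the restriction into a legitimate run of Algorithm~R on $W$ requires showing that the $D$-rotations are scheduled around the $W$-rotations (again because $m$ and the other $D$-vertices are always the currently largest movable vertices) without disturbing either the relative order or the unvisited status of the $W$-forests. This decomposition lemma is where essentially all of the remaining work lies.
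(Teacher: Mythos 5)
Your handling of the connected case is exactly the paper's: Lemma~\ref{lem:non-peo-4} applies because the largest non-simplicial vertex is at least~$4$. One slip along the way: you argue that \emph{every} vertex of the induced cycle~$C$ is non-simplicial, so that there are at least four non-simplicial vertices. But the notion of ``simplicial'' that Lemmas~\ref{lem:project}, \ref{lem:clean-rots-j} and~\ref{lem:non-peo-4} use is the \emph{ordered} one (a clique together with the \emph{smaller} neighbours), and under that notion a cycle vertex whose two cycle-neighbours both carry larger labels may well be simplicial. Only the largest-labelled vertex of~$C$ is guaranteed non-simplicial; since it is the maximum of at least four distinct labels it already has label at least~$4$, so your conclusion survives, but via the paper's argument rather than yours.

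The genuine gap is the ``hard case'' of your induction: the decomposition lemma for $D\sqcup W$ that you describe but do not prove, and which you yourself call ``essentially all of the remaining work.'' As written the proof is therefore incomplete. It is also worth seeing why the paper never meets this case. The paper does not centre the argument on the largest non-simplicial vertex of the whole graph (whose component may indeed be a bare $P_3$, as in your $C_4\sqcup P_3$ example); it applies Lemma~\ref{lem:non-peo-4} to the connected component containing the induced cycle. Viewed as a standalone connected non-PEO graph with the inherited ordering, that component has its largest non-simplicial vertex at least as large as the largest cycle vertex, hence at least~$4$, so the pigeonhole on $F,F^i,F^j,F^k$ applies inside it directly --- no peeling induction, no case split on $|D|$, and no recursion. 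The one reduction the paper then uses, namely that failure on a component forces failure on~$G$, is of the same flavour as your $\cE(D)\times\cE(W)$ claim but is the \emph{only} such step needed; your route requires that same component-restriction argument in the hard case \emph{and} the peeling lemma \emph{and} the strong induction on top of it. I would discard the induction entirely, localise to the cycle's component from the outset, and if you want a fully self-contained argument, prove the single statement that an exhaustive greedy run on a disconnected graph restricts (on the subsequence of steps rotating vertices of a fixed component) to an exhaustive greedy run on that component --- that is the honest content of the reduction, and it replaces everything from your second paragraph onward.
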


\begin{proof}
Consider any ordering $1,\ldots,n$ of the vertices of~$G$.
As $G$ is not chordal, it has a minimal cycle of length at least four.
Let $\ell$ be the largest vertex in this cycle, which satisfies $\ell\geq 4$.
As the cycle is minimal, the two neighbors of~$\ell$ on the cycle are not adjacent, i.e., $\ell$ is not simplicial.
By Lemma~\ref{lem:non-peo-4}, Algorithm~R fails to exhaustively generate all elimination trees for the component containing~$\ell$.
\end{proof}

\section{Open problems}
\label{sec:open}

We conclude this paper with the following remarks and open problems.

\begin{itemize}[leftmargin=5mm, noitemsep, topsep=3pt plus 3pt]
\item
While we precisely characterized the class of graphs on which our method applies, one may wonder whether it could be applied to more general families of polytopes.
\emph{Nestohedra}, for instance, generalize graph associahedra, and can be defined as Minkowski sums of standard simplices corresponding to families of subsets of $\{1,2,\ldots,n\}$ known as building sets~\cite{MR2487491}.
The main property of building sets is that the union of two intersecting subsets must also be in the building set, which clearly holds for connected subsets of vertices of a graph.
In this special case, the building set is said to be graphical, and we recover the definition of graph associahedra given in Theorem~\ref{thm:Msum}.
Postnikov, Reiner, and Williams~\cite{MR2520477} define chordal nestohedra as a generalization of chordal graph associahedra, via the definition of chordal building sets.
We omit details here, but our generation algorithm should apply directly to chordal nestohedra, further extending its scope of applicability.
Considering arbitrary hypergraphs instead of building sets yields the class of \emph{hypergraphic polytopes}~\cite{aguiar_ardila_2023,MR3960512}, not all of which are Hamiltonian.
The question of the applicability of our generation algorithm in this wider setting is worth considering.

\item
Can we generalize the methods from this paper to efficiently compute Hamilton paths and cycles in graph associahedra~$\cA(G)$ for non-chordal graphs~$G$?
As a first step, one might try to tackle the case of $G$ being a cycle, i.e., cyclohedra.
As we saw for the 4-cycle in Figure~\ref{fig:C4}, and more generally for non-chordal graphs in the proof of Lemma~\ref{lem:non-peo-4}, the simple greedy rule of Algorithm~R will stop prematurely because of ambiguity.
To overcome this, a more global algorithmic control seems to be necessary, possibly using ingredients from Manneville and Pilaud's Hamiltonicity proof~\cite{MR3383157}.
However, the cycles resulting from their inductive gluing argument have a completely different structure from the ones produced by our Algorithm~R for chordal graphs, namely they are made of $n$ blocks and one of the $n$ vertices of the graph is root in all elimination trees of one block.
In contrast to that, the largest vertex~$n$ in our cycles constantly moves up and down the elimination tree, and is root for only two consecutive steps each.

\item
Another worthwile goal is to efficiently compute Hamilton cycles in graph associahedra~$\cA(G)$ for any chordal graph~$G$, specifically for graphs that do not satisfy the conditions of Theorems~\ref{thm:2conn} or~\ref{thm:peo}, in particular for trees~$G$.
Hurtado and Noy~\cite{MR1723053} provide a simple construction when $G$ is a path (i.e., the standard associahedron), which can probably be turned into an efficient algorithm and possibly be generalized.

\item
The function~$e(G)$ that counts the number of elimination forests for a graph~$G$, referred to as the $G$-Catalan number by Postnikov~\cite{MR2487491}, deserves further study; recall~\eqref{eq:eG}.
For example, what is the complexity of computing~$e(G)$?
This question is directly related to unranking and ranking in the orderings $R(\cE(G))$ computed by our algorithm, or to use our generation tree approach for random sampling.

\item
By the results of Barnard and McConville~\cite{MR4176852} and Lemma~\ref{lem:filled}, the equivalence relation~$\equiv_G$ is a lattice congruence of the weak order if and only if $G$ is a unit interval graph.
Their paper also raises the question: Under what conditions on~$G$ is~$\equiv_G$ a lattice?
\end{itemize}

\section{Acknowledgements}

We thank Vincent Pilaud for pointing us to the proof of Lemma~\ref{lem:even}, and we thank Petr Gregor for helpful comments on an earlier version of this paper.
We also thank the reviewers of this paper for their remarks and suggestions.

\bibliographystyle{alpha}
\bibliography{refs}

\newcommand{\etalchar}[1]{$^{#1}$}
\begin{thebibliography}{HHMW22}

\bibitem[AA23]{aguiar_ardila_2023}
M.~Aguiar and F.~Ardila.
\newblock Hopf monoids and generalized permutahedra.
\newblock {\em Mem. Amer. Math. Soc.}, 289(1437), 2023.

\bibitem[AF96]{MR1380066}
D.~Avis and K.~Fukuda.
\newblock Reverse search for enumeration.
\newblock {\em Discrete Appl. Math.}, 65(1-3):21--46, 1996.
\newblock First International Colloquium on Graphs and Optimization (GOI), 1992
  (Grimentz).

\bibitem[AH94]{MR1430905}
B.~Aspvall and P.~Heggernes.
\newblock Finding minimum height elimination trees for interval graphs in
  polynomial time.
\newblock {\em BIT}, 34(4):484--509, 1994.

\bibitem[AVL62]{MR0156719}
G.~M. Adel'son-Vel'ski{\u\i} and E.~M. Landis.
\newblock An algorithm for organization of information.
\newblock {\em Dokl. Akad. Nauk SSSR}, 146:263--266, 1962.

\bibitem[BAFN99]{MR1699005}
Y.~Ben-Asher, E.~Farchi, and I.~Newman.
\newblock Optimal search in trees.
\newblock {\em SIAM J. Comput.}, 28(6):2090--2102, 1999.

\bibitem[BBM19]{MR3960512}
C.~Benedetti, N.~Bergeron, and J.~Machacek.
\newblock Hypergraphic polytopes: combinatorial properties and antipode.
\newblock {\em J. Comb.}, 10(3):515--544, 2019.

\bibitem[BCI{\etalchar{+}}20]{MR4141297}
P.~Bose, J.~Cardinal, J.~Iacono, G.~Koumoutsos, and S.~Langerman.
\newblock Competitive online search trees on trees.
\newblock In {\em Proceedings of the 2020 {ACM}-{SIAM} {S}ymposium on
  {D}iscrete {A}lgorithms}, pages 1878--1891. SIAM, Philadelphia, PA, 2020.

\bibitem[BDJ{\etalchar{+}}98]{MR1612885}
H.~L. Bodlaender, J.~S. Deogun, K.~Jansen, T.~Kloks, D.~Kratsch, H.~M{\"u}ller,
  and Zs. Tuza.
\newblock Rankings of graphs.
\newblock {\em SIAM J. Discrete Math.}, 11(1):168--181, 1998.

\bibitem[BDPP99]{MR1717162}
E.~Barcucci, A.~{Del Lungo}, E.~Pergola, and R.~Pinzani.
\newblock E{CO}: a methodology for the enumeration of combinatorial objects.
\newblock {\em J. Differ. Equations Appl.}, 5(4-5):435--490, 1999.

\bibitem[Ber22]{MR4473453}
B.~A. Berendsohn.
\newblock The diameter of caterpillar associahedra.
\newblock In {\em 18th {S}candinavian {S}ymposium and {W}orkshops on
  {A}lgorithm {T}heory}, volume 227 of {\em LIPIcs. Leibniz Int. Proc.
  Inform.}, pages Art. No. 14, 12~pp. Schloss Dagstuhl. Leibniz-Zent. Inform.,
  Wadern, 2022.

\bibitem[BGHK95]{MR1317666}
H.~L. Bodlaender, J.~R. Gilbert, H.~Hafsteinsson, and T.~Kloks.
\newblock Approximating treewidth, pathwidth, frontsize, and shortest
  elimination tree.
\newblock {\em J. Algorithms}, 18(2):238--255, 1995.

\bibitem[BK22]{MR4415112}
B.~A. Berendsohn and L.~Kozma.
\newblock Splay trees on trees.
\newblock In {\em Proceedings of the 2022 {A}nnual {ACM}-{SIAM} {S}ymposium on
  {D}iscrete {A}lgorithms ({SODA})}, pages 1875--1900. [Society for Industrial
  and Applied Mathematics (SIAM)], Philadelphia, PA, 2022.

\bibitem[BM21]{MR4176852}
E.~Barnard and T.~McConville.
\newblock Lattices from graph associahedra and subalgebras of the
  {M}alvenuto-{R}eutenauer algebra.
\newblock {\em Algebra Universalis}, 82(1):Paper No. 2, 53~pp., 2021.

\bibitem[CD06]{MR2239078}
M.~P. Carr and S.~L. Devadoss.
\newblock Coxeter complexes and graph-associahedra.
\newblock {\em Topology Appl.}, 153(12):2155--2168, 2006.

\bibitem[CIRS03]{MR2022580}
L.~S. Chandran, L.~Ibarra, F.~Ruskey, and J.~Sawada.
\newblock Generating and characterizing the perfect elimination orderings of a
  chordal graph.
\newblock {\em Theoret. Comput. Sci.}, 307(2):303--317, 2003.

\bibitem[CLPL18]{MR3874284}
J.~Cardinal, S.~Langerman, and P.~P\'{e}rez-Lantero.
\newblock On the diameter of tree associahedra.
\newblock {\em Electron. J. Combin.}, 25(4):Paper No. 4.18, 13~pp., 2018.

\bibitem[CMM22]{DBLP:conf/soda/CardinalMM22}
J.~Cardinal, A.~I. Merino, and T.~M{\"{u}}tze.
\newblock Efficient generation of elimination trees and graph associahedra.
\newblock In J.~Naor and N.~Buchbinder, editors, {\em Proceedings of the 2022
  {ACM-SIAM} Symposium on Discrete Algorithms, {SODA} 2022, Virtual Conference
  / Alexandria, VA, USA, January 9 - 12, 2022}, pages 2128--2140. {SIAM}, 2022.

\bibitem[cos]{cos_elim}
The Combinatorial Object Server: Generate elimination trees.
  \url{http://www.combos.org/elim}.

\bibitem[CPVP22]{MR4505048}
J.~Cardinal, L.~Pournin, and M.~Valencia-Pabon.
\newblock Diameter estimates for graph associahedra.
\newblock {\em Ann. Comb.}, 26(4):873--902, 2022.

\bibitem[CRS{\etalchar{+}}00]{MR1788836}
K.~Cattell, F.~Ruskey, J.~Sawada, M.~Serra, and C.~R. Miers.
\newblock Fast algorithms to generate necklaces, unlabeled necklaces, and
  irreducible polynomials over {${\rm GF}(2)$}.
\newblock {\em J. Algorithms}, 37(2):267--282, 2000.

\bibitem[Dev09]{MR2479448}
S.~L. Devadoss.
\newblock A realization of graph associahedra.
\newblock {\em Discrete Math.}, 309(1):271--276, 2009.

\bibitem[Dir61]{MR130190}
G.~A. Dirac.
\newblock On rigid circuit graphs.
\newblock {\em Abh. Math. Sem. Univ. Hamburg}, 25:71--76, 1961.

\bibitem[DKKM94]{MR1288578}
J.~S. Deogun, T.~Kloks, D.~Kratsch, and H.~M{\"u}ller.
\newblock On vertex ranking for permutations and other graphs.
\newblock In {\em S{TACS} 94 ({C}aen, 1994)}, volume 775 of {\em Lecture Notes
  in Comput. Sci.}, pages 747--758. Springer, Berlin, 1994.

\bibitem[DKKM99]{MR1723686}
J.~S. Deogun, T.~Kloks, D.~Kratsch, and H.~M{\"u}ller.
\newblock On the vertex ranking problem for trapezoid, circular-arc and other
  graphs.
\newblock {\em Discrete Appl. Math.}, 98(1-2):39--63, 1999.

\bibitem[Ehr73]{MR0366085}
G.~Ehrlich.
\newblock Loopless algorithms for generating permutations, combinations, and
  other combinatorial configurations.
\newblock {\em J. Assoc. Comput. Mach.}, 20:500--513, 1973.

\bibitem[EL05]{MR2137477}
S.~Eisenstat and J.~W.~H. Liu.
\newblock The theory of elimination trees for sparse unsymmetric matrices.
\newblock {\em SIAM J. Matrix Anal. Appl.}, 26(3):686--705, 2005.

\bibitem[EL07]{MR2369300}
S.~C. Eisenstat and J.~W.~H. Liu.
\newblock Algorithmic aspects of elimination trees for sparse unsymmetric
  matrices.
\newblock {\em SIAM J. Matrix Anal. Appl.}, 29(4):1363--1381, 2007.

\bibitem[EZKS16]{MR3536593}
E.~Emamjomeh-Zadeh, D.~Kempe, and V.~Singhal.
\newblock Deterministic and probabilistic binary search in graphs.
\newblock In {\em S{TOC}'16---{P}roceedings of the 48th {A}nnual {ACM} {SIGACT}
  {S}ymposium on {T}heory of {C}omputing}, pages 519--532. ACM, New York, 2016.

\bibitem[FG65]{MR186421}
D.~R. Fulkerson and O.~A. Gross.
\newblock Incidence matrices and interval graphs.
\newblock {\em Pacific J. Math.}, 15:835--855, 1965.

\bibitem[FGP15]{MR3361785}
F.~V. Fomin, A.~C. Giannopoulou, and M.~Pilipczuk.
\newblock Computing tree-depth faster than {$2^n$}.
\newblock {\em Algorithmica}, 73(1):202--216, 2015.

\bibitem[FM94]{MR1349886}
K.~Fukuda and T.~Matsui.
\newblock Finding all the perfect matchings in bipartite graphs.
\newblock {\em Appl. Math. Lett.}, 7(1):15--18, 1994.

\bibitem[Gra53]{gray_1953}
F.~Gray.
\newblock Pulse code communication, 1953.
\newblock March 17, 1953 (filed Nov. 1947). U.S. Patent 2,632,058.

\bibitem[GS78]{MR539826}
L.~J. Guibas and R.~Sedgewick.
\newblock A dichromatic framework for balanced trees.
\newblock In {\em 19th {A}nnual {S}ymposium on {F}oundations of {C}omputer
  {S}cience ({A}nn {A}rbor, {M}ich., 1978)}, pages 8--21. IEEE, Long Beach,
  Calif., 1978.

\bibitem[HHMW22]{MR4391718}
E.~Hartung, H.~Hoang, T.~M\"{u}tze, and A.~Williams.
\newblock Combinatorial generation via permutation languages. {I}.
  {F}undamentals.
\newblock {\em Trans. Amer. Math. Soc.}, 375(4):2255--2291, 2022.

\bibitem[HL07]{MR2321739}
C.~Hohlweg and C.~E. M.~C. Lange.
\newblock Realizations of the associahedron and cyclohedron.
\newblock {\em Discrete Comput. Geom.}, 37(4):517--543, 2007.

\bibitem[HM21]{MR4344032}
H.~P. Hoang and T.~M\"{u}tze.
\newblock Combinatorial generation via permutation languages. {II}. {L}attice
  congruences.
\newblock {\em Israel J. Math.}, 244(1):359--417, 2021.

\bibitem[HN99]{MR1723053}
F.~Hurtado and M.~Noy.
\newblock Graph of triangulations of a convex polygon and tree of
  triangulations.
\newblock {\em Comput. Geom.}, 13(3):179--188, 1999.

\bibitem[HP73]{MR0357214}
F.~Harary and E.~M. Palmer.
\newblock {\em Graphical enumeration}.
\newblock Academic Press, New York-London, 1973.

\bibitem[IRV88a]{MR958825}
A.~V. Iyer, H.~D. Ratliff, and G.~Vijayan.
\newblock Optimal node ranking of trees.
\newblock {\em Inform. Process. Lett.}, 28(5):225--229, 1988.

\bibitem[IRV88b]{iyer_ratliff_vijayan_1988}
A.~V. Iyer, H.~D. Ratliff, and G.~Vijayan.
\newblock Parallel assembly of modular products, {T}echnical {R}eport 88-06.
\newblock Technical report, Production and Distribution Research Center,
  Georgia Institute of Technology, Atlanta, GA, 1988.

\bibitem[IRV91]{MR1090666}
A.~V. Iyer, H.~D. Ratliff, and G.~Vijayan.
\newblock On an edge ranking problem of trees and graphs.
\newblock {\em Discrete Appl. Math.}, 30(1):43--52, 1991.

\bibitem[Joh63]{MR0159764}
S.~Johnson.
\newblock Generation of permutations by adjacent transposition.
\newblock {\em Math. Comp.}, 17:282--285, 1963.

\bibitem[Kay76]{MR0443423}
R.~Kaye.
\newblock A {G}ray code for set partitions.
\newblock {\em Information Processing Lett.}, 5(6):171--173, 1976.

\bibitem[Knu11]{MR3444818}
D.~E. Knuth.
\newblock {\em The Art of Computer Programming. {V}ol. 4{A}. {C}ombinatorial
  algorithms. {P}art 1}.
\newblock Addison-Wesley, Upper Saddle River, NJ, 2011.

\bibitem[KR18]{MR3775805}
K.~Kawarabayashi and B.~Rossman.
\newblock A polynomial excluded-minor approximation of treedepth.
\newblock In {\em Proceedings of the {T}wenty-{N}inth {A}nnual {ACM}-{SIAM}
  {S}ymposium on {D}iscrete {A}lgorithms}, pages 234--246. SIAM, Philadelphia,
  PA, 2018.

\bibitem[Lei80]{DBLP:conf/focs/Leiserson80}
C.~E. Leiserson.
\newblock Area-efficient graph layouts (for {VLSI)}.
\newblock In {\em 21st Annual Symposium on Foundations of Computer Science,
  Syracuse, New York, USA, 13-15 October 1980}, pages 270--281. {IEEE} Computer
  Society, 1980.

\bibitem[Liu88]{MR937485}
J.~W.~H. Liu.
\newblock Equivalent sparse matrix reordering by elimination tree rotations.
\newblock {\em SIAM J. Sci. Statist. Comput.}, 9(3):424--444, 1988.

\bibitem[Liu89]{MR1062491}
J.~W.~H. Liu.
\newblock A graph partitioning algorithm by node separators.
\newblock {\em ACM Trans. Math. Software}, 15(3):198--219, 1989.

\bibitem[Liu90]{MR1032223}
J.~W.~H. Liu.
\newblock The role of elimination trees in sparse factorization.
\newblock {\em SIAM J. Matrix Anal. Appl.}, 11(1):134--172, 1990.

\bibitem[Lod04]{MR2108555}
J.-L. Loday.
\newblock Realization of the {S}tasheff polytope.
\newblock {\em Arch. Math. (Basel)}, 83(3):267--278, 2004.

\bibitem[LRR93]{MR1239499}
J.~M. Lucas, D.~{Roelants van Baronaigien}, and F.~Ruskey.
\newblock On rotations and the generation of binary trees.
\newblock {\em J. Algorithms}, 15(3):343--366, 1993.

\bibitem[LS09]{MR2483809}
Y.~Li and J.~Sawada.
\newblock Gray codes for reflectable languages.
\newblock {\em Inform. Process. Lett.}, 109(5):296--300, 2009.

\bibitem[Mar99]{MR1665469}
M.~Markl.
\newblock Simplex, associahedron, and cyclohedron.
\newblock In {\em Higher homotopy structures in topology and mathematical
  physics ({P}oughkeepsie, {NY}, 1996)}, volume 227 of {\em Contemp. Math.},
  pages 235--265. Amer. Math. Soc., Providence, RI, 1999.

\bibitem[MM23]{MR4598046}
A.~Merino and T.~M\"{u}tze.
\newblock Combinatorial generation via permutation languages. {III}.
  {R}ectangulations.
\newblock {\em Discrete Comput. Geom.}, 70(1):51--122, 2023.

\bibitem[MOW08]{MR2487681}
S.~Mozes, K.~Onak, and O.~Weimann.
\newblock Finding an optimal tree searching strategy in linear time.
\newblock In {\em Proceedings of the {N}ineteenth {A}nnual {ACM}-{SIAM}
  {S}ymposium on {D}iscrete {A}lgorithms}, pages 1096--1105. ACM, New York,
  2008.

\bibitem[MP15]{MR3383157}
T.~Manneville and V.~Pilaud.
\newblock Graph properties of graph associahedra.
\newblock {\em S{\'e}m. Lothar. Combin.}, 73:Art. B73d, 31~pp., 2015.

\bibitem[NO12]{MR2920058}
J.~Ne{\v s}et{\v r}il and P.~{Ossona de Mendez}.
\newblock {\em Sparsity}, volume~28 of {\em Algorithms and Combinatorics}.
\newblock Springer, Heidelberg, 2012.
\newblock Graphs, structures, and algorithms.

\bibitem[NW89]{nevins_whitney_1989}
J.~L. Nevins and D.~E. Whitney, editors.
\newblock {\em Concurrent Design of Products and Processes}.
\newblock McGraw-Hill, 1989.

\bibitem[OP06]{DBLP:conf/focs/OnakP06}
K.~Onak and P.~Parys.
\newblock Generalization of binary search: Searching in trees and forest-like
  partial orders.
\newblock In {\em 47th Annual {IEEE} Symposium on Foundations of Computer
  Science {(FOCS} 2006), 21-24 October 2006, Berkeley, California, USA,
  Proceedings}, pages 379--388. {IEEE} Computer Society, 2006.

\bibitem[Pos09]{MR2487491}
A.~Postnikov.
\newblock Permutohedra, associahedra, and beyond.
\newblock {\em Int. Math. Res. Not. IMRN}, (6):1026--1106, 2009.

\bibitem[Pou14]{MR3197650}
L.~Pournin.
\newblock The diameter of associahedra.
\newblock {\em Adv. Math.}, 259:13--42, 2014.

\bibitem[Pou17]{MR3649601}
L.~Pournin.
\newblock The asymptotic diameter of cyclohedra.
\newblock {\em Israel J. Math.}, 219(2):609--635, 2017.

\bibitem[PR93]{MR1267190}
G.~Pruesse and F.~Ruskey.
\newblock Gray codes from antimatroids.
\newblock {\em Order}, 10(3):239--252, 1993.

\bibitem[PR94]{MR1267216}
G.~Pruesse and F.~Ruskey.
\newblock Generating linear extensions fast.
\newblock {\em SIAM J. Comput.}, 23(2):373--386, 1994.

\bibitem[PRW08]{MR2520477}
A.~Postnikov, V.~Reiner, and L.~Williams.
\newblock Faces of generalized permutohedra.
\newblock {\em Doc. Math.}, 13:207--273, 2008.

\bibitem[PS19]{MR3964495}
V.~Pilaud and F.~Santos.
\newblock Quotientopes.
\newblock {\em Bull. Lond. Math. Soc.}, 51(3):406--420, 2019.

\bibitem[RRSS14]{MR3238682}
F.~Reidl, P.~Rossmanith, F.~{S{\'a}nchez Villaamil}, and S.~Sikdar.
\newblock A faster parameterized algorithm for treedepth.
\newblock In {\em Automata, languages, and programming. {P}art {I}}, volume
  8572 of {\em Lecture Notes in Comput. Sci.}, pages 931--942. Springer,
  Heidelberg, 2014.

\bibitem[RSW12]{MR2844089}
F.~Ruskey, J.~Sawada, and A.~Williams.
\newblock Binary bubble languages and cool-lex order.
\newblock {\em J. Combin. Theory Ser. A}, 119(1):155--169, 2012.

\bibitem[RTL76]{MR408312}
D.~J. Rose, R.~E. Tarjan, and G.~S. Lueker.
\newblock Algorithmic aspects of vertex elimination on graphs.
\newblock {\em SIAM J. Comput.}, 5(2):266--283, 1976.

\bibitem[Rus16]{MR3523863}
F.~Ruskey.
\newblock Combinatorial {G}ray code.
\newblock In M.-Y. Kao, editor, {\em Encyclopedia of Algorithms}, pages
  342--347. Springer, 2016.

\bibitem[Sav97]{MR1491049}
C.~Savage.
\newblock A survey of combinatorial {G}ray codes.
\newblock {\em SIAM Rev.}, 39(4):605--629, 1997.

\bibitem[Sch89]{MR1031599}
A.~A. Sch{\"a}ffer.
\newblock Optimal node ranking of trees in linear time.
\newblock {\em Inform. Process. Lett.}, 33(2):91--96, 1989.

\bibitem[Sch93]{scheffler_1993}
P.~Scheffler.
\newblock Node ranking and searching on graphs.
\newblock In {\em Proceedings of the Third Twente Workshop on Graphs and
  Combinatorial Optimization}, pages 159--162, 1993.

\bibitem[SDG92]{MR1187395}
A.~Sen, H.~Deng, and S.~Guha.
\newblock On a graph partition problem with application to {VLSI} layout.
\newblock {\em Inform. Process. Lett.}, 43(2):87--94, 1992.

\bibitem[Sim03]{MR1979780}
R.~Simion.
\newblock A type-{B} associahedron.
\newblock volume~30, pages 2--25. 2003.
\newblock Formal power series and algebraic combinatorics (Scottsdale, AZ,
  2001).

\bibitem[Smi97]{smith_1997}
M.~J. Smith.
\newblock Generating spanning trees.
\newblock Master's thesis, University of Victoria, 1997.

\bibitem[ST85]{MR796206}
D.~D. Sleator and R.~E. Tarjan.
\newblock Self-adjusting binary search trees.
\newblock {\em J. Assoc. Comput. Mach.}, 32(3):652--686, 1985.

\bibitem[Sta15]{MR3467982}
R.~P. Stanley.
\newblock {\em Catalan numbers}.
\newblock Cambridge University Press, New York, 2015.

\bibitem[Ste64]{MR0157881}
H.~Steinhaus.
\newblock {\em One hundred problems in elementary mathematics}.
\newblock Basic Books, Inc., Publishers, New York, 1964.
\newblock With a foreword by Martin Gardner.

\bibitem[Ste86]{MR875784}
G.~Steiner.
\newblock An algorithm to generate the ideals of a partial order.
\newblock {\em Oper. Res. Lett.}, 5(6):317--320, 1986.

\bibitem[STT88]{MR928904}
D.~D. Sleator, R.~E. Tarjan, and W.~P. Thurston.
\newblock Rotation distance, triangulations, and hyperbolic geometry.
\newblock {\em J. Amer. Math. Soc.}, 1(3):647--681, 1988.

\bibitem[TIAS77]{MR476582}
S.~Tsukiyama, M.~Ide, H.~Ariyoshi, and I.~Shirakawa.
\newblock A new algorithm for generating all the maximal independent sets.
\newblock {\em SIAM J. Comput.}, 6(3):505--517, 1977.

\bibitem[Tro62]{DBLP:journals/cacm/Trotter62}
H.~F. Trotter.
\newblock Algorithm 115: Perm.
\newblock {\em Commun. {ACM}}, 5(8):434--435, 1962.

\bibitem[VR81]{DBLP:journals/cj/VarolR81}
Y.~L. Varol and D.~Rotem.
\newblock An algorithm to generate all topological sorting arrangements.
\newblock {\em Comput. J.}, 24(1):83--84, 1981.

\bibitem[Wil13]{DBLP:conf/wads/Williams13}
A.~Williams.
\newblock The greedy {G}ray code algorithm.
\newblock In {\em Algorithms and Data Structures - 13th International
  Symposium, {WADS} 2013, London, ON, Canada, August 12-14, 2013. Proceedings},
  pages 525--536, 2013.

\bibitem[Wor85]{MR951781}
N.~C. Wormald.
\newblock Counting labelled chordal graphs.
\newblock {\em Graphs Combin.}, 1(2):193--200, 1985.

\end{thebibliography}

\end{document}